\theoremstyle{definition} 
    \newtheorem{definition}{Definition}
\theoremstyle{plain} 
    \newtheorem{theorem}[definition]{Theorem}
    \newtheorem{proposition}[definition]{Proposition}
    \newtheorem{lemma}[definition]{Lemma}
    \newtheorem{corollary}[definition]{Corollary}
\theoremstyle{remark} 
    \newtheorem{remark}[definition]{Remark}
\newcommand{\Ker}[1]{\mathrm{Ker}{(#1)}}
\newcommand{\Fo}{{F_{\omega_0}}}
\newcommand{\Do}{{d_{\omega_0}}}
\DeclareMathOperator{\Ima}{Im}
\title{Gravitational Constraints on a Lightlike boundary}
\author{G. Canepa}
\address{Institut f\"ur Mathematik, Universit\"at Z\"urich, Winterthurerstrasse 190, 8057 Z\"urich, Switzerland}
\email{giovanni.canepa@math.uzh.ch}
\author{A. S. Cattaneo}
\address{Institut f\"ur Mathematik, Universit\"at Z\"urich, Winterthurerstrasse 190, 8057 Z\"urich, Switzerland}
\email{cattaneo@math.uzh.ch}
\author{M. Tecchiolli}
\address{}
\email{manuel.tecchiolli@gmail.com}
\thanks{This research was (partly) supported by the NCCR SwissMAP, funded by the Swiss National Science Foundation. G.C. and A.S.C. acknowledge partial support of SNF Grant No. 200020- 192080/1.}
\begin{document}

\begin{abstract}
We analyse the boundary  structure of General Relativity in the coframe formalism in the case of a lightlike boundary, i.e.,  when the restriction of the induced Lorentzian metric to the boundary is degenerate.
We describe the associated reduced phase space in terms of constraints on the symplectic space of boundary fields. We explicitly compute the Poisson brackets of the constraints and identify  the first- and second-class ones.  In particular, in the 3+1 dimensional case, we show that the reduced phase space has two local degrees of freedom, instead of the usual four in the non-degenerate case.

\end{abstract}

\maketitle
\tableofcontents

\section{Introduction}
The field-theoretical formulation of General Relativity (GR) is the assignment to a manifold M of an action
functional depending on a Lorentzian metric, whose Euler--Lagrange equations are Einsteins equations. If we now consider a manifold $M$ (of dimension $N$) with boundary $\partial M = \Sigma$ a natural question that can be raised is the structure of the induced data of field equations on the boundary $\Sigma$.
 This structure can be described through the \emph{reduced phase space} of the theory which encodes the data of the space of boundary fields and of the constraints of the theory. 
 
{ In this paper we study the reduced phase space of General Relativity (GR) in the coframe formulation in the case where the boundary has a light-like induced metric. The corresponding geometric structures for the space-like and time-like cases have already been studied by two of the authors in \cite{CCS2020}, based on the results outlined in \cite{CS2019}. The differences between the cases are given by the signature of the restriction of the metric to the boundary.
Indeed it turns out that there are major differences between the cases when the metric is \emph{space-like} or \emph{time-like} ---respectively with signature as a symmetric bilinear form $(N-1,0,0)$ or $(N-2,1,0)$ where the first index denotes positive eigenvalues, the second negative ones and the third zero ones--- and when the metric is \emph{light-like} ---with signature $(N-2,0,1)$ where the last entry refers to the transversal direction. Note that, since the metric in the bulk is Lorentzian, the metric on the boundary can only be non-degenerate or have a unique direction along which it is degenerate.

In this paper, following the same scheme of \cite{CS2019,CCS2020},} the boundary structure is recovered through a method that was firstly described by Kijowski and
Tulczijew (KT) in \cite{KT1979} opposed to the one proposed by Dirac \cite{Dirac1958}. This latter approach to the problem at hand has been developed in \cite{AlexandrovSpeziale15}. This article stems from the observations in \cite{CS2019, CCS2020} and describes the geometric structure of the boundary fields by adapting the result to the case of a degenerate boundary metric.    In $3+1$ dimensions, this results in a reduced phase space with two local degrees of freedom (in good agreement with the literature \cite{AlexandrovSpeziale15}) instead of four in the non-degenerate case\footnote{By number of local degrees of freedom we mean the rank { of the phase space} as a $C^\infty$-module (ignoring global degrees). In the space-like or time-like cases, one also usually speaks of the number of local physical degrees of freedom meaning by this half the rank of the reduced phase space (i.e., the rank of the configuration space).}.

The advantages of the KT alternative, in which the reduced phase space is described as a reduction { (i.e. as a quotient space)} of the space of free boundary fields, reside principally in the simplification of the procedure that leads to the definition of the constraints starting from the restriction of the Euler-Lagrange equations in the bulk.  Furthermore this construction avoids the introduction of the artificial classifications of the constraints as primary, secondary, etc. Another important virtue of this approach is its compatibility with the BV-BFV construction (\cite{CMR2012}), whose quantization procedure (\cite{CMR2}) can then be more easily  applied to the theory. The BV-BFV formalism provides a procedure to construct the reduced phase space too; however, it is not applicable in this case for $N\geq 4$ (\cite{CS2019})  since some of the regularity assumptions fail to be satisfied. It is worth noting that the present paper treats only this case, since the case $N=3$ has already been successfully analysed in \cite{CaSc2019} and does not display the issues of the higher dimensional case.

As mentioned above, in this paper we consider the coframe formulation. More precisely we use the Palatini-Cartan (PC) formalism (from \cite{Palatini1919, Cartan}) since its formulation through differential forms and connection is very convenient for the boundary (and corner \cite{C2021}) analysis. The choice of the formalism is not immaterial  due to the fact that classically equivalent theories on the bulk can behave differently in the presence of a boundary \cite[Section 4.3]{CS2019}. This is the case of gravity, where the space of solutions of the Euler-Lagrange equations (modulo symmetries) of the PC and the Einstein-Hilbert formulations are isomorphic, but their Hamiltonian formulations present striking, although classically irrelevant, differences, in particular in the structure of their BV-BFV formalism (\cite{CS2016b, CS2017}). 
{ The Ashtekar formalism} provides yet another alternative way through which this problem has been studied in the literature \cite{Ashtekar1986, ILV2006}; however, we will not explore this direction. Furthermore the same problem can be analyzed in greater generality such as for example the one proposed in \cite{FP1990} (where no compatibility with either the coframe or the internal metric is required) and the parent formulation proposed in \cite{BGparent1}, but we postpone the comparison with them to future works.

One of the gratest challenges of the constraint analysis of the PC theory comes from the structure of the symplectic form of the \emph{true} space of boundary { fields. It is a quotient space of the restriction of the bulk fields  to the boundary under an equivalence relation depending on the coframe. Since the use of equivalence classes is  usually quite annoying to handle, it is useful to fix a representative and describe the reduced phase space with it.  This has been done for a space-like or time-like boundaries in \cite{CCS2020} through the introduction of a suitable \emph{structural constraint}. However, such constraint has to be adapted in the light-like case, since it fixes the representative only provided that the induced metric on the boundary is degenerate.  In this paper we extend the solution proposed for the space- and time-like cases to a light-like boundary by considering a suitable adaptation. In particular, the key point is to modify the \emph{structural constraint}. The solution that we find is slightly more involved and gives rise to second class constraints, as opposed to the non-degenerate case where all constraints are first class.  The analysis} is carried out in full generality for every dimension $N\geq 4$. 

Furthermore we propose a linearized version of the theory, in Appendix \ref{sec:linearized_theory}, where we work around a reference solution of the Euler-Lagrange equation. In this case it can be shown that there is a natural isomorphism between the quotient space of the space of fields and another space where no equivalence classes are taken into account. This leads to a large simplification of the computations still retaining some of the key features of the real boundary theory, thus being also a nice toy model for the general case. In order to keep the results as simple and clean as possible this part has only been developed for $N=4$, but it can be extended without problems to higher dimensions.

The importance of this problem is witnessed by the number of previous works considering the structure of GR on null foliations, the first of which date back to Penrose and Sachs \cite{Sachs1962,Penrose1980}.
In particular the description of the Hamiltonian formulation of GR in the case of a null hypersurface has been studied for example in \cite{Torre1985,PS2017}  and in \cite{Reisenberger2013,Reisenberger2018} in the Einstein--Hilbert formalism . This formulation would allow the construction of exact (but not unique) solutions  starting from initial data on null hypersurfaces such as for example null horizons of black holes. Furthermore a Hamiltonian formulation of the theory is widely considered to be one of the best starting points for the quantization of the theory.

\noindent
\textbf{Acknowledgements.}
Part of this paper is the result of the master thesis of Manuel Tecchiolli at ETH-Zurich. We thank Michele Schiavina and Simone Speziale for all the interesting discussions and invaluable suggestions.
{ We also thank the anonymous referees for the comments and suggestions to improve the paper.}

\subsection{Structure of the paper} 
The last sections of this Introduction are devoted to recollecting the background material and reviewing the results of the paper. 

In Section \ref{sec:technical} we state most of the technical results needed throughout the paper. The proofs are collected in Appendix \ref{sec:appendix_long_proofs} for completeness, but can be skipped by the hasty reader.

The past results and the formal introduction to the problem motivating this work are collected in Section \ref{sec:boundary_known_results}. In particular we recall the main results of the non-degenerate case as stated in \cite{CCS2020}.

Finally, in Section \ref{sec:general_case} we consider the general case and illustrate in full detail the boundary structure of the degenerate case. The main results are collected in Theorem \ref{thm:Brackets_constraints}.

In Appendix \ref{sec:linearized_theory} we develop the corresponding linearized theory which is a simpler toy model of the general case. The structure of the \emph{linearized constraints} is in Theorem \ref{thm:Poissonbrackets_lin_d}.

\subsection{Palatini--Cartan formalism}

In this section we present the Palatini--Cartan formalism (see for example \cite{T2019, thi2007} and references therein for a review of the classical structure) and state the relevant (for our construction) results of \cite{CS2019}. For a more detailed description, we refer to \cite[Section 2]{CCS2020}.

We consider an $N$-dimensional oriented smooth manifold $M$ together with a Lorentzian structure
so that we can reduce the frame bundle to an $SO(N-1,1)$-principal bundle $P \rightarrow M$. 
We denote by $\mathcal{V}$ the associated vector bundle by the standard representation. Each fibre of $\mathcal{V}$ is isomorphic to an $N$-dimensional vector space $V$ with a Lorentzian inner product $\eta$ on it. The inner product allows the identification $\mathfrak{so}(N-1,1) \cong \bigwedge^2 {V}$. Furthermore we use the shortened notation
\begin{align}\label{e:NotationOmega}
    \Omega^{i,j}:= \Omega^i\left(M, \textstyle{\bigwedge^j} \mathcal{V}\right)
\end{align}
{ to indicate the spaces of $i$-forms on $M$ with values in the $j$th wedge product of $\mathcal{V}$.\footnote{In a language more common in the physics literature, using index notation, we can say that we can equip an element in $\Omega^{i,j}$ with $i$ contravariant indices (antisymmetrized in the cotangent space of $M$ and $j$ antisymmetrized indices in $\mathcal{V}$.}
Moreover we define the wedge product on these spaces as a map
\begin{align*}
    \wedge : \Omega^{i,j} \times &  \Omega^{k,l}  \rightarrow  \Omega^{i+k,j+l} & & \text{for} \; i+k \leq N, \; j+l \leq N \\
    (\alpha, & \beta)  \mapsto  \alpha \wedge \beta & & 
\end{align*}
by taking the wedge product on both the external ($T^*M$) and internal ($\mathcal{V}$) parts\footnote{Using index notation this map corresponds to taking antisymmetrization in both set of indices. Note also that the combinatorial factor arising in such operation is absorbed in the definition of wedge product and will not appear in formulas without indices.}. When no confusion can arise, we will omit the wedge symbol and consider it as understood (i.e., any expression of the form $\alpha\beta$ should be interpreted as $\alpha\wedge \beta$).
}

The dynamical fields of the theory are a $P$-connection $\omega$ and a coframe $e$ { (a.k.a  $N$-bein)}, i.e., an orientation preserving bundle isomorphism covering the identity 
\begin{align*}
    e \colon TM \stackrel{\sim}{\longrightarrow}\mathcal{V} .
\end{align*}
From the coframe it is possible to recover a metric as
\begin{equation}\label{e:g}
g_{\mu\nu}=\eta(e_\mu,e_\nu).
\end{equation}
The space of the $P$-connections, denoted with $\mathcal{A}(M)$, can be identified, via choosing a reference connection $\omega_0$, to $\Omega^{1,2}$ thanks to  $\mathfrak{so}(N-1,1) \cong \bigwedge^2 V$. We denote by $d_\omega$ and by $F_\omega \in \Omega^{2,2}$ respectively the covariant derivative $\Omega^{\bullet,\bullet}\to\Omega^{\bullet+1,\bullet}$ associated to a connection $\omega$ and its curvature.

The action functional of the theory is \footnote{ Note that the quantities appearing in this integral are elements of $\Omega^{N,N}$ which can be canonically identified with the space of densities on $M$, hence this integral is well defined. This same observation holds for every integral appearing in the paper. See \cite{CCS2020} for a detailed explanation.
}
\begin{align}\label{e:PCaction}
    S = \int_M\left[\frac1{(N-2)!}e^{N-2}F_\omega - \frac1{N!}\Lambda e^N\right]
\end{align}
where the notation $e^{k}$ denotes the $k$th wedge power of $e$ and $\Lambda$ is a constant (the cosmological constant). From the action we can deduce the Euler--Lagrange (EL) equations of the theory by taking its variations. The EL equation corresponding to the variation of $\omega$ is $d_\omega(e^{N-2})=0$, and using the Leibniz rule this equation can be rewritten as $e^{N-3}d_\omega e= 0$, which in turn, as we will see with  Lemma \ref{lem:We_bulk}, is equivalent to
\begin{equation}\label{e:tf}
d_\omega e = 0.
\end{equation}
The Euler--Lagrange equation corresponding to the variation of $e$ is 
\begin{equation}\label{e:ee}
\frac1{(N-3)!}e^{N-3}F_\omega-\frac1{(N-1)!}\Lambda e^{N-1}=0.
\end{equation}

Equation \eqref{e:tf} is the torsion-free condition and identifies the connection $\omega$ with the Levi-Civita connection of the metric \eqref{e:g}. With this
substitution, \eqref{e:ee} corresponds then to the Einstein equations.

\subsection{Overview}

We present here the problem and the results of the paper at a qualitative level (and for $N=4$) and refer to the subsequent sections for a more precise treatment. 

{ The main contribution of this article, as mentioned in the Introduction, is the description of the reduced phase space of General Relativity in the PC formalism on light-like boundaries as the critical locus of functions (or constraints) defined on a symplectic space of boundary fields induced from the bulk structure.}

The starting point of { this description} is the boundary symplectic structure induced by the bulk action following the construction described by \cite{KT1979}. { This construction starts from the variation of the classical action and extracts a one-form on the space of the restrictions\footnote{For differential forms we might as well speak of pullback with respect to the inclusion of the boundary in the bulk.} of the fields to the boundary. Subsequently it is possible to get a closed two-form by taking the de~Rham differential (on the space of fields) of the original one-form. If this two-form is degenerate, it is then possible to construct a symplectic form\footnote{i.e. a closed, non-degenerate two-form.} by taking a quotient (under the assumption that the quotient space is smooth).} The upshot of the construction {in the Palatini--Cartan case,} described first in \cite{CS2019} { and recalled in detail at the beginning of Section \ref{sec:boundary_known_results}}, is that the symplectic space of the boundary theory is a quotient space  {$F_{PC}^{\partial}= \widetilde{F}_{PC}/_\sim$} where the elements of {$\widetilde{F}_{PC}$} are the restrictions of the coframe $e$ and the connection $\omega$ to the boundary\footnote{\label{foot:pullback-restriction}We will use the same symbols for the fields on the bulk and the corresponding pullbacks (or restrictions) to the boundary.} and the equivalence relation is given by $\omega \sim \omega + v$, with $v$ satisfying $e \wedge v=0$. The resulting symplectic form is 
\begin{align*}
{\varpi_{PC}^{\partial} }= \int_{\Sigma} e \delta e \delta [\omega].
\end{align*} 
Now, in order to pass from the symplectic space of boundary fields, or \emph{geometric phase space}, to the \emph{reduced phase space}, we must identify the correct constraints of the theory.
The natural candidates for the constraints on the boundary are the restrictions of the Euler--Lagrange equations that contain no derivatives transversal to the boundary
\begin{align*}
    d_\omega e = 0 \quad \text{and} \quad e F_\omega-\frac1{6}\Lambda e^{3}=0.
\end{align*}
However, these functions are not invariant under the change of representative in the aforementioned quotient space. {Indeed, let us consider the first equation and consider two different $\omega \sim \omega'$, i.e., $\omega= \omega'+v$ with $ev=0$. The equation $d_\omega e = 0$ does not necessarily imply $d_{\omega'} e = 0$ since we get an additional term: $d_\omega e = d_{\omega'} e + [v,e]$ and  in general $[v,e]\neq 0$ for $v \in \Omega_{\partial}^{1,2}$ such that $ev=0$.}

In \cite{CCS2020} a convenient solution was found in the case of non-degenerate boundary metric whereas in \cite{CS2019} a general solution is outlined. The object of this paper is to find an analogous solution in the degenerate case, and therefore to generalize the result of \cite{CS2019,CCS2020} to all possible boundary metrics.

The construction of the non-degenerate case is described in detail in Section \ref{sec:boundary_known_results} and consists on imposing an equation fixing a convenient representative of the equivalence class $[\omega]$:
\begin{equation}\label{e:structural_constraint_overview}
    e_n d_{\omega} e \in \Ima ( e \wedge \cdot).
\end{equation}
{ Here $e_n \in \Omega_{\partial}^{0,1}$ is a  field linearly independent from the tangent components of $e$ restricted to the boundary\footnote{
More precisely, note that, by the nondegeneracy condition on $e$, at each $u\in\Sigma$ we have that $e(T_u\Sigma)$ is an oriented, three-dimensional subspace of $\mathcal{V}_u$. The field $e_n$ is chosen so that $e_n(u)$ is transversal to $e(T_u\Sigma)$ and compatible with the orientation. Equivalently, if we pick local coordinates $(x^1,x^2,x^3)$ around $u$ and expand $e=e_1 dx^1+e_2 dx^2+e_3 dx^3$, then we require that $(e_1(u),e_2(u),e_3(u),e_n(u))$ be a frame for $\mathcal{V}_u$ at each $u\in\Sigma$.}. The rationale behind this condition is to partially reobtain a condition on bulk fields that is not transferred to the boundary fields. Indeed, one of the EL equations ($ed_{\omega}e=0$, in the bulk equivalent to $d_{\omega}e=0$) can be written in a neighbourhood of the boundary as an evolution equation: 
$e_n d_{\omega}e + e \partial_n e + e[\omega_n,e]+ e d_{\omega}e_n=0$ where the index $n$ denotes a component transversal to the boundary. It is then easy to see that since the last terms are all in the image of $e \wedge \cdot$, also the first term must be in this space. We can then use this condition on the boundary to fix the representative of the class $[\omega]$ (see Section \ref{sec:technical} for the notation and Theorem \ref{thm:omegadecomposition} for the details).}
We call this condition the \emph{structural constraint}\footnote{Note that this additional condition on the boundary fields is not required for the description of the boundary structure. However, it is useful for fixing a representative of the equivalence class $[\omega]$.}.

{ Using the representative fixed by \eqref{e:structural_constraint_overview}, it is then possible to write a set of constraints generating the same critical locus of the original ones and which are invariant as follows:}

\begin{align*}
    L_c & = \int_{\Sigma} c e d_{\omega} e, \\
    P_{\xi} &= \int_{\Sigma}  \iota_{\xi} e e F_{\omega} + \iota_{\xi} (\omega-\omega_0) e d_{\omega} e, \\
    H_{\lambda} &= \int_{\Sigma} \lambda e_n \left( e F_{\omega} +\frac{1}{3!}\Lambda e^{3} \right)
\end{align*}
{ where $c$, $\xi$ and $\lambda$ are suitable Lagrange multipliers. A very important bit of information is given by the structure of their Poisson brackets which} is collected in Theorem \ref{thm:first-class-constraints} { and shows that these constraints are  first-class}.

This solution, and in particular the choice of the structural constraint, requires that the induced metric $g^\partial= e^{*}\eta$ be non-degenerate { and does not work in the degenerate case. The adaptation of such approach to the degenerate case is the object of this paper and in the following paragraphs we will give an overview on how to overcome the differences of this case.}

{
\begin{remark}\label{rmk:how_lightlike}
In this paper we address the problem assuming that in the boundary manifold there exists a light-like subset and we  assume to be working only in an open subset of the light-like one.  The general case of a boundary with points of different types (lightlike, spacelike and timelike) can be recovered as explained below in Remark \ref{rmk:general_case}.
\end{remark}
}

{ The main difference in the degenerate case is the impossibility of finding a representative of the equivalence class $[\omega]$ satisfying the structural constraint. The idea is to modify this equation by subtracting the problematic part and impose a weakened structural constraint as follows:}

\begin{align}\label{e:structural_modified_overview}
    e_n  d_{\omega} e - e_n  p_{\mathcal{T}}(d_{\omega} e) \in  \Ima ( e \wedge \cdot)
\end{align}
where $p_{\mathcal{T}}$ is the projection to an appropriately defined subspace (see \eqref{e:defTKS}; see also Section \ref{sec:technical} for the notation and Theorem \ref{thm:omegadecomposition_deg} for more details). This weakened structural constraint no longer fixes  the representative in the equivalence class uniquely {and hence it has to be supplemented with another set of equations, though of little importance for the construction}. Furthermore { this weakened constraint} does not guarantee the equivalence between the constraint { $L_c$ and $d_{\omega}e=0$. Indeed, an important feature that was a key point in the non-degenerate case was the fact that the equation $ed_{\omega}e=0$, after imposing the  structural constraint $e_n d_{\omega}e= \in \Ima ( e \wedge \cdot)$, defines the same zero locus as $ d_{\omega} e=0$.} As a consequence, in order to get the correct reduced phase space, { in the degenerate case} one has to add an additional constraint {accounting for the missing part in the weakened structural constraint: namely,}
\begin{align*}
R_{\tau} = \int_{\Sigma} \tau  d_{\omega} e
\end{align*}
with {$\tau$ belonging to an appropriate space $\mathcal{S}$}(see \eqref{e:defS} for the definition).  We will call this constraint the \emph{degeneracy constraint}\footnote{We thank M. Schiavina for the helpful discussion about the form of this constraint (and its name).}. This construction is made precise in the first part of Section \ref{sec:general_case} where we also analyse the structure of this new set of constraints (Theorem \ref{thm:Brackets_constraints} and Corollary \ref{cor:constraints-first-second_class}). 

{By computing the Poisson brackets of the constraints,} we show that all the constraints are first class { except the degeneracy constraint $R_{\tau}$ which is second class}. Finally we also compute the number of local physical degrees of freedom of the theory.  In dimension 3+1 we obtain that the reduced phase space has two local degrees of freedom.

{
\begin{remark}\label{rmk:general_case}
This construction can be extended to the general case of a boundary only part of which is allowed to be light-like. In this case the field $\tau \in \mathcal{S}$ defining the degeneracy constraint has support in the closure of the light-like points.
Furthermore, since the equations defining $\tau \in \mathcal{S}$ are algebraic, by continuity we also have that $\tau$ vanishes on the boundary (if present) of the closed light-like subset. 
\end{remark}
}

The linearized theory follows a similar pattern. It retains the most important properties of the general theory (e.g. the number of physical local degrees of freedom) and can be therefore thought of as an interesting toy model of the latter. The complete analysis of this case has been detailed in Appendix \ref{sec:linearized_theory}. Furthermore the linearized case is treated in the physical case $N=4$ only, hence providing a simple reference for the formulas and results in this case.

\begin{table}[ht]
\begin{tikzcd}[row sep=0.35cm] 
(\mathcal{F}, S )
\arrow[d,dashed, "\text{induce}"]\\
(\check{\mathcal{F}}, \check{\varpi}, \check{\mathsf{C}})
\arrow[d, "\text{Fix representative}"]\\
(\mathcal{F}^{\partial}, {\varpi}^{\partial}, \mathsf{C}^{\partial})
\arrow[d, "\text{quotient by constraints}"]\\
\text{Reduced Phase Space}
\end{tikzcd} 
\caption{Step by step construction of the Reduced Phase Space}
\label{t:steps}
\end{table}

{
We can recollect the steps in the Table \ref{t:steps}. The starting point is the bulk structure, given by the space of fields $\mathcal{F}$ and the action $S$. Then we induce a preboundary structure $(\check{\mathcal{F}}, \check{\varpi}, \check{\mathsf{C}})$ where $\check{\mathsf{C}}$ represents the restriction of the EL equations to the boundary. Subsequently we fix a representative in the equivalence class of $[\omega]$ and obtain the geometric phase space $(\mathcal{F}^{\partial}, {\varpi}^{\partial})$ where the constraints $\mathsf{C}^{\partial}$ are well defined. Finally the reduces phase space is obtained as the quotient of the geometric phase space by the constraints.
}

\begin{table}[ht] 
    \centering
\def\arraystretch{2}
\begin{tabular}{|>{\centering}p{0.25\textwidth}||>{\centering}p{0.25\textwidth} |>{\centering\arraybackslash}p{0.25\textwidth}|}
    \hline
     & Nondegenerate case & Light-like case\\
     \hline
     \hline 
    Geometric phase space & $(\mathcal{F}^{\partial}, \varpi^{\partial})$ & $(\mathcal{F}^{\partial}, \varpi^{\partial})$ \\
    \hline
    Structural constraint & \eqref{e:structural_constraint_overview} & \eqref{e:structural_modified_overview} \\
    \hline
    Constraints & $L_c, P_{\xi}, H_{\lambda}$ & $L_c, P_{\xi}, H_{\lambda}, R_{\tau}$ \\
    \hline
\end{tabular}
\vspace{0.3cm}
\caption{Differences between the nondegenerate case and the light-like one}
\label{t:comparison}
\end{table}

{
We conclude the overview with the Table \ref{t:comparison} showing the differences between the nondegenerate case and the light-like one.
}

\section{Technical results}\label{sec:technical}
{ In this section we} define the relevant quantities and maps,  establish the conventions and summarize the technical results needed in the paper. { One of the goal of this section is to prove some mathematical results in order to make the subsequent construction more fluid and easy to read. Full proofs and detailed computations will be postponed to Appendix \ref{sec:appendix_long_proofs}.}

We first recall and introduce some useful shorthand notation. { We will denote by $\Sigma= \partial M$ the $(N-1)$-dimensional  boundary  of the manifold $M$ of dimension $N$.} Furthermore we will use the  notation $\mathcal{V}_{\Sigma}$  for the restriction of $\mathcal{V}$ to $\Sigma$. { Extending the notation introduced in \eqref{e:NotationOmega}, using the same conventions,} we also define 
\begin{align*}
    \Omega^{i,j}:= \Omega^i\left(M, \textstyle{\bigwedge^j} \mathcal{V}\right) \qquad \Omega_{\partial}^{i,j}:= \Omega^i\left(\Sigma, \textstyle{\bigwedge^j} \mathcal{V}_{\Sigma}\right).
\end{align*}

We define the number of degrees of freedom of the space $\Omega^{i,j}$ (and $\Omega_{\partial}^{i,j}$) as its dimension  as a $C^\infty$-module. We will sometimes simply denote this by \emph{dimension}.

The coframe $e$ viewed as an isomorphism $e\colon TM \rightarrow \mathcal{V}$ defines, given a set of coordinates on $M$, a preferred basis on $\mathcal{V}$. If we denote by $\partial_i$ the vector field in $TM$ corresponding to the coordinate $x_i$, we get a basis on $\mathcal{V}$ by $e_i:= e (\partial_i)$. On the boundary, since $T\Sigma$ has one dimension less than $\mathcal{V}_{\Sigma}$, we can complement the linear independent set $e_i$ with another independent vector that we will call $e_n$.  { We call this basis the \emph{standard basis} (this basis depends on a given coordinate system on $M$ (or $\Sigma$))} and, unless otherwise stated, the components of the fields will always be taken with respect to this basis. 

On $\Omega^{i,j}$ and $\Omega_{\partial}^{i,j}$ we define the following maps:
\begin{align*}
    W_{k}^{ (i,j)}: \Omega^{i,j}  & \longrightarrow \Omega^{i+k,j+k} \\ 
    X  & \longmapsto   X \wedge \underbrace{e \wedge \dots \wedge e}_{k-times}, \\ 
    W_{k}^{ \partial, (i,j)}: \Omega_{\partial}^{i,j}  & \longrightarrow  \Omega_{\partial}^{i+k,j+k}\\ 
    X  & \longmapsto  X \wedge \underbrace{e \wedge \dots \wedge e}_{k-times}.
\end{align*}
{
Recall that the elements of the Lie algebra $\mathfrak{so}(N-1,1)$ can be identified with the elements of $\Omega^{(0,2)}$ (or $\Omega_{\partial}^{(0,2)}$, depending on where we consider such elements). Hence the Lie brackets define a map 
\begin{align*}
    [\cdot, \cdot] : \Omega^{(0,2)} \times \Omega^{(0,2)} & \rightarrow \Omega^{(0,2)} \\
    (x,y) &\mapsto [x,y],
\end{align*}
and a similar one on $\Omega_{\partial}^{(0,2)}$. Combining this action with the wedge product we can define the following generalisation, denoted with the same symbol
\begin{align*}
    [\cdot, \cdot] : \Omega^{(i,2)} \times \Omega^{(k,2)} & \rightarrow \Omega^{(i+k,2)} & & \text{for} \; i+k \leq N\\
    (x,y) &\mapsto [x,y],
\end{align*}
which in coordinates reads
\begin{align*}
    [x,y]_{\mu_1 \dots \mu_{i+k}}^{a_1 a_2} = \sum_{\sigma_{i+k}} \text{sign}(\sigma_{i+k}) x_{\mu_{\sigma(1)} \dots \mu_{\sigma(i)}}^{a_1 a_3} y_{\mu_{\sigma(i+1)} \dots \mu_{\sigma(i+k)}}^{a_2 a_4}\eta_{a_3 a_4}. 
\end{align*}
}
Furthermore, generalizing the action of the Lie algebra $\mathfrak{so}(N-1,1)$ on $\mathcal{V}$ (or $\mathcal{V}_{\Sigma}$) we can also introduce the following maps:
\begin{align}\label{e:generalised_Lie_algebra_action}
    \varrho^{(i,j)} : \Omega_{\partial}^{i,j}  & \longrightarrow \Omega_{\partial}^{i+1,j-1} \\
    X & \longmapsto [X, e] . \nonumber
\end{align}
In coordinates they are defined  as
{
\begin{align*}
    X & \mapsto \sum_{\sigma_{i+1}} \text{sign}(\sigma_{i+1}) X_{\mu_{\sigma(1)}\dots \mu_{\sigma(i)}}^{a_{1} \dots a_{j}} \eta_{a_{j} b} e^{b}_{\mu_{\sigma(i+1)}}.
\end{align*}
}
In the next part of this section we will state some technical results. We refer { to the appendix of \cite{CCS2020} for fully exhaustive proofs}. As in the aforementioned article we use by convention the total degree\footnote{Other sign conventions are possible, for example the one with separate degrees. Different conventions lead to isomorphic vector spaces but not isomorphic algebras.} to fix the commutation relations between quantities in $\Omega^{i,j}$ and $\Omega_{\partial}^{i,j}$.  {For example, given two elements\footnote{Later we will also consider elements with \emph{ghost number}. This means that we consider an additional $\mathbb{Z}$-grading and the total degree will be the sum of all the degrees.}
$\alpha \in \Omega^{i,j}$ and $\beta \in \Omega^{k,l}$ of total degree $i+j$ and $k+l$ respectively, we have the following commutation rule:
\begin{align*}
    \alpha\beta = (-1)^{(i+j)(k+l)}\beta \alpha.
\end{align*}
}
The properties of the maps $W_{k}^{ (i,j)}$ and $W_{k}^{ \partial, (i,j)}$ do not depend on the degeneracy of $g^{\partial}$. Hence we have the following results (\cite{CCS2020,CS2019}):

\begin{lemma} \label{lem:We_bulk}
    Let $N=\mathrm{dim}(M)\geq 4$. Then 
    \begin{enumerate}
        \item $ W_{N-3}^{ (2,1)}$ is bijective;  \label{lem:We21}
        \item $ \mathrm{dim}\mathrm{Ker}W_{N-3}^{(2,2)}\not=0$. \label{lem:We22}
    \end{enumerate}
\end{lemma}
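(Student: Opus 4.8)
Let me plan a proof of Lemma \ref{lem:We_bulk}. The two maps in question are
\[
    W_{N-3}^{(2,1)} \colon \Omega^{2,1} \longrightarrow \Omega^{N-1,N-2}, \qquad
    W_{N-3}^{(2,2)} \colon \Omega^{2,2} \longrightarrow \Omega^{N-1,N-1},
\]
each given by wedging $(N-3)$ times with the coframe $e$. The key observation is that these are $C^\infty(M)$-linear bundle maps, so everything reduces to pointwise linear algebra on a fixed fibre: at each point of $M$ we work with the vector space $V$ (of dimension $N$, with Lorentzian inner product $\eta$) and the cotangent space $T_x^*M$ (also of dimension $N$), and the coframe $e$ is an isomorphism $T_xM \xrightarrow{\sim} V$, i.e.\ a nondegenerate element of $T_x^*M \otimes V \cong \Omega^{1,1}$ at that point. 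Since $e$ is an isomorphism, I may as well identify $T_x^*M$ with $V$ via $e$ and treat both the form indices and the internal indices as living in the same $N$-dimensional space; wedging with $e$ then becomes a purely combinatorial/representation-theoretic operation.

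\textbf{Dimension count and injectivity for part (1).}

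First I would verify that source and target have the same dimension as $C^\infty$-modules, so that bijectivity of $W_{N-3}^{(2,1)}$ follows from injectivity alone. The source $\Omega^{2,1}$ has rank $\binom{N}{2}\cdot N$ and the target $\Omega^{N-1,N-2}$ has rank $\binom{N}{N-1}\cdot\binom{N}{N-2} = N\cdot\binom{N}{2}$; these agree. Thus it suffices to show $W_{N-3}^{(2,1)}$ is injective. For injectivity I would argue pointwise: suppose $X\in\Omega^{2,1}$ satisfies $X\wedge e^{N-3}=0$. Expanding $X$ in the standard basis and using that $e$ is an isomorphism, wedging with $e^{N-3}$ antisymmetrizes $X$ against $N-3$ copies of the coframe in both the form and internal slots simultaneously. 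I expect the cleanest route is to pass to components and show that the linear system $X\wedge e^{N-3}=0$ forces all components of $X$ to vanish, exploiting that the form degree of $X$ is $2$ and the internal degree is $1$ so the two antisymmetrizations are ``compatible'' and leave no kernel; alternatively one can produce an explicit inverse by contracting the image against $e$ a suitable number of times and dividing by the appropriate combinatorial factor. Either way the nondegeneracy of $e$ is the crucial input, and I would phrase it so that it is manifestly the point of failure were $e$ degenerate.

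\textbf{Nontriviality of the kernel for part (2).}

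For part (2) the strategy is the opposite: I want to exhibit a nonzero element of $\Ker W_{N-3}^{(2,2)}$. Here the source $\Omega^{2,2}$ has rank $\binom{N}{2}^2$ and the target $\Omega^{N-1,N-1}$ has rank $\binom{N}{N-1}^2=N^2$, and for $N\geq 4$ one checks $\binom{N}{2}^2 > N^2$, so the map cannot be injective by dimension count and the kernel is automatically nonzero. This is the quickest argument and I would give it as the main proof. As a sanity check one could also note that $F_\omega\in\Omega^{2,2}$ and the curvature naturally supplies elements annihilated by wedging with $e^{N-3}$ on shell, but the dimension inequality is self-contained and cleaner.

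\textbf{Main obstacle.}

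The genuinely substantive step is the injectivity in part (1); the dimension count reducing bijectivity to injectivity and the kernel nontriviality in part (2) are both immediate from rank arithmetic. The obstacle is organizing the pointwise linear algebra of the double antisymmetrization cleanly. I expect the most reliable approach is to fix the standard basis, write $X = \sum X^{a}_{\mu\nu}\,dx^\mu dx^\nu\otimes e_a$, compute the components of $X\wedge e^{N-3}$ explicitly as a sum over permutations with the $\eta$-contractions, and then invert the resulting block-structured linear map using the nondegeneracy of $e$; this is exactly the type of computation deferred to Appendix \ref{sec:appendix_long_proofs} in the paper, so I would state the injectivity and defer the component bookkeeping there.
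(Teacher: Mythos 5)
Your proposal is correct and takes essentially the same route as the paper, which itself defers this lemma to Appendix A of \cite{CCS2020}: there (and in the component proofs this paper does spell out for the boundary analogues in its own appendix) the argument is exactly your rank arithmetic --- equal ranks $\binom{N}{2}N = N\binom{N}{2}$ reducing part (1) to fibrewise injectivity checked in the standard basis, and $\binom{N}{2}^2 > N^2$ for $N\geq 4$ giving part (2) outright. The one step you leave unexecuted, the component verification that $X\wedge e^{N-3}=0$ forces $X=0$ when $e$ is nondegenerate, is precisely the bookkeeping carried out in that cited appendix, so your plan is sound and correctly isolates the only substantive computation.
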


\begin{lemma} \label{lem:We_boundary}
    The maps $W_{k}^{ \partial, (i,j)}$ have the following properties for $N \geq 4$:
    \begin{enumerate}
        \item $W_{N-3}^{\partial, (2,1)}$ and $W_{N-3}^{\partial, (1,2)}$ are surjective but not injective; \label{lem:Wep21,12}
        \item $W_{N-3}^{\partial, (1,1)}$ is injective; \label{lem:Wep11}
        \item $\dim \mathrm{Ker} W_{N-3}^{\partial, (1,2)} = \dim \mathrm{Ker} W_{N-3}^{\partial, (2,1)}$;\label{lem:kernel12-21}
        \item $W_{N-4}^{\partial, (2,1)}$ is injective. ($N \geq 5$)\label{lem:We5p21}
\end{enumerate}
\end{lemma}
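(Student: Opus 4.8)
The plan is to reduce all four statements to fibrewise linear algebra and then to a single Lefschetz-type estimate. Since each $W_{k}^{\partial,(i,j)}$ is $C^{\infty}(\Sigma)$-linear and defined algebraically (pointwise wedging with the coframe $e$), injectivity and surjectivity can be tested fibre by fibre; because $e$ is nondegenerate at every point the fibrewise maps all have the same rank, so the $C^{\infty}$-module statements follow. Fix $u\in\Sigma$, write $V$ for the fibre of $\mathcal{V}_{\Sigma}$, set $W:=e_{u}(T_{u}\Sigma)\subset V$ for the $(N-1)$-dimensional image of the coframe and $V=W\oplus\langle e_{n}\rangle$. The coframe induces an isomorphism $W^{*}\xrightarrow{\sim}T_{u}^{*}\Sigma$ carrying the dual basis $e^{i}$ of the $e_{i}$ onto the $dx^{i}$, and under it $e=\sum_{i}dx^{i}\,e_{i}$ corresponds to the canonical element $\operatorname{id}_{W}=\sum_{i}e^{i}\otimes e_{i}\in W^{*}\otimes W$.

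First I would record the dimensions $\dim\Omega_{\partial}^{i,j}=\binom{N-1}{i}\binom{N}{j}$ and compare domain with codomain. For $N\ge4$ this already forces the ``negative'' halves: the domains in (1) are strictly larger than the codomains (so those maps cannot be injective), while the codomains in (2) and (4) are strictly larger than the domains (so those maps cannot be surjective). Hence in each item only one implication is left. Moreover item (3) becomes pure bookkeeping once surjectivity in (1) is known: by rank--nullity the two kernels have dimensions $\binom{N-1}{2}N-\binom{N}{2}$ and $(N-1)\binom{N}{2}-(N-1)N$, and a short computation shows that both equal $\tfrac12 N(N-1)(N-3)$.

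The heart of the argument is the remaining surjectivity/injectivity, which I would handle uniformly. Splitting the internal algebra as $\bigwedge^{j}V=\bigwedge^{j}W\oplus(\bigwedge^{j-1}W)\wedge e_{n}$ and using that $e$ takes values in $W$, the map $X\mapsto X\wedge e^{k}$ is block diagonal, with $e_{n}$ acting as an inert spectator: it is the direct sum of $\cdot\wedge e^{k}$ on $\bigwedge^{i}W^{*}\otimes\bigwedge^{j}W$ (the ``$W$-block'') and on $\bigwedge^{i}W^{*}\otimes\bigwedge^{j-1}W$ (the ``$e_{n}$-block''). Now identify $U:=W\oplus W^{*}$ with the symplectic vector space whose form is the image of $\operatorname{id}_{W}$, namely $\sum_{i}e^{i}\wedge e_{i}$; then $\bigwedge^{\bullet}U=\bigoplus_{p,q}\bigwedge^{p}W^{*}\otimes\bigwedge^{q}W$ and $\cdot\wedge e^{k}$ is the $k$-th power $L^{k}$ of the Lefschetz operator $L=\cdot\wedge\operatorname{id}_{W}$. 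The key input, with $m:=\dim W=N-1$, is the symplectic hard Lefschetz theorem:
\[
L^{k}\colon\ \textstyle\bigwedge^{p}W^{*}\otimes\bigwedge^{q}W\ \longrightarrow\ \bigwedge^{p+k}W^{*}\otimes\bigwedge^{q+k}W
\]
is injective when $p+q+k\le m$ and surjective when $p+q+k\ge m$ (hence bijective at $p+q+k=m$, consistent with the matching binomial dimensions). Reading off each case is then immediate: for $W_{N-3}^{\partial,(2,1)}$ ($k=N-3$) the $W$-block sits at $p+q+k=m+1$ (surjective) and the $e_{n}$-block at $p+q+k=m$ (bijective), and identically for $W_{N-3}^{\partial,(1,2)}$, giving (1); for $W_{N-3}^{\partial,(1,1)}$ both blocks are at $p+q+k\le m$ (the $W$-block bijective, the $e_{n}$-block injective), giving (2); for $W_{N-4}^{\partial,(2,1)}$ ($k=N-4$, $N\ge5$) the $W$-block is again at $p+q+k=m$ and the $e_{n}$-block at $p+q+k=m-1$, giving (4).

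The main obstacle is the Lefschetz estimate itself with its exact ranges. If one does not wish to cite symplectic hard Lefschetz as a black box, this is where the genuine work lies, and the only non-formal point is the coupled ``$W$-block''. Its components $\bigwedge^{p}W^{*}\otimes\bigwedge^{q}W$ with $p=q$ genuinely mix distinct basis vectors and lead to a square linear system with vanishing diagonal that one must show to be non-singular, whereas the components with $p\neq q$ reduce to a bijective relabelling of basis elements. The bijective instances of the estimate are precisely the lower-dimensional analogues (with $N$ replaced by $N-1$) of the bulk bijectivity in Lemma \ref{lem:We_bulk}, so in practice the same computation settles them.
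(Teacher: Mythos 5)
Your proposal is correct, but it takes a genuinely different route from the paper's. The paper (deferring to \cite{CCS2020}, with the proof of item (\ref{lem:kernel12-21}) recalled verbatim in Appendix \ref{sec:appendix_long_proofs}) argues in coordinates: it expands $X\wedge e^{N-3}=0$ in the standard basis $e_i=e(\partial_i)$, extracts an explicit list of linear equations on the components (the system \eqref{e:conditions_kernel_v}), checks their independence, and counts --- surjectivity follows because the number of independent equations equals the dimension of the codomain, and the kernel dimensions follow by subtraction. You instead work fibrewise and structurally: the splitting $V=W\oplus\langle e_n\rangle$ makes $\cdot\wedge e^{k}$ block diagonal, each block is identified with a power $L^{k}$ of the Lefschetz operator on $\bigwedge^{\bullet}(W^{*}\oplus W)$, and the precise injectivity/surjectivity ranges come from symplectic hard Lefschetz (equivalently, $\mathfrak{sl}(2)$ representation theory), with the ``negative'' halves disposed of by comparing ranks of the free modules. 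Your case analysis is accurate (I checked all four bidegree computations against $m=N-1$), the rank--nullity derivation of item (\ref{lem:kernel12-21}) is right, and the constant-rank remark justifying the passage from fibres to sections is the correct thing to say. What your approach buys is uniformity and conceptual clarity: all four items follow from one estimate, the numerology is explained rather than verified, and the argument generalizes to any $k$ and $(i,j)$. What the paper's approach buys --- and the reason it recalls the coordinate proof verbatim --- is the explicit component description of the kernels, which is not a by-product of the Lefschetz argument but is indispensable downstream: the proofs of Lemma \ref{lem:varrho12_deg}, Proposition \ref{prop:components_of_tau} and the lemmas on $\mathcal{S}$ and $\mathcal{T}$ all manipulate exactly those component equations. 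Two minor points: the bigraded form of hard Lefschetz you invoke should be justified by the one-line observation that each summand of fixed $p-q$ is an $\mathfrak{sl}(2)$-subrepresentation (both $L$ and its adjoint $\Lambda$ have bidegree $(\pm1,\pm1)$), so the injectivity/surjectivity statements descend to it; and your closing claim that the bijective instances are ``precisely'' lower-dimensional analogues of Lemma \ref{lem:We_bulk} is loose, since that lemma only covers the bidegree $(2,1)$ in the bulk, whereas your bijective blocks also occur in bidegrees $(1,1)$ and $(2,0)$ --- but this affects only your optional sketch of a self-contained proof, not the argument that cites the Lefschetz estimate.
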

The following lemma is an extension of the corresponding ones in \cite{CCS2020} and in \cite{CS2019}.
All the proofs of the following results can be found in  Appendix \ref{sec:appendix_long_proofs}.
\begin{lemma}\label{lem:varrho12_deg}
If $g^\partial$ is degenerate with {$\dim \Ker{g^\partial}=1$}, then
$\varrho^{(1,2)} |_{\mathrm{Ker} W_{N-3}^{\partial, (1,2)}}$ has a kernel of dimension $\frac{N(N-3)}{2}$.
\end{lemma}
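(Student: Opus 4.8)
The plan is to reduce the claim to a fibrewise computation and then to solve an explicit linear system in a frame adapted to the degeneration. Since the dimension in the statement is the rank as a $C^\infty$-module, it suffices to compute $\dim\bigl(\mathrm{Ker}\,\varrho^{(1,2)}\cap\mathrm{Ker}\,W_{N-3}^{\partial,(1,2)}\bigr)$ at a single point $u\in\Sigma$. First I would choose coordinates on $\Sigma$ around $u$ bringing $g^\partial(u)$ into normal form, so that in the associated standard basis $e^a_\mu=\delta^a_\mu$ on tangent indices, $e_1$ spans $\mathrm{Ker}\,g^\partial$ (it is null and $\eta$-orthogonal to all tangent directions) and $\eta_{ij}=\delta_{ij}$ on the spacelike block $i,j\in\{2,\dots,N-1\}$; the transversal $e_n$ is then chosen null and orthogonal to the spacelike directions, with $\eta_{1n}=:c\neq0$. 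In this basis $\varrho^{(1,2)}(X)^a_{\mu\nu}=X^{ab}_\mu\eta_{b\nu}-(\mu\leftrightarrow\nu)$, so that $\varrho^{(1,2)}(X)=0$ is equivalent to the symmetry in $\mu\nu$ of $\widetilde X^{a}_{\mu\nu}:=X^{ab}_\mu\eta_{b\nu}$. The essential difference from the non-degenerate case of \cite{CCS2020} is that the contraction $v\mapsto\eta(v,e_\bullet)$ from $\mathcal V_\Sigma$ to $(T\Sigma)^*$ now kills the \emph{tangent} null vector $e_1$ rather than the transversal normal, which is precisely what creates the extra kernel.

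Next I would solve $\varrho^{(1,2)}(X)=0$. Splitting indices into the null index $1$, the spacelike indices $i,j$, and the transversal index $n$, the symmetry of $\widetilde X$ produces the relations $X^{aj}_1=c\,X^{an}_j$ (from the slots $(1,j)$) and $X^{aj}_i=X^{ai}_j$ (from the slots $(i,j)$), for every internal $a$. Applying the second relation with a spacelike internal index makes $X^{kj}_i$ simultaneously antisymmetric in the internal pair $(k,j)$ and symmetric in $(i,j)$, hence zero by the standard overlapping-symmetry argument; applying it with the index $n$ forces $X^{in}_j$ to be symmetric in $(i,j)$, which clashes with the antisymmetry of $X^{in}_j$ coming from the first relation together with the internal antisymmetry of $X$, so these components vanish as well. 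Consequently every surviving component of $X$ carries the null internal index $1$, and $\mathrm{Ker}\,\varrho^{(1,2)}$ is parametrized by a scalar $p=X^{1n}_1$, a spacelike vector $q_i=X^{1n}_i$ and a symmetric matrix $S_{ij}=X^{1i}_j$ (with $X^{1i}_1=c\,q_i$ determined), giving $\dim\mathrm{Ker}\,\varrho^{(1,2)}=\binom{N}{2}$.

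Finally I would impose $X\wedge e^{N-3}=0$ on this space. Writing the components of $X\wedge e^{N-3}\in\Omega_{\partial}^{N-2,N-1}$ by the pair $(\hat\mu,\hat a)$ of omitted form/internal indices and using $e^a_\mu=\delta^a_\mu$, only a few slots are non-trivial and they yield exactly the conditions $p=0$, $q_i=0$ for each spacelike $i$, and a single trace-type condition $\sum_i\epsilon_i S_{ii}=0$; in particular the off-diagonal entries of $S$ never appear. These are $N$ independent conditions, whence
\begin{equation*}
\dim\bigl(\mathrm{Ker}\,\varrho^{(1,2)}\cap\mathrm{Ker}\,W_{N-3}^{\partial,(1,2)}\bigr)=\binom{N}{2}-N=\frac{N(N-3)}{2}.
\end{equation*}

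The main obstacle is the combinatorial bookkeeping in the two vanishing arguments that pin down $\mathrm{Ker}\,\varrho^{(1,2)}$, and in the determination of which components of $X\wedge e^{N-3}$ actually survive. A convenient consistency check, following from $\varrho^{(1,1)}(e)=0$, is that $\varrho^{(1,2)}$ maps $\mathrm{Ker}\,W_{N-3}^{\partial,(1,2)}$ into $\mathrm{Ker}\,W_{N-3}^{\partial,(2,1)}$; by Lemma \ref{lem:We_boundary}(\ref{lem:kernel12-21}) these two spaces have the same dimension, in agreement with the kernel and cokernel of the restricted map both being $\tfrac{N(N-3)}{2}$-dimensional.
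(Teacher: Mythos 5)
Your proof is correct, and it reaches the dimension by a genuinely different organization of the same fibrewise linear algebra. The paper works inside $\mathrm{Ker}\,W_{N-3}^{\partial,(1,2)}$ from the outset: it recalls the component equations of that kernel from Lemma \ref{lem:We_boundary}, writes $[v,e]=0$ there in normal geodesic coordinates that merely diagonalize $g^{\partial}$ (on that subspace all components $v_{\mu}^{aN}$ vanish, so the transversal pairing $\eta(e_n,\cdot)$ never enters), and then counts independent equations, obtaining the dimension as $\dim\mathrm{Ker}\,W_{N-3}^{\partial,(1,2)}$ minus that count; the delicate step there is the dependency, induced by the trace constraints defining $\mathrm{Ker}\,W_{N-3}^{\partial,(1,2)}$, between two of the families of equations. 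You invert the order: you first solve $\varrho^{(1,2)}(X)=0$ on all of $\Omega_{\partial}^{1,2}$, which obliges you to fix the full structure of $\eta$ (hence your null pair $e_1,e_n$ with $\eta_{1n}=c$, which the paper never needs), and you get the clean parametrization by $(p,q_i,S_{ij})$ of dimension $\binom{N}{2}$; the equations cutting out $\mathrm{Ker}\,W_{N-3}^{\partial,(1,2)}$ are purely combinatorial (metric-independent, so your signs $\epsilon_i$ are all $+1$ in your normalization) and reduce on this space to $p=0$, $q_i=0$, $\sum_i S_{ii}=0$, manifestly $N$ independent conditions. What your route buys: the independence of the final conditions is evident rather than requiring a dependency analysis, and you obtain an explicit description of the kernel itself, which is essentially the information the paper needs later for $\mathcal{K}$ and Proposition \ref{prop:components_of_tau}; it is also more robust against miscounting --- indeed the total printed in the paper's proof contains a slip, since the middle term $(N-3)(N-4)$ should read $(N-2)(N-3)$ for the subtraction from $\frac{N(N-1)(N-3)}{2}$ to give $\frac{N(N-3)}{2}$, although the stated conclusion is correct. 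What the paper's route buys: by staying inside $\mathrm{Ker}\,W_{N-3}^{\partial,(1,2)}$ it never has to discuss the transversal part of $\eta$, and its coordinates and component conventions are exactly those reused in the remaining appendix proofs. Your closing consistency check is also sound: $[e,e]=0$ gives $[v,e]\wedge e^{N-3}=\pm\,[v\wedge e^{N-3},e]$, so $\varrho^{(1,2)}$ indeed maps $\mathrm{Ker}\,W_{N-3}^{\partial,(1,2)}$ into $\mathrm{Ker}\,W_{N-3}^{\partial,(2,1)}$, and these spaces have equal dimension by Lemma \ref{lem:We_boundary}.
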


{
\begin{remark}\label{rmk:ideas_behind_technical}
These three lemmas express in a mathematical way the possibility of inverting the coframe $e$ when appearing in a wedge product or in the generalised Lie algebra action $\varrho$ of \eqref{e:generalised_Lie_algebra_action}. In particular (exemplifying only in dimension $N=4$) they give the answer to the following question: given an expression of the form $e\wedge X$ or $[e,X]$ for some $X$, is it possible to invert these expressions and get back $X$? The answer is that it depends on the space where $X$ is defined, and in the case of $\varrho$ on the degeneracy of the boundary metric $g^\partial$. For example if we have $X \in \Omega^{2,1}$, using Lemma \ref{lem:We_bulk}, we see that it is possible to define an inverse $``W^{-1}_1"$ such that $X= W^{-1}_1(e \wedge X)$. On the contrary, for $X \in \Omega_{\partial}^{2,1}$ , using Lemma \ref{lem:We_boundary}, such inversion is no longer possible in a unique way, meaning that $e\wedge X$ does not contain all the information that $X$ contained (or, said in another way, not all the components of $X$ appear in $e \wedge X$).
Note also that these maps do not appear in the three-dimensional case. Hence their properties give hints on the differences between the topological three-dimensional theory and the physical four-dimensional one.
\end{remark}
}

\subsection{Results for the degeneracy constraint}

In order to define the space to which the Lagrange multiplier of the degeneracy constraint  belongs, it is useful to consider the following construction.

If a metric $g^{\partial}$ is degenerate, we can find a vector field $X$ on $\Sigma$ such that $\iota_{X}g^{\partial}=0$. Using a reference metric $g_0$, we can complete the vector field $X_0$ (with $\iota_{X_0}g_0^{\partial}=0$) to a basis $X_0, Y_0^i$ of $TM$. If we then choose a coframe $e$ \emph{near} the original one, the same $Y^i_0$s would also be a completion of $X$ to a basis of $TM$.

Let now $\beta \in \Omega_{\partial}^{1,0}$ a one form such that $\iota_X \beta =1$. We then define $\widehat{e}= \beta \iota_X e$ and fix $\beta$ by requiring that $\widetilde{e} \coloneqq e - \widehat{e}$ satisfies\footnote{The fact that the required condition is sufficient and well defined will be analyzed later in Lemma \ref{lem:deftau_rho}. }
\begin{align*}
    \iota_{Y_0^1} \dots \iota_{Y_0^{N-2}}(\widetilde{e} \wedge e^{N-4} \wedge v )= 0 
\end{align*}
for all  $v\in \Omega_{\partial}^{1,2}$  such that $e^{N-3} \wedge v = 0$.
 Using this notation we can define another set of maps
\begin{align*}
     \widetilde{\varrho}^{(i,j)} : \Omega_{\partial}^{i,j}  & \longrightarrow \Omega_{\partial}^{i+1,j-1} \\
     X & \longmapsto [X, \widetilde{e}]
\end{align*}
which in coordinate reads
{
\begin{align*}
    X & \mapsto \sum_{\sigma_{i+1}} \text{sign}(\sigma_{i+1}) X_{\mu_{\sigma(1)}\dots \mu_{\sigma(i)}}^{a_{1} \dots a_{j}} \eta_{a_{j} b} \widetilde{e}^{b}_{\mu_{\sigma(i+1)}}.
\end{align*}

}

Let $J$ be a complement\footnote{ \label{foot:Orthogonal_complement} For example it is possible to obtain an explicit expression for the complement in the following way. Choose an arbitrary Riemannian metric on $\Sigma$ and extend it to $\Omega^{2,1}$. Then it is possible to view $J$ as the orthogonal complement of $\Ima \varrho^{(1,2)} |_{\mathrm{Ker} W_{N-3}^{\partial, (1,2)}}$ in $\Omega_{\partial}^{2,1}$ with respect to this Riemannian metric. This approach will be used in Appendix \ref{sec:appendix_long_proofs} to prove the Lemmas and Proposition below with the diagonal Riemannian metric.} of the space $\Ima \varrho^{(1,2)} |_{\mathrm{Ker} W_{N-3}^{\partial, (1,2)}}$ in $\Omega_{\partial}^{2,1}$.
We now consider the following spaces:
\begin{subequations}\label{e:defTKS}
\begin{align}
    \mathcal{T}&= \mathrm{Ker}W_{N-3}^{\partial (2,1)} \cap J \subset \Omega_{\partial}^{2,1},\\
    \mathcal{K}&= \mathrm{Ker}W_{N-3}^{\partial (1,2)} \cap \mathrm{Ker} \varrho^{(1,2)} \subset \Omega_{\partial}^{1,2},\\
    \mathcal{S}&= \mathrm{Ker}W_{1}^{\partial (N-3,N-1)} \cap \mathrm{Ker} \widetilde{\varrho}^{(N-3,N-1)}  \subset \Omega^{N-3,N-1}_{\partial}. \label{e:defS}
\end{align}
\end{subequations}

{
\begin{remark}\label{rmk:meaningofspaces}
Note that all these three spaces are zero in the non-degenerate case. In particular the fact that $\mathcal{K}$ is not zero in the degenerate case accounts for the existence of components of $\omega$ that do not appear either in the expression $ed_{\omega}e$ or $e_n d_{\omega}e$ but do appear in $d_{\omega}e$ (for $N=4$). Hence $\mathcal{K}$ represents the failure of the structural constraint to fix uniquely a representative in the equivalence class $[\omega]$. The space $\mathcal{T}$ is strictly related to $\mathcal{K}$ since it contains elements of $\mathrm{Ker}W_{N-3}^{\partial (2,1)}$ that cannot be generated by elements in $\mathrm{Ker}W_{N-3}^{\partial (1,2)}$ through $\varrho^{(1,2)}$. As a matter of fact, using coordinates, one can see that the components of $\Omega_{\partial}^{2,1}$ corresponding to $\mathcal{T}$ in the non-degenerate case are generated through $\varrho^{(1,2)}$ by the elements corresponding to $\mathcal{K}$ in $\Omega_{\partial}^{1,2}$. Finally. $\mathcal{S}$ plays the role of the dual of $\mathcal{T}$ as specified in Lemma \ref{lem:relationSandT}.
\end{remark}

}

We also denote by $p_{\mathcal{T}}: \Omega_{\partial}^{2,1} \rightarrow \mathcal{T}$, by $p_{\mathcal{K}}: \Omega_{\partial}^{1,2} \rightarrow \mathcal{K}$ and by $p_{\mathcal{S}}: \Omega_{\partial}^{N-3,N-1} \rightarrow \mathcal{S}$ some corresponding projections to them\footnote{\label{foot:Orthogonal_complement2} In order to define these projections we may proceed as in footnote \ref{foot:Orthogonal_complement} and define an orthogonal complement of these spaces and subsequently use the corresponding orthogonal projections.}. The spaces $\mathcal{T}$ and $\mathcal{K}$ are not empty because of the results of Lemmas \hyperref[lem:Wep21,12]{\ref*{lem:We_boundary}.(\ref*{lem:Wep21,12})} and \ref{lem:varrho12_deg}, while $\mathcal{S}$ is characterized by the following Proposition in which we also summarize the involved components, since they will be crucial in the computation of the Poisson brackets of the constraints.

\begin{proposition}\label{prop:components_of_tau}
    The dimension of $\mathcal{S}$ is
    \begin{align*}
        \dim \mathcal{S} = \frac{N(N-3)}{2}.
    \end{align*}
    {
 Let $p\in \Sigma$ and $U$ a neighbourhood of $p$ in which normal coordinates centered in $p$ are well defined. Then using such coordinates and the standard basis of $\mathcal{V}_{\Sigma}$,
the non-zero components of an element $\tau \in \mathcal{S}$ are 
}
   
\begin{align*}
    Y_{\mu}& :=\tau^{NN-1 \mu_1 \dots \mu_{N-3}}_{\mu_1 \dots \mu_{N-3}} \text{ where } \mu \neq \mu_1 \dots \mu_{N-3}, \\
    X_{\mu_1}^{\mu_2} & := \tau^{N N-1 \mu_3 \dots \mu_{N-2} \mu_{2}}_{\mu_3 \dots \mu_{N-2} \mu_1},
\end{align*}
satisfying
\begin{align*}
    \sum_{\mu=1}^{N-2} Y_{\mu} =0 \text{ and } 
    X_{\mu_1}^{\mu_2} = f(\widetilde{g}^{\partial}, X_{\mu_2}^{\mu_1}, Y_{\mu})
\end{align*}
for $\mu_1 < \mu_2$  and some linear function $f$ with $\widetilde{g}^{\partial} \coloneqq \eta (\widetilde{e},\widetilde{e}) $. 
\end{proposition}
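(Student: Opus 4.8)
The plan is to reduce the statement to pointwise linear algebra and then to solve the resulting homogeneous linear system explicitly in the standard basis at $p$. The two conditions defining $\mathcal{S}$, namely $W_{1}^{\partial(N-3,N-1)}\tau=\tau\wedge e=0$ and $\widetilde{\varrho}^{(N-3,N-1)}\tau=[\tau,\widetilde{e}]=0$, are $C^\infty(\Sigma)$-linear and involve no derivatives of $\tau$; hence $\mathcal{S}$ is the space of sections of a subbundle of $\bigwedge^{N-3}T^*\Sigma\otimes\bigwedge^{N-1}\mathcal{V}_{\Sigma}$, and its rank as a $C^\infty$-module equals the dimension of the fibre over any fixed $p\in\Sigma$. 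I therefore work at $p$ in the standard basis, choosing it (as in the construction preceding the statement) so that $e^a_\mu(p)=\delta^a_\mu$ and $\eta$ is in light-cone form: $\eta_{\mu\nu}=\delta_{\mu\nu}$ for $\mu,\nu\le N-2$, $\eta_{N-1\,N}=1$ and $\eta_{N-1\,N-1}=\eta_{NN}=0$, so that internal direction $N-1$ is the degenerate (null) one and $N$ the transversal one. Using normal coordinates centred at $p$ together with Lemma \ref{lem:deftau_rho}, the modified coframe $\widetilde{e}$ reduces at $p$ to $e$ with its degenerate column removed, and $\widetilde{g}^\partial=\eta(\widetilde{e},\widetilde{e})$ becomes $\delta_{\mu\nu}$ on $\{1,\dots,N-2\}$; this is precisely what makes the linear function $f$ explicit at $p$.

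Next I would treat the condition $\tau\wedge e=0$. Since $\bigwedge^N\mathcal{V}$ is one-dimensional, I dualize the internal multi-index and write $\tau^{a_1\cdots a_{N-1}}=\epsilon^{a_1\cdots a_{N-1}b}\sigma_b$ for a $\mathcal{V}_\Sigma$-valued $(N-3)$-form $\sigma$ (carrying the same form indices as $\tau$). Under this identification $\tau\wedge e$ equals, up to sign, the scalar $(N-2)$-form $\sigma_a\wedge e^a$ tensored with the volume element, so $W_1^\partial\tau=0$ is equivalent to $\sigma_a\wedge e^a=0$ in $\Omega^{N-2}(\Sigma)$. Evaluating this at $p$, where $e^a_\mu=\delta^a_\mu$, and recalling that the component $\tau^{a_1\cdots a_{N-1}}$ with internal multi-index missing the single value $m$ corresponds to $\sigma_m$, I read off which components survive. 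The outcome is that, jointly with the second condition, the only components of $\tau$ that can be nonzero are those whose internal multi-index contains both $N-1$ and $N$; these are exactly the components labelled $Y_\mu$ (internal index missing a spacelike tangent value $\mu\in\{1,\dots,N-2\}$) and $X_{\mu_1}^{\mu_2}$ in the statement. This is the step that produces the stated list.

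I would then impose $[\tau,\widetilde{e}]=0$. The map $\widetilde{\varrho}$ contracts one internal index of $\tau$ against $\widetilde{e}$ and antisymmetrizes in an additional form index; at $p$ the field $\widetilde{e}$ has only spacelike tangent components, so this contraction couples the surviving components among themselves. Matching the two nested antisymmetrizations (in the $N-3$ form indices and in the $N-1$ internal indices) yields two families of relations: contracting the $Y$-type components gives the trace relation $\sum_{\mu=1}^{N-2}Y_\mu=0$, while for each unordered pair $\{\mu_1,\mu_2\}$ one obtains the relation $X_{\mu_1}^{\mu_2}=f(\widetilde{g}^\partial,X_{\mu_2}^{\mu_1},Y_\mu)$ with $f$ linear, its coefficients determined by the now $\delta$-shaped $\widetilde{g}^\partial$. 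Finally I count: after $\sum Y_\mu=0$ the $Y$'s contribute $N-3$ free parameters, and after the off-diagonal relation the $X$'s contribute $\binom{N-2}{2}=\tfrac{(N-2)(N-3)}{2}$, so $\dim\mathcal{S}=(N-3)+\tfrac{(N-2)(N-3)}{2}=\tfrac{N(N-3)}{2}$, in agreement with $\dim\mathcal{K}$ from Lemma \ref{lem:varrho12_deg} and with $\mathcal{S}$ being dual to $\mathcal{T}$.

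The main obstacle is the combinatorial bookkeeping of the two nested antisymmetrizations in general dimension $N$: keeping the signs of the $\epsilon$-dualization correct, identifying precisely which components of $\tau$ feed into $\tau\wedge e$ as opposed to $[\tau,\widetilde{e}]$, and verifying that the two conditions are independent and jointly leave exactly the advertised free parameters. The most delicate point is extracting the explicit linear map $f$ from the structure of $\widetilde{\varrho}$ and checking that it depends only on $\widetilde{g}^\partial$, $X_{\mu_2}^{\mu_1}$ and the $Y_\mu$; the light-cone normalization of $\eta$ and the vanishing of the degenerate column of $\widetilde{e}$ at $p$ are what keep this computation tractable. Everything downstream is the routine index manipulation that I would relegate to Appendix \ref{sec:appendix_long_proofs}.
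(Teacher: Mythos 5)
Your overall strategy---reduce both defining conditions of $\mathcal{S}$ to pointwise linear algebra in the standard basis at $p$, with $\eta$ in light-cone form and $e^a_\mu(p)=\delta^a_\mu$, list the surviving components, and count---is the same as the paper's (carried out there in Lemmas \ref{lem:deftau_W}, \ref{lem:deftau_rho} and \ref{lem:tau_rho_neigh}), your $\epsilon$-dualization of the wedge condition is a nice equivalent packaging, and the count $(N-3)+\binom{N-2}{2}=\tfrac{N(N-3)}{2}$ is correct. But one step of your roadmap, as written, fails: you attribute the relations to the wrong conditions. At $p$ the field $\widetilde{e}$ has no component along the degenerate direction, so a $\widetilde{\varrho}$-equation, labelled by an internal $(N-2)$-multi-index $A$ and a form multi-index $F$, has terms $\tau^{A\cup\{\nu\}}_{F\setminus\{\nu\}}$ with $\nu\in F\cap\{1,\dots,N-2\}$ and $\nu\notin A$. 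A component $Y_\mu$ has internal multi-index missing $\mu$ and form multi-index $\{1,\dots,N-2\}\setminus\{\mu\}$, so for it to occur in such a term one would need simultaneously $\nu\neq\mu$ (internal side) and $\nu=\mu$ (form side): the $\widetilde{\varrho}$-equations at $p$ contain \emph{no} $Y$-type components at all. What $[\tau,\widetilde{e}]=0$ gives is the vanishing of all components outside the $Y/X$ family together with the antisymmetry relations among the $X$'s; the trace relation $\sum_\mu Y_\mu=0$ is instead exactly the equation of $\tau\wedge e=0$ with missing form index $N-1$ (Lemma \ref{lem:deftau_W} with $k=N-1$), the remaining wedge equations being redundant modulo the $\widetilde{\varrho}$-equations. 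Your final dimension is insensitive to this because you impose both conditions anyway, but the step ``imposing $[\tau,\widetilde{e}]=0$ \dots gives the trace relation'' is false and the derivation must be reorganized accordingly.

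The second, more substantial gap is that your entire component analysis is performed at the single point $p$, where $\widetilde{g}^{\partial}=\delta$ and the off-diagonal relation degenerates to $X_{\mu_1}^{\mu_2}=-X_{\mu_2}^{\mu_1}$. The Proposition, however, is a statement on a whole neighbourhood $U$, and the content of the clause ``$X_{\mu_1}^{\mu_2}=f(\widetilde{g}^{\partial},X_{\mu_2}^{\mu_1},Y_\mu)$ with $f$ linear'' lies precisely away from $p$, where $\widetilde{g}^{\partial}$ is no longer diagonal; this is not cosmetic, since the bracket computations in Theorem \ref{thm:Brackets_constraints} differentiate these relations (see the Remark following the Proposition). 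Your fibre-rank reduction settles the dimension, but not the claim that on all of $U$ exactly the same components vanish and the $X$-relations deform linearly in $\widetilde{g}^{\partial}$: for that one must redo the $\widetilde{\varrho}$-analysis with a variable, merely non-degenerate $\widetilde{g}^{\partial}$ and argue by continuity that the relevant square submatrices (with determinant equal to $\det\widetilde{g}^{\partial}\neq0$) stay invertible on a possibly shrunk $U$---this is the paper's Lemma \ref{lem:tau_rho_neigh}, which is absent from your plan. Relatedly, the identity $\widetilde{e}=\sum_{i=1}^{N-2}e_i\,dx^i$ (i.e., $\beta_i=0$ for $i\le N-2$) is not automatic from the definition of $\widetilde{e}$: it must be derived from the normalization condition imposed on $\beta$ via the $\iota_{Y_0^i}$-contractions, which your appeal to Lemma \ref{lem:deftau_rho} silently assumes even though that lemma is itself part of the proof being reconstructed.
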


The proof of this Proposition is postponed to Appendix \ref{sec:appendix_long_proofs}.
\begin{remark}
In order to compute the structure of the Poisson brackets between the constraints we will need to know the equations defining $\mathcal{S}$ not only point-wise but also in a small neighbourhood, since we will need to take derivatives.  Despite being in principle computable for every dimension, we do not need the explicit expression of $f$. It is also worth noting that in the base point $p$ of the normal coordinates the last set of equations reduces to 
\begin{align*}
    X_{\mu_1}^{\mu_2} =- X_{\mu_2}^{\mu_1}.
\end{align*}
\end{remark}

While the space $\mathcal{K}$ and $\mathcal{T}$ arise naturally while considering the symplectic reduction of the boundary two form, the importance of the space $\mathcal{S}$ resides in the following Proposition that shows that  $\mathcal{S}$ plays the role of a \emph{dual space} of $\mathcal{T}$.

\begin{lemma}\label{lem:relationSandT}
    Let $\alpha \in \Omega^{2,1}_\partial$. Then 
    \begin{align*}
        \int_{\Sigma} \tau \alpha =0 \:\: \forall \tau \in \mathcal{S} \Longrightarrow p_{\mathcal{T}}(\alpha)=0.
    \end{align*}
\end{lemma}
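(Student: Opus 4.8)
The plan is to prove the contrapositive via a dimension-counting argument that identifies $\mathcal{S}$ as precisely the annihilator of $\mathcal{T}$ under the pairing $(\tau,\alpha)\mapsto\int_\Sigma\tau\alpha$. First I would observe that the integral $\int_\Sigma\tau\alpha$ for $\tau\in\Omega_\partial^{N-3,N-1}$ and $\alpha\in\Omega_\partial^{2,1}$ lands in $\Omega_\partial^{N-1,N}$, which is top-degree in both the form and internal indices, hence integrable, and that the pairing is nondegenerate at the pointwise (fibrewise) level as a bilinear form on the full spaces $\Omega_\partial^{N-3,N-1}\times\Omega_\partial^{2,1}$. The key is therefore to show that the restriction of this pairing descends to a \emph{perfect} pairing between $\mathcal{S}$ and $\mathcal{T}$, so that the vanishing of $\int_\Sigma\tau\alpha$ for all $\tau\in\mathcal{S}$ forces the component of $\alpha$ along $\mathcal{T}$ to vanish.

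The main steps I would carry out are as follows. First, using Proposition \ref{prop:components_of_tau}, I know $\dim\mathcal{S}=\frac{N(N-3)}{2}$, and by Lemma \ref{lem:varrho12_deg} together with \eqref{e:defTKS} the space $\mathcal{T}=\mathrm{Ker}W_{N-3}^{\partial(2,1)}\cap J$ also has dimension $\frac{N(N-3)}{2}$: indeed $\mathcal{T}$ is the complement $J$ of $\Ima\varrho^{(1,2)}|_{\mathrm{Ker}W_{N-3}^{\partial,(1,2)}}$ inside $\mathrm{Ker}W_{N-3}^{\partial(2,1)}$, and by Lemma \hyperref[lem:kernel12-21]{\ref*{lem:We_boundary}.(\ref*{lem:kernel12-21})} the kernels of $W_{N-3}^{\partial,(1,2)}$ and $W_{N-3}^{\partial,(2,1)}$ have equal dimension, so the quotient by the image of $\varrho^{(1,2)}$ (whose kernel has dimension $\frac{N(N-3)}{2}$ by Lemma \ref{lem:varrho12_deg}) leaves exactly $\frac{N(N-3)}{2}$ dimensions. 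Second, I would compute the pairing in the normal coordinates and standard basis of Proposition \ref{prop:components_of_tau}, writing it in terms of the components $Y_\mu$ and $X_{\mu_1}^{\mu_2}$ of $\tau$ and the corresponding components of $\alpha$; the defining equations $W_1^{\partial(N-3,N-1)}\tau=0$ and $\widetilde\varrho^{(N-3,N-1)}\tau=0$ of $\mathcal{S}$ are exactly engineered so that $\tau$ pairs trivially with everything in the image of $\varrho^{(1,2)}$ and with $\Ima(W_{N-3}^{\partial,(2,1)})^{\perp}$-type directions, leaving a nondegenerate pairing against the $\mathcal{T}$-components of $\alpha$. Third, with matching dimensions and nondegeneracy of the induced pairing on $\mathcal{T}$, the vanishing of $\int_\Sigma\tau\alpha$ for all $\tau\in\mathcal{S}$ yields $p_{\mathcal{T}}(\alpha)=0$.

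The hard part will be verifying that the pairing between $\mathcal{S}$ and $\mathcal{T}$ is genuinely nondegenerate (perfect) rather than merely well-defined. Concretely, I must check two things simultaneously: that every $\tau\in\mathcal{S}$ annihilates the complementary directions (so that $\int_\Sigma\tau\alpha$ depends on $\alpha$ only through $p_{\mathcal{T}}(\alpha)$), and that no nonzero $p_{\mathcal{T}}(\alpha)$ is annihilated by all of $\mathcal{S}$. The first follows because $\mathcal{S}\subset\mathrm{Ker}W_1^{\partial(N-3,N-1)}$ kills the $\Ima(e\wedge\cdot)$-directions and $\mathcal{S}\subset\mathrm{Ker}\widetilde\varrho^{(N-3,N-1)}$ kills (after integration by parts against the $\widetilde e$-twisted bracket) the $\Ima\varrho^{(1,2)}$-directions that are quotiented out in forming $J$. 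The second is where the dimension match does the real work: since $\dim\mathcal{S}=\dim\mathcal{T}$ and the pairing is the restriction of a fibrewise-nondegenerate top-form pairing, it suffices to exhibit that the explicit $Y_\mu$, $X_{\mu_1}^{\mu_2}$ components of $\mathcal{S}$ span the dual of the $\mathcal{T}$-components. I would carry out this last verification pointwise in the normal coordinates of Proposition \ref{prop:components_of_tau}, where the constraints on $\tau$ reduce to $\sum_\mu Y_\mu=0$ and $X_{\mu_1}^{\mu_2}=-X_{\mu_2}^{\mu_1}$, matching exactly the free parameters of $\mathcal{T}$, and then invoke continuity to extend from the base point to a neighbourhood.
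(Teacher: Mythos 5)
Your plan is, at its core, the paper's own proof in different packaging: the paper also works in normal coordinates with the standard basis, uses Proposition \ref{prop:components_of_tau} for the components of $\mathcal{S}$ together with the component description of $\mathcal{T}$ extracted from the proof of Lemma \ref{lem:We_boundary}, tests $\int_\Sigma\tau\alpha=0$ against elements of $\mathcal{S}$ concentrated in single free components ($X$-type and $Y$-type), and observes that the resulting pointwise equations are exactly the vanishing of the combinations of $\alpha^{\mu_1}_{N-1\,\mu_2}$ and $\alpha^{\mu}_{N-1\,\mu}$ that constitute $p_{\mathcal{T}}(\alpha)$. Your dimension count $\dim\mathcal{S}=\dim\mathcal{T}=\frac{N(N-3)}{2}$ is correct (it uses $\Ima \varrho^{(1,2)}|_{\mathrm{Ker}W_{N-3}^{\partial,(1,2)}}\subseteq\mathrm{Ker}W_{N-3}^{\partial,(2,1)}$, which holds), and recasting the conclusion as nondegeneracy of the induced pairing on $\mathcal{S}\times\mathcal{T}$ is a legitimate reorganization; but the dimension match does no real work by itself — the component verification remains the whole content, exactly as in the paper.

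The genuine gap is in the first half of your ``hard part'': the claim that $\int_\Sigma\tau\alpha$ depends on $\alpha$ only through $p_{\mathcal{T}}(\alpha)$ is not established by the two reasons you give. First, $\mathrm{Ker}W_1^{\partial,(N-3,N-1)}$ is, by adjointness of wedging with $e$, the annihilator of $e\wedge\Omega^{1,0}_\partial\subset\Omega^{2,1}_\partial$, which has rank $N-1$, whereas the complement of $\mathrm{Ker}W_{N-3}^{\partial,(2,1)}$ has rank $\frac{N(N-1)}{2}$; in particular the components of $\alpha$ along $e_n$ are invisible to this argument. Indeed, the equations cutting out $\mathrm{Ker}W_1^{\partial,(N-3,N-1)}$ (Lemma \ref{lem:deftau_W}) constrain only those components of $\tau$ carrying the internal index $N$, so there are elements of this kernel that pair nontrivially with the $\alpha^{N}_{\mu\nu}$-directions; what kills these directions is the other half of the definition of $\mathcal{S}$, i.e.\ precisely Proposition \ref{prop:components_of_tau} (every nonzero component of $\tau\in\mathcal{S}$ carries both internal indices $N$ and $N-1$ and only tangential form indices), a fact you never invoke at this stage. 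Second, for the $\Ima\varrho^{(1,2)}|_{\mathrm{Ker}W_{N-3}^{\partial,(1,2)}}$-directions there is no derivative to integrate by parts: one uses invariance of the pairing, $\int_\Sigma\tau[v,e]=\pm\int_\Sigma[\tau,e]\,v$, and $[\tau,\widetilde e]=0$ only gives $[\tau,e]=[\tau,\widehat e]$, which is not manifestly orthogonal to $v$; the missing input is Lemma \ref{lem:[tau,e]inImW}, which yields $[\tau,e]=e^{N-3}\gamma$ and hence $\int_\Sigma[\tau,e]\,v=\pm\int_\Sigma\gamma\,e^{N-3}v=0$ for $v\in\mathrm{Ker}W_{N-3}^{\partial,(1,2)}$. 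Both defects are repairable — by exactly the component computation you postpone to the nondegeneracy step, which is in fact the entirety of the paper's proof — but as written your step (i) does not stand on the justification provided.
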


We conclude this section with a result that will be necessary in the computation of the Hamiltonian vector fields of the constraints and in their Poisson brackets.

\begin{lemma}\label{lem:[tau,e]inImW}
\begin{align*}
  \Ima \varrho^{(N-1,N-3)}|_{\mathcal{S}} \subset \Ima W_{N-3}^{\partial, (1,1)}  .
\end{align*}
\end{lemma}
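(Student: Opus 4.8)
The plan is to exploit the two defining conditions of $\mathcal{S}$ from \eqref{e:defS}, namely $\tau\wedge e=0$ and $\widetilde\varrho^{(N-3,N-1)}(\tau)=[\tau,\widetilde e]=0$, to rewrite the bracket $[\tau,e]$ in a form manifestly divisible by $e^{N-3}$. First I would split $e=\widetilde e+\widehat e$ with $\widehat e=\beta\,\iota_X e$. Since $\tau\in\mathrm{Ker}\,\widetilde\varrho^{(N-3,N-1)}$ we have $[\tau,\widetilde e]=0$, so
\[ [\tau,e]=[\tau,\widehat e]. \]
Because $\beta\in\Omega_\partial^{1,0}$ carries no internal index, it factors out of the internal bracket (up to the sign dictated by the total degree of $\tau$), giving
\[ [\tau,e]=\pm\,\beta\wedge[\tau,\iota_X e], \]
where $[\tau,\iota_X e]\in\Omega_\partial^{N-3,N-2}$ is the contraction of the last internal index of $\tau$ against $\iota_X e$ via $\eta$. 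This already isolates the whole bracket into a single $(N-3)$-form wedged with the fixed one-form $\beta$, which is the clean structural content of the statement, and it lands in $\Omega_\partial^{N-2,N-2}$, the target of $W_{N-3}^{\partial,(1,1)}$.

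It remains to show that $\beta\wedge[\tau,\iota_X e]$ lies in $\Ima W_{N-3}^{\partial,(1,1)}=\{\gamma e^{N-3}:\gamma\in\Omega_\partial^{1,1}\}$. Here I would pass to normal coordinates centred at a point and the standard basis of $\mathcal{V}_\Sigma$, and invoke Proposition \ref{prop:components_of_tau}, which lists the only non-zero components of $\tau$ as $Y_\mu$ and $X_{\mu_1}^{\mu_2}$ and records the constraints $\sum_\mu Y_\mu=0$ together with $X_{\mu_1}^{\mu_2}=f(\widetilde g^\partial,X_{\mu_2}^{\mu_1},Y_\mu)$ (reducing to $X_{\mu_1}^{\mu_2}=-X_{\mu_2}^{\mu_1}$ at the base point). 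Computing the components of $\beta\wedge[\tau,\iota_X e]$ from these and comparing with the components of a generic $\gamma e^{N-3}$, one reads off the candidate $\gamma\in\Omega_\partial^{1,1}$ and checks that the remaining equations hold precisely because of the trace condition $\sum_\mu Y_\mu=0$ and the antisymmetry encoded in $X_{\mu_1}^{\mu_2}$. Since $W_{N-3}^{\partial,(1,1)}$ is injective by Lemma \hyperref[lem:Wep11]{\ref*{lem:We_boundary}.(\ref*{lem:Wep11})}, the $\gamma$ so determined is unique, which makes the matching well posed.

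The main obstacle is exactly this last verification: one must show that the specific components forced on $\tau$ by membership in $\mathcal{S}$ are those that make the internally contracted object proportional to $e^{N-3}$ after wedging with $\beta$. The delicate point is that the required identity is algebraic in the components, so the argument can be localized at each point and reduced to a linear-algebra identity between the constraints of Proposition \ref{prop:components_of_tau} and the image of $W_{N-3}^{\partial,(1,1)}$; the dependence of $f$ on $\widetilde g^\partial$ away from the base point does not interfere, since both sides are evaluated with the same coframe. As a coordinate-free cross-check I would dualize the internal index using $\bigwedge^{N-1}\mathcal{V}_\Sigma\cong\mathcal{V}_\Sigma$ and recast $\tau\wedge e=0$ and $[\tau,\widetilde e]=0$ as linear conditions on the dual object, which should make the factor $e^{N-3}$ appear more transparently and confirm the component computation.
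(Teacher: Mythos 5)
Your proposal follows essentially the same route as the paper's proof: both use the defining condition $[\tau,\widetilde e]=0$ to reduce $[\tau,e]$ to $[\tau,\widehat e]$ (your explicit factorization $\pm\beta\wedge[\tau,\iota_X e]$ is exactly what the paper computes in coordinates, where $\widehat e=dx^{N-1}e_{N-1}$), and then match components against $\Ima W_{N-3}^{\partial,(1,1)}$ using Proposition \ref{prop:components_of_tau}, with the trace and antisymmetry constraints doing the work. The final component verification that you flag as the main obstacle is likewise left as a ``straightforward check'' in the paper, so your attempt is faithful to the original argument.
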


\begin{corollary} \label{cor:components_of_W-1[tau,e]}
    The free components of $W_{N-3}^{-1}([\tau,e])$ are 
    \begin{align*}
        [W_{N-3}^{-1}([\tau,e])]_{\mu_1}^{\mu_2} & \propto X_{\mu_1}^{\mu_2} \\
        [W_{N-3}^{-1}([\tau,e])]_{\mu}^{\mu} & \propto Y_{\mu} 
    \end{align*}
    such that $\sum_{\mu=1}^{N-2} [W_{N-3}^{-1}([\tau,e])]_{\mu}^{\mu}=0 $ and $[W_{N-3}^{-1}([\tau,e])]_{\mu_1}^{\mu_2}=-[W_{N-3}^{-1}([\tau,e])]_{\mu_2}^{\mu_1}$.
\end{corollary}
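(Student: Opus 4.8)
The plan is to read the statement off from Lemma \ref{lem:[tau,e]inImW} by an explicit computation in coordinates. First, Lemma \ref{lem:[tau,e]inImW} guarantees that $[\tau,e]\in\Ima W_{N-3}^{\partial,(1,1)}$, and by Lemma \hyperref[lem:Wep11]{\ref*{lem:We_boundary}.(\ref*{lem:Wep11})} the map $W_{N-3}^{\partial,(1,1)}$ is injective; hence the preimage $\zeta:=W_{N-3}^{-1}([\tau,e])\in\Omega_\partial^{1,1}$ is a single well-defined element, and it makes sense to speak of its components $\zeta^a_\mu$ in the standard basis. The whole task is then to identify which of these are nonzero and to relate them to the $X_{\mu_1}^{\mu_2}$ and $Y_\mu$ of Proposition \ref{prop:components_of_tau}.

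Second, I would fix normal coordinates centred at $p$ together with the standard basis of $\mathcal{V}_\Sigma$, exactly as in Proposition \ref{prop:components_of_tau}, so that $e^a_\mu=\delta^a_\mu$ on $\Sigma$ (with the transversal row $e^N_\mu$ vanishing). With these choices I would write $[\tau,e]=\varrho^{(N-3,N-1)}(\tau)$ in coordinates using the definition of $\varrho$, inserting the component description of $\tau\in\mathcal{S}$ from Proposition \ref{prop:components_of_tau}, so that only the contributions built from $X_{\mu_1}^{\mu_2}$ and $Y_\mu$ survive. Since $\tau\in\mathrm{Ker}\,\widetilde{\varrho}^{(N-3,N-1)}$ one has $[\tau,\widetilde{e}]=0$, and because the bracket is linear in the coframe slot this gives $[\tau,e]=[\tau,\widehat{e}]$; this isolates the relevant contractions and keeps the computation manageable.

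Third, I would expand $W_{N-3}^{\partial,(1,1)}(\zeta)=\zeta\wedge e^{N-3}$ in the same coordinates for a generic $\zeta\in\Omega_\partial^{1,1}$, obtaining its components as antisymmetrised combinations of the $\zeta^a_\mu$ and of $e^b_\nu=\delta^b_\nu$. Matching these with the coordinate expression of $[\tau,e]$ obtained in the previous step, and invoking the injectivity of $W_{N-3}^{\partial,(1,1)}$ to ensure uniqueness of the solution, I would solve for $\zeta^a_\mu$. The outcome should be that the only components not forced to vanish are the tangential ones $\zeta^{\mu_2}_{\mu_1}$ and $\zeta^\mu_\mu$ with indices in $\{1,\dots,N-2\}$, each off-diagonal one proportional to the corresponding $X_{\mu_1}^{\mu_2}$ and each diagonal one proportional to the corresponding $Y_\mu$. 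The relations $\sum_{\mu=1}^{N-2}Y_\mu=0$ and, at the base point $p$, $X_{\mu_1}^{\mu_2}=-X_{\mu_2}^{\mu_1}$ supplied by Proposition \ref{prop:components_of_tau} then transfer verbatim to $\zeta$, yielding the trace and antisymmetry conditions in the statement.

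The main obstacle is the combinatorial bookkeeping of the antisymmetrisations in both $\varrho$ and $W_{N-3}^{\partial,(1,1)}$, compounded by the fact that in the standard basis the internal metric $\eta_{ab}$ coincides with the (tangentially degenerate) boundary metric, so that the contraction $\eta_{bc}e^c_\nu$ mixes internal directions through $\widetilde{g}^\partial$. The delicate point is to verify that no cross-terms appear, i.e.\ that $\zeta^{\mu_2}_{\mu_1}$ depends only on $X_{\mu_1}^{\mu_2}$ and $\zeta^\mu_\mu$ only on $Y_\mu$; this is precisely where the specific pattern of nonzero components of $\mathcal{S}$ from Proposition \ref{prop:components_of_tau} is essential. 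Since the statement only asserts proportionality, I would not track the precise constants, which lets me avoid computing the function $f$ explicitly.
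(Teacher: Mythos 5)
Your proposal is correct and follows essentially the same route as the paper: a coordinate computation in the standard basis that matches the expansion of $W_{N-3}^{\partial,(1,1)}$ applied to a generic element of $\Omega_{\partial}^{1,1}$ against the components of $[\tau,e]$ obtained in the proof of Lemma \ref{lem:[tau,e]inImW} (via $[\tau,\widetilde{e}]=0$), and then transfers the trace and antisymmetry relations from Proposition \ref{prop:components_of_tau}. The only point you gloss over, which the paper spells out, is that the diagonal matching initially yields the relations $\sum_{\mu\neq\nu}[W_{N-3}^{-1}([\tau,e])]_{\mu}^{\mu}=Y_{\nu}$, which must then be summed over $\nu$ (using $Y_{N-1}=0$ and $\sum_{\nu}Y_{\nu}=0$) to disentangle the individual proportionalities $[W_{N-3}^{-1}([\tau,e])]_{\mu}^{\mu}\propto Y_{\mu}$; this is an elementary linear step subsumed in your ``solve for $\zeta^a_\mu$''.
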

{ The proofs of these lemmas and of the corollary are collected in Appendix \ref{sec:appendix_long_proofs}.}

\section{Boundary structure and known results}\label{sec:boundary_known_results}
{
In this section we give an overview about the symplectic boundary structure of Palatini--Cartan gravity induced from the bulk using the construction introduced by Kijowski and Tulczijew \cite{KT1979}. In other words we give a description of the \emph{geometric phase space}, i.e., the natural space of fields associated to the boundary before imposing the constraints, and describe
the symplectic reduction that produces the reduced phase space. Referring to the table \ref{t:steps} in the overview, we give information about the first step $(\mathcal{F},S) \rightarrow (\check{\mathcal{F},\check{\varpi}},\check{\mathsf{C}})$ and about the geometric phase space. This part is common to both the non-degenerate (space-like or time-like) case and degenerate cases (light-like).
}

We dedicate this section to the common framework of the two cases and to the non-degenerate one by recalling the most important steps and results. This will be particularly useful, since the analysis of the degenerate case will start from these results trying to solve the various issues arising from the different structural constraints that we will choose. In particular the crucial difference will come from the different outcome of Lemma \ref{lem:varrho12_deg} in the degenerate and non-degenerate cases.

The investigation of the Hamiltonian formulation follows, as explained before, the construction introduced by Kijowski and Tulczijew \cite{KT1979}. The starting point is the description of what we call \emph{geometric phase space} ${F}^{\partial}_{PC}$. This step is fully detailed in \cite{CS2019}. We consider the restriction of the fields $e$ and $\omega$ to the boundary $\Sigma$ and reinterpret them { respectively} as an injective bundle map $T\Sigma\to \mathcal{V}_{\Sigma}$ (that we will call \emph{boundary coframe}) and an orthogonal connection associated to $\mathcal{V}_{\Sigma}$.  We call $\widetilde{F}_{PC}$ the space of these fields, i.e. the space of the restriction of the bulk fields to the boundary. The key point of the construction is  to define a one-form on the space $\widetilde{F}_{PC}$ as the boundary term arising from the variation of the action  through the formula 
\begin{align*}
    \delta S = \mathcal{EL} + \pi^* \check{\alpha}
\end{align*}
where $\mathcal{EL}$ are the parts defining the Euler-Lagrange equation and $\pi$ is the restriction to the boundary.

In our case we get 
\begin{align*}
    \check{\alpha} = {
\frac{1}{(N-2)!}}\int_{\Sigma}  e^{N-2}\delta\omega.
\end{align*}
From this one-form it is possible to construct a closed two-form by applying the de Rham differential $\delta$ (of the space of fields):
\begin{align*}
    \check{\varpi}= \delta \check{\alpha}  = {
\frac{1}{(N-3)!}}\int_{\Sigma}  e^{N-3}\delta e\delta\omega.
\end{align*}
This two-form is a candidate to be a symplectic form on the space of boundary fields; however, it is degenerate, since the function 
$W_{N-3}^{\partial, (1,1)}$ has a non-zero kernel (Lemma \ref{lem:We_boundary}): the kernel is parametrized by the vector fields {
$X= v \frac{\delta}{\delta \omega} \in \mathfrak{X}(\widetilde{F}_{PC})$} with $v$ such that
\begin{equation}\label{e:Xomega}
e^{N-3} v=0.
\end{equation} 
In order to get a symplectic form, we can perform a symplectic reduction by quotienting along the kernel. The \emph{geometric phase space} of boundary fields, determined by the reduction 
\begin{equation}\label{e:presymplecticreduction}
\pi_{PC}\colon \widetilde{F}_{PC} \longrightarrow {F}^{\partial}_{PC},
\end{equation}
is then parametrized by the field $e$ and by the equivalence classes of $\omega$ under the relation $\omega \sim \omega + v$ with $v$ satisfying \eqref{e:Xomega}. We denote by $\mathcal{A}^{red}(\Sigma)$ the space of such equivalence classes. Then the symplectic form on  ${F}^{\partial}_{PC}$ is given by
\begin{align}\label{e:classical-boundary-symplform}
\varpi^{\partial}_{PC} = \int_{\Sigma} e^{N-3} \delta e \delta [\omega]
\end{align}
{where we dropped the unimportant prefactor 
$\frac{1}{(N-3)!}$.}

The symplectic space $({F}^{\partial}_{PC},\varpi^{\partial}_{PC})$ is the space on which we can define the constraints and subsequently perform a reduction over them to get the reduced phase space. The constraints are now to be recovered from the restriction of the Euler--Lagrange equation on the bulk to the boundary. In particular we have to consider those equations not containing derivatives in the transversal direction, i.e. the evolution equations.

However, some obstruction might occur. We performed a reduction to get the symplectic form \eqref{e:classical-boundary-symplform}, yet the restriction of the functions whose zero-locus defines the Euler--Lagrange equations might not be basic with respect to it { i.e. it might not be possible to write such restrictions in terms of the variables of the reduced symplectic space ${F}^{\partial}_{PC}$}.  This is exactly what happens in our case: a simple check shows that the candidates to be the constraints coming from \eqref{e:ee} are not invariant under the transformation $\omega \mapsto \omega +v$. The way out proposed in \cite{CCS2020} for the non-degenerate case is to fix a convenient representative of the equivalence class $[\omega]$ and work out the details with it. In the next section we will recap the strategy and present the most important steps. This will turn to be useful also in the degenerate case.

\subsection{Non-degenerate boundary metric} \label{sec:recap_non-deg_case}
We recall here the steps to get the reduced phase space in the non-degenerate case as developed in \cite{CCS2020}. We refer to this work for the proofs and details that are omitted here. 

As already mentioned, we define  $e_n$ as a section of $\mathcal{V}_{\Sigma}$ that is a completion of the basis $e_1,e_2,\dots , e_{N-1}$. Then we have the following two results:
\begin{lemma}\label{lem:Omega2,1_d4}
    Let now $g^{\partial}$ be non-degenerate and let $\alpha \in \Omega^{2,1}_\partial$. Then $\alpha=0$ if and only if
    \begin{align}\label{e:ConditionforOmega21_d4}
        \begin{cases}
            e^{N-3}\alpha =0 \\
            e_n e^{N-4}\alpha \in \Ima W_{N-3}^{\partial, (1,1)}
        \end{cases}.
    \end{align}
\end{lemma}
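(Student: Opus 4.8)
The forward implication is trivial: if $\alpha=0$ then $e^{N-3}\alpha=0$ and $e_ne^{N-4}\alpha=0\in\Ima W_{N-3}^{\partial,(1,1)}$. For the converse, note first that both conditions are algebraic, so the statement is pointwise on the fibres of $\mathcal V_\Sigma$ and may be analysed in the standard basis adapted to $e$, where $e^a_\mu=\delta^a_\mu$ on tangent indices and non-degeneracy of $g^\partial$ means that $(\eta(e_\mu,e_\nu))_{\mu,\nu\le N-1}$ is invertible. The first condition says exactly $\alpha\in\Ker{W_{N-3}^{\partial,(2,1)}}$. For the second I would dualise: the elementary wedge pairings between complementary bidegrees (for instance $\Omega_\partial^{1,1}\times\Omega_\partial^{N-2,N-1}\to\Omega_\partial^{N-1,N}$ and $\Omega_\partial^{N-2,N-2}\times\Omega_\partial^{1,2}\to\Omega_\partial^{N-1,N}$, both between equidimensional spaces) are perfect, and under them the annihilator of $\Ima W_{N-3}^{\partial,(1,1)}=\{\beta e^{N-3}\}$ is $\Ker{W_{N-3}^{\partial,(1,2)}}$, since $\int(\beta e^{N-3})\psi=\pm\int\beta\,(e^{N-3}\psi)$ vanishes for all $\beta\in\Omega_\partial^{1,1}$ iff $e^{N-3}\psi=0$ (injectivity of $W_{N-3}^{\partial,(1,1)}$, Lemma \hyperref[lem:Wep11]{\ref*{lem:We_boundary}.(\ref*{lem:Wep11})}, fixes the dimensions). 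Hence the second condition reads $\int_\Sigma e_ne^{N-4}\alpha\,\psi=0$ for all $\psi\in\Ker{W_{N-3}^{\partial,(1,2)}}$, and the whole lemma reduces to the claim that
\[
    B(\alpha,\psi)=\int_\Sigma e_n\,e^{N-4}\,\alpha\,\psi ,\qquad \alpha\in\Ker{W_{N-3}^{\partial,(2,1)}},\ \psi\in\Ker{W_{N-3}^{\partial,(1,2)}},
\]
is non-degenerate in $\alpha$. By Lemma \hyperref[lem:kernel12-21]{\ref*{lem:We_boundary}.(\ref*{lem:kernel12-21})} the two kernels have the same dimension, so non-degeneracy in one slot is equivalent to non-degeneracy in the other.

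To prove non-degeneracy of $B$ I would trivialise the left slot by means of $\varrho^{(1,2)}$. Since $\varrho^{(1,1)}(e)=[e,e]=0$ and $\varrho$ obeys a Leibniz rule with respect to the wedge product, one has $W_{N-3}^{\partial,(2,1)}(\varrho^{(1,2)}(v))=\pm\,\varrho^{(N-2,N-1)}(W_{N-3}^{\partial,(1,2)}(v))$, so $\varrho^{(1,2)}$ maps $\Ker{W_{N-3}^{\partial,(1,2)}}$ into $\Ker{W_{N-3}^{\partial,(2,1)}}$; in the non-degenerate case it is moreover injective there because $\mathcal K=0$ (Remark \ref{rmk:meaningofspaces}), hence an isomorphism between the two equidimensional kernels. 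Writing $\alpha=\varrho^{(1,2)}(v)=[v,e]$, the problem becomes non-degeneracy of $\widetilde B(v,\psi)=\int_\Sigma e_ne^{N-4}[v,e]\,\psi$ on $\Ker{W_{N-3}^{\partial,(1,2)}}$, which I would settle by an explicit component computation in the standard basis: after dualising the form indices, $\Ker{W_{N-3}^{\partial,(2,1)}}$ is carried by the ``symmetric'' block of a tangent two-tensor, while the pairing against $\Ker{W_{N-3}^{\partial,(1,2)}}$ detects precisely its complementary ``antisymmetric'' block (of the form $\gamma\wedge(\,\cdot\,)$), so that only the zero tensor lies in both. For $N=4$ this is the observation that a $3\times3$ matrix which is simultaneously symmetric and antisymmetric vanishes; the invertibility of $(\eta_{\mu\nu})_{\mu,\nu\le N-1}$ is what lets one invert the contraction hidden in $[\,\cdot\,,e]$, and it is exactly this step that degenerates along $\mathcal K\neq0$ when $g^\partial$ is degenerate.

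I expect the genuine obstacle to be the general-$N$ bookkeeping of this last computation: writing $\Ker{W_{N-3}^{\partial,(2,1)}}$ and $\Ima W_{N-3}^{\partial,(1,1)}$ out in components and verifying that the ``symmetric'' and ``antisymmetric'' pieces singled out by the two conditions are genuinely complementary. The dimension count
\[
    \dim\Omega_\partial^{2,1}=\dim\Omega_\partial^{N-1,N-2}+\operatorname{codim}\Ima W_{N-3}^{\partial,(1,1)}=\frac{N(N-1)(N-2)}{2}
\]
is the organising invariant: it shows that $\alpha\mapsto\big(e^{N-3}\alpha,\,[\,e_ne^{N-4}\alpha\,]\big)$ is a map between spaces of equal dimension, so proving injectivity (the assertion of the lemma) is equivalent to proving surjectivity, and either one closes the argument.
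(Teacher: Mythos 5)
The paper never proves this lemma itself: it is recalled from \cite{CCS2020} (whose proof is a direct component computation in an adapted basis), so your proposal has to stand on its own. Its first half does. The forward direction, the observation that everything is pointwise algebra on the fibres, the perfectness of the two wedge pairings, the identification $\mathrm{Ann}\bigl(\Ima W_{N-3}^{\partial,(1,1)}\bigr)=\Ker{W_{N-3}^{\partial,(1,2)}}$ together with the double-annihilator step, and hence the equivalence of the lemma with non-degeneracy, in the $\alpha$-slot, of
\[
B(\alpha,\psi)=\int_\Sigma e_n\,e^{N-4}\,\alpha\,\psi,\qquad \alpha\in\Ker{W_{N-3}^{\partial,(2,1)}},\ \psi\in\Ker{W_{N-3}^{\partial,(1,2)}},
\]
are all correct, as is the remark that equidimensionality of the two kernels makes injectivity and surjectivity interchangeable. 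But this non-degeneracy \emph{is} the lemma: the duality reorganizes the problem without solving it. At exactly that point you stop proving and start describing: the $N=4$ picture (symmetric $3\times3$ block for $\Ker{W_{N-3}^{\partial,(2,1)}}$, pairing that sees precisely the symmetric part, so ``symmetric $\cap$ antisymmetric $=0$'' closes the case) is indeed what the component computation gives, but it is only sketched, and the general-$N$ computation is explicitly deferred — you yourself call it ``the genuine obstacle''. Since the statement is claimed for all $N\geq 4$, this is a genuine gap: the core algebraic fact is asserted, not established.

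There is also a conceptual misdiagnosis worth flagging because it touches the central theme of this paper. Conditions \eqref{e:ConditionforOmega21_d4}, the kernels, and the pairing $B$ are built from wedge products with $e$ and $e_n$ alone; they are preserved when a fibrewise automorphism $A\in GL(\mathcal{V}_\Sigma)$ is applied simultaneously to $e$, $e_n$, $\alpha$ (and $\gamma$), and any two pairs (boundary coframe, transversal completion) are related by such an $A$. Hence the truth of the statement cannot depend on $g^\partial=e^*\eta$ at all — consistently, your own sketched block computation never invokes the metric. Non-degeneracy of $g^\partial$ is genuinely needed only for the companion \emph{existence} results (Lemma \ref{lem:Omega2,2_d4}, Theorem \ref{thm:omegadecomposition}), where $\varrho^{(1,2)}=[\,\cdot\,,e]$ — and with it $\eta$ — actually enters, and where $\mathcal{K}\neq 0$ destroys surjectivity in the light-like case. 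Your detour through $\varrho^{(1,2)}$ (writing $\alpha=[v,e]$) is therefore internally inconsistent with your endgame: it is the only place your argument would use the hypothesis, yet the computation you then describe works directly on $\Ker{W_{N-3}^{\partial,(2,1)}}$ and never needs it. In particular the closing claim that ``it is exactly this step that degenerates along $\mathcal{K}\neq 0$'' conflates the failure of this lemma (which, as stated, does not fail for degenerate $g^\partial$) with the failure of the representative-fixing decomposition — which is the actual reason the paper must weaken the structural constraint and introduce $R_\tau$.
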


\begin{lemma}\label{lem:Omega2,2_d4}
    Let $\beta \in \Omega^{N-2,N-2}_\partial$. If $g^\partial$ is nondegenerate, there exists a unique $v \in \mathrm{Ker} W_{N-3}^{\partial, (1,2)}$ and a unique $\gamma \in \Omega_{\partial}^{1,1}$ such that 
    \begin{align*}
        \beta = e^{N-3} \gamma + e_n e^{N-4} [v, e].
    \end{align*}
\end{lemma}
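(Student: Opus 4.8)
The plan is to read the right–hand side as a $C^\infty(\Sigma)$–linear map
\[
\Phi\colon \Omega_\partial^{1,1}\oplus\mathrm{Ker}\, W_{N-3}^{\partial,(1,2)}\longrightarrow \Omega_\partial^{N-2,N-2},\qquad (\gamma,v)\longmapsto e^{N-3}\gamma+e_n e^{N-4}[v,e],
\]
and to show it is bijective; existence and uniqueness of $(\gamma,v)$ are then exactly surjectivity and injectivity of $\Phi$. Since all the operations involved ($e^{N-3}\wedge\cdot$, $[\cdot,e]$ and $e_n\wedge\cdot$) are $C^\infty(\Sigma)$–linear bundle maps, bijectivity may be checked fibrewise. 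A dimension count shows that source and target have the same rank: surjectivity of $W_{N-3}^{\partial,(1,2)}$ (Lemma \ref{lem:We_boundary}.(\ref{lem:Wep21,12})) gives $\dim \mathrm{Ker}\, W_{N-3}^{\partial,(1,2)}=\tfrac{N(N-1)(N-3)}{2}$, so that $\dim\Omega_\partial^{1,1}+\dim\mathrm{Ker}\,W_{N-3}^{\partial,(1,2)}=\tfrac{N(N-1)^2}{2}=\dim\Omega_\partial^{N-2,N-2}$. Hence it suffices to prove that $\Phi$ is injective.

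To this end I would split $\bigwedge^{N-2}\mathcal{V}_\Sigma$ according to whether the transversal section $e_n$ occurs, writing $\gamma=\gamma^{(E)}+\gamma_n e_n$ with $\gamma^{(E)}$ taking values in $E:=e(T\Sigma)$ and $\gamma_n$ a scalar one–form. Assume $\Phi(\gamma,v)=0$. The term $e_n e^{N-4}[v,e]$ always carries an explicit factor $e_n$, so the component of the equation with values in $\bigwedge^{N-2}E$ reads $e^{N-3}\gamma^{(E)}=W_{N-3}^{\partial,(1,1)}\gamma^{(E)}=0$, and injectivity of $W_{N-3}^{\partial,(1,1)}$ (Lemma \ref{lem:We_boundary}.(\ref{lem:Wep11})) yields $\gamma^{(E)}=0$.

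It remains to treat the $e_n$–component. Because $e_n\wedge e_n=0$, only the part $[v,e]_E$ of $[v,e]$ with values in $E$ survives, and cancelling the (injective) factor $e_n\wedge\cdot$ on $\bigwedge^{N-3}E$–valued forms leaves, up to sign, $e^{N-3}\gamma_n+e^{N-4}[v,e]_E=0$. Wedging once more with $e$ and using injectivity of $W_{N-4}^{\partial,(2,1)}$ (Lemma \ref{lem:We_boundary}.(\ref{lem:We5p21}); for $N=4$ the conclusion is immediate) gives $[v,e]_E=-\gamma_n\wedge e$, whence $e_n e^{N-4}[v,e]=-e_n\gamma_n\wedge e^{N-3}=-W_{N-3}^{\partial,(1,1)}(e_n\gamma_n)\in\Ima W_{N-3}^{\partial,(1,1)}$.

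Now I would apply Lemma \ref{lem:Omega2,1_d4} to $\alpha=[v,e]$. Its second condition, $e_n e^{N-4}[v,e]\in\Ima W_{N-3}^{\partial,(1,1)}$, has just been checked; its first condition, $e^{N-3}[v,e]=0$, follows from $v\in\mathrm{Ker}\,W_{N-3}^{\partial,(1,2)}$ because $[\cdot,e]$ acts as a graded derivation on wedge products and $\varrho^{(1,1)}(e)=[e,e]=0$ (by symmetry of $\eta$), so that $e^{N-3}[v,e]=\pm[e^{N-3}v,e]=0$. Lemma \ref{lem:Omega2,1_d4} then forces $[v,e]=0$, i.e. $v\in \mathrm{Ker}\,\varrho^{(1,2)}\cap\mathrm{Ker}\,W_{N-3}^{\partial,(1,2)}=\mathcal{K}$. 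The main obstacle—and the decisive place where nondegeneracy of $g^\partial$ is indispensable—is precisely this reduction to $\mathcal{K}$: in the nondegenerate case $\mathcal{K}=\{0\}$ (Remark \ref{rmk:meaningofspaces}, to be contrasted with Lemma \ref{lem:varrho12_deg} in the light-like case), so $v=0$, and then $\gamma_n\wedge e=-[v,e]_E=0$ gives $\gamma_n=0$. This establishes injectivity, and by the equality of dimensions $\Phi$ is bijective, proving the claimed existence and uniqueness.
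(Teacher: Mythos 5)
Your argument is correct, but it is organized differently from the proof the paper relies on. The paper does not reprove this lemma (it is recalled from \cite{CCS2020}), yet the shape of that proof can be read off from how the paper proves its degenerate analogue, Lemma \ref{lem:Omega2,2_d4_mod}: there one first shows that every $\beta\in\Omega^{N-2,N-2}_\partial$ can be written as $\beta=e^{N-3}\gamma+e_ne^{N-4}\alpha$ with $\alpha\in\mathrm{Ker}\,W^{\partial,(2,1)}_{N-3}$, and then uses that in the nondegenerate case $\varrho^{(1,2)}$ restricts to a bijection $\mathrm{Ker}\,W^{\partial,(1,2)}_{N-3}\to\mathrm{Ker}\,W^{\partial,(2,1)}_{N-3}$ --- injective because $g^\partial$ is nondegenerate, surjective because the two kernels have equal dimension by Lemma \ref{lem:We_boundary}.(\ref{lem:kernel12-21}) --- to write $\alpha=[v,e]$ for a unique $v$. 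You instead treat the whole right-hand side as a single bundle map $\Phi$, prove injectivity (splitting $\mathcal{V}_\Sigma=e(T\Sigma)\oplus\langle e_n\rangle$, killing $\gamma^{(E)}$ by injectivity of $W^{\partial,(1,1)}_{N-3}$, reducing the $e_n$-component via injectivity of $W^{\partial,(2,1)}_{N-4}$, then feeding $[v,e]$ into Lemma \ref{lem:Omega2,1_d4} and using $\mathcal{K}=0$), and get surjectivity from the rank count. The two routes consume the same nondegeneracy input ($\mathcal{K}=0$ is precisely injectivity of $\varrho^{(1,2)}$ restricted to $\mathrm{Ker}\,W^{\partial,(1,2)}_{N-3}$) and the same dimension bookkeeping; yours buys economy, since no new coordinate computation is needed and Lemma \ref{lem:Omega2,1_d4} does all the work, at the price of hiding the intermediate space $\mathrm{Ker}\,W^{\partial,(2,1)}_{N-3}$, which is exactly the structure the degenerate generalization exploits (the defect space $\mathcal{T}$ of \eqref{e:defTKS} sits inside it, and Lemma \ref{lem:Omega2,2_d4_mod} is obtained by inserting it into that decomposition). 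Your Leibniz argument for $e^{N-3}[v,e]=\pm[e^{N-3}v,e]=0$, resting on $[e,e]=0$, is also a correct justification of a fact the paper uses implicitly.

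Two minor points, neither a gap. First, ``wedging once more with $e$'' misdescribes your own step: you do not multiply $e^{N-3}\gamma_n+e^{N-4}[v,e]_E=0$ by $e$, you recognize it as $W^{\partial,(2,1)}_{N-4}\bigl(\gamma_n e+[v,e]_E\bigr)=0$ and invoke injectivity. Second, for the rank count to upgrade injectivity to bijectivity you need injectivity in each fibre, not merely on sections; your argument does hold fibrewise, because every lemma you invoke is a pointwise algebraic statement about constant-rank bundle maps, but this deserves to be said explicitly.
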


The key idea is to use these results to fix a representative for the equivalence class $[\omega] \in \mathcal{A}^{red}(\Sigma)$ appearing in the symplectic form \eqref{e:classical-boundary-symplform}. Applying Lemma \ref{lem:Omega2,1_d4} to $\alpha= d_\omega e $ we get that the constraint (coming from the bulk) $d_\omega e = 0$ can be divided into the \emph{invariant constraint} $e^{N-3} d_\omega e = 0$ and the constraint
\begin{equation}\label{e:omegareprfix}
    e_n e^{N-4} d_{\omega} e \in \Ima W_{N-3}^{\partial,(1,1)},
\end{equation}
called \emph{structural constraint}. Then the following results proves that \eqref{e:omegareprfix} exactly fixes a representative of the aforementioned equivalence class without imposing further constraints.

\begin{theorem}[\cite{CCS2020}]\label{thm:omegadecomposition}
    Suppose that  $g^{\partial}$, the metric induced on the boundary, is nondegenerate. Given any $\widetilde{\omega} \in \Omega_{\partial}^{1,2}$, there is a unique decomposition 
    \begin{equation} \label{e:omegadecomp}
        \widetilde{\omega}= \omega +v
    \end{equation}
    with $\omega$ and $v$ satisfying 
    \begin{equation}\label{e:omegareprfix2}
        e^{N-3}v=0 \quad \text{ and } \quad  e_n e^{N-4} d_{\omega} e \in \Ima W_{N-3}^{\partial,(1,1)}.
    \end{equation}
\end{theorem}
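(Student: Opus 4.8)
The plan is to reduce the statement to Lemma \ref{lem:Omega2,2_d4}, which already packages the nondegenerate linear algebra we need. First I would translate the structural constraint into a linear condition on $v$. Writing $\widetilde{\omega}=\omega+v$ and using the transformation rule $d_{\widetilde{\omega}}e=d_\omega e+[v,e]$ for the covariant derivative, the second condition in \eqref{e:omegareprfix2} becomes
\[
e_n e^{N-4}d_{\widetilde{\omega}}e - e_n e^{N-4}[v,e]\in\Ima W_{N-3}^{\partial,(1,1)},
\]
while the first condition $e^{N-3}v=0$ says precisely that $v\in\mathrm{Ker}W_{N-3}^{\partial,(1,2)}$. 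Thus the theorem is equivalent to the claim that there exists a unique $v\in\mathrm{Ker}W_{N-3}^{\partial,(1,2)}$ for which this membership holds; setting $\omega\coloneqq\widetilde{\omega}-v$ then recovers the decomposition.

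Next I would observe that $\beta\coloneqq e_n e^{N-4}d_{\widetilde{\omega}}e$ lies in $\Omega_\partial^{N-2,N-2}$ (counting form degrees $0+(N-4)+2$ and internal degrees $1+(N-4)+1$), so Lemma \ref{lem:Omega2,2_d4} applies verbatim. It yields a unique $v\in\mathrm{Ker}W_{N-3}^{\partial,(1,2)}$ and a unique $\gamma\in\Omega_\partial^{1,1}$ with
\[
\beta = e^{N-3}\gamma + e_n e^{N-4}[v,e].
\]
For existence, with this $v$ and $\omega=\widetilde{\omega}-v$ one computes $e_n e^{N-4}d_\omega e=\beta-e_n e^{N-4}[v,e]=e^{N-3}\gamma=W_{N-3}^{\partial,(1,1)}\gamma\in\Ima W_{N-3}^{\partial,(1,1)}$, while $e^{N-3}v=0$ holds by construction, so \eqref{e:omegareprfix2} is satisfied.

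For uniqueness, I would take two admissible decompositions $\widetilde{\omega}=\omega+v=\omega'+v'$ and set $w\coloneqq v-v'\in\mathrm{Ker}W_{N-3}^{\partial,(1,2)}$. Subtracting the two structural constraints and using $d_\omega e-d_{\omega'}e=-[w,e]$ gives $e_n e^{N-4}[w,e]\in\Ima W_{N-3}^{\partial,(1,1)}$, say equal to $e^{N-3}\gamma'$. Then the element $e_n e^{N-4}[w,e]\in\Omega_\partial^{N-2,N-2}$ admits the two representations $e^{N-3}\cdot 0+e_n e^{N-4}[w,e]$ and $e^{N-3}\gamma'+e_n e^{N-4}\cdot[0,e]$ in the sense of Lemma \ref{lem:Omega2,2_d4}; the uniqueness clause of that lemma forces $w=0$, hence $v=v'$ and $\omega=\omega'$.

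The only point where nondegeneracy of $g^\partial$ enters is in the appeal to Lemma \ref{lem:Omega2,2_d4}, and that is exactly where it is indispensable: in the degenerate case the space $\mathrm{Ker}\varrho^{(1,2)}|_{\mathrm{Ker}W_{N-3}^{\partial,(1,2)}}$ is nontrivial (Lemma \ref{lem:varrho12_deg}), so the decomposition of $\beta$ ceases to be unique and the argument breaks down. Consequently the substantive (and only nonformal) content is hidden inside Lemma \ref{lem:Omega2,2_d4}; granting it, the theorem is a direct corollary, and I expect the main obstacle to lie entirely in that lemma rather than in the present deduction.
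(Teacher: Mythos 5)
Your proof is correct and takes essentially the same route as the paper: the proof (given in \cite{CCS2020} for this theorem, and reproduced in this paper for its degenerate analogue, Theorem \ref{thm:omegadecomposition_deg}) likewise applies the decomposition lemma (Lemma \ref{lem:Omega2,2_d4}) to $\beta = e_n e^{N-4} d_{\widetilde{\omega}} e$ and sets $\omega = \widetilde{\omega} - v$. Your explicit uniqueness argument, exploiting the uniqueness clause of Lemma \ref{lem:Omega2,2_d4} on the two representations of $e_n e^{N-4}[w,e]$, spells out what the paper leaves implicit but introduces no new idea.
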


\begin{corollary}
The field $\omega$ in the decomposition \eqref{e:omegadecomp} depends only on the equivalence class $[\omega] \in \mathcal{A}^{red}(\Sigma)$.
\end{corollary}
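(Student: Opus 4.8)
The plan is to derive the corollary directly from the uniqueness part of Theorem \ref{thm:omegadecomposition}, which is really the only substantive input needed. First I would recall that two elements of $\Omega_{\partial}^{1,2}$ represent the same class in $\mathcal{A}^{red}(\Sigma)$ precisely when their difference lies in $\mathrm{Ker}\,W_{N-3}^{\partial,(1,2)}$, i.e. when it satisfies the defining relation \eqref{e:Xomega}. So I would fix two representatives $\widetilde{\omega}_1$ and $\widetilde{\omega}_2$ of the same equivalence class, write $w \coloneqq \widetilde{\omega}_1-\widetilde{\omega}_2$, and record that $e^{N-3} w = 0$.

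Next I would invoke Theorem \ref{thm:omegadecomposition} separately for each representative, obtaining decompositions $\widetilde{\omega}_1 = \omega_1 + v_1$ and $\widetilde{\omega}_2 = \omega_2 + v_2$, where in each case $e^{N-3} v_i = 0$ and the structural constraint $e_n e^{N-4} d_{\omega_i} e \in \Ima W_{N-3}^{\partial,(1,1)}$ holds. The goal is to show $\omega_1 = \omega_2$.

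The key step is to exhibit a second admissible decomposition of $\widetilde{\omega}_1$ built from $\omega_2$. I would write
\begin{align*}
    \widetilde{\omega}_1 = \widetilde{\omega}_2 + w = \omega_2 + (v_2 + w),
\end{align*}
and observe that $e^{N-3}(v_2+w) = e^{N-3} v_2 + e^{N-3} w = 0$, so the pair $(\omega_2, v_2+w)$ satisfies the first condition of \eqref{e:omegareprfix2}; the second condition holds because $\omega_2$ already obeys the structural constraint. Hence $(\omega_2, v_2+w)$ is a decomposition of $\widetilde{\omega}_1$ of exactly the form guaranteed by Theorem \ref{thm:omegadecomposition}. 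By the uniqueness asserted there, it must coincide with $(\omega_1, v_1)$, so in particular $\omega_1 = \omega_2$.

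There is no real obstacle here: the entire content is packaged in the uniqueness clause of Theorem \ref{thm:omegadecomposition}, and the argument is a one-line manipulation once the two decompositions are written down. The only point requiring a moment of care is the bookkeeping that $v_2 + w$ again lies in $\mathrm{Ker}\,W_{N-3}^{\partial,(1,2)}$, which is immediate from linearity of $W_{N-3}^{\partial,(1,2)}$ and the fact that both $v_2$ and $w$ satisfy \eqref{e:Xomega}. Since $\widetilde{\omega}_1$ and $\widetilde{\omega}_2$ were arbitrary representatives of the class, this shows the $\omega$-component produced by the decomposition is a function of $[\omega]\in\mathcal{A}^{red}(\Sigma)$ alone, as claimed.
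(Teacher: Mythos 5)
Your proof is correct and is precisely the intended argument: the paper states this corollary without a separate proof because it is an immediate consequence of the uniqueness clause of Theorem \ref{thm:omegadecomposition}, which is exactly what you have spelled out by exhibiting $(\omega_2, v_2+w)$ as a second admissible decomposition of $\widetilde{\omega}_1$ and invoking uniqueness. The two checks you perform --- that $v_2+w \in \mathrm{Ker}\,W_{N-3}^{\partial,(1,2)}$ by linearity, and that the structural constraint is a condition on $\omega_2$ alone and hence carries over --- are all that is needed.
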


Having fixed the representative of the equivalence class of the connection, one considers the restriction of the Euler--Lagrange equations to the boundary to get the corresponding constraints. The wise choice of the structural constraint \eqref{e:omegareprfix} allows to construct the set of constraints on the boundary. Defining $c \in\Omega^{0,2}_\partial[1]$, $\xi \in\mathfrak{X}[1](\Sigma)$ and $\lambda\in \Omega^{0,0}_\partial[1]$ as (odd)\footnote{Such quantities are also sometimes referred to as Grassmann variables.} Lagrange multipliers, they read
\begin{subequations}\label{e:constraints}
\begin{equation}\label{e:constraintL}
L_c = \int_{\Sigma} c e^{N-3} d_{\omega} e ,
\end{equation}
\begin{equation}\label{e:constraintP}
P_{\xi}= \int_{\Sigma}  \iota_{\xi} e e^{N-3} F_{\omega} + \iota_{\xi} (\omega-\omega_0) e^{N-3} d_{\omega} e,
\end{equation}
\begin{equation}
H_{\lambda} = \int_{\Sigma} \lambda e_n \left( e^{N-3} F_{\omega} +\frac{1}{(N-1)!}\Lambda e^{N-1} \right),
\end{equation}
\end{subequations}
where $\omega_0$ is a reference connection\footnote{\label{foot:invariance_omega_0} The critical locus of these constraints does not depend on $\omega_0$, since it appears in \eqref{e:constraintP} in combination with an expression already present in \eqref{e:constraintL}.}.

\begin{remark}
We use here odd Lagrange multipliers $c$, $\xi$ and $\lambda$, following \cite{CCS2020}. The notation $[1]$ next to the symbol of the space to which these quantities belong denotes the shift to odd quantities. This convention does not modify the structure of the constraints and simplifies the computations and the notation. The version with even Lagrange multipliers can be easily derived from the present one. For example, let us consider $\{L_c,L_c\}$. This bracket denotes an antisymmetric quantity in which the odd variables are space holders. This means that going back to unshifted (i.e., even) variables, say, $\alpha,\beta$, a formula like
\begin{align*}
    \{L_c,L_c\}=-\frac12 L_{[c,c]}
\end{align*}
simply means
\begin{align*}
     \{L_\alpha,L_\beta\} =- L_{[\alpha,\beta]}. 
\end{align*}
\end{remark}

The following theorem describes the structure of the constraints:

\begin{theorem}[\cite{CCS2020}] \label{thm:first-class-constraints}
     Let $g^\partial$ be nondegenerate on $\Sigma$. Then, the functions $L_c$, $P_{\xi}$, $H_{\lambda}$ are well defined on ${F}^{\partial}_{PC}$ and define a coisotropic submanifold  with respect to the symplectic structure $\varpi^{\partial}_{PC}$. In particular they satisfy the following relations
    \begin{subequations}\label{brackets-of-constraints}
        \begin{eqnarray}
            \{L_c, L_c\} = - \frac{1}{2} L_{[c,c]} & 
            \{P_{\xi}, P_{\xi}\}  =  \frac{1}{2}P_{[\xi, \xi]}- \frac{1}{2}L_{\iota_{\xi}\iota_{\xi}F_{\omega_0}} \\
            \{L_c, P_{\xi}\}  =  L_{\mathcal{L}_{\xi}^{\omega_0}c} & 
            \{L_c,  H_{\lambda}\}  = - P_{X^{(a)}} + L_{X^{(a)}(\omega - \omega_0)_a} - H_{X^{(n)}} \\
            \{H_{\lambda},H_{\lambda}\}  =0 & 
            \{P_{\xi},H_{\lambda}\}  =  P_{Y^{(a)}} -L_{ Y^{(a)} (\omega - \omega_0)_a} +H_{ Y^{(n)}} 
        \end{eqnarray}
    \end{subequations}
    where 
    \begin{align*}
        \mathcal{L}_{\xi}^{\omega} A = \iota_{\xi} d_{\omega} A -  d_{\omega} \iota_{\xi} A \qquad A \in \Omega^{i,j}_{\partial}
    \end{align*} 
    and
    $X= [c, \lambda e_n ]$, $Y = \mathcal{L}_{\xi}^{\omega_0} (\lambda e_n)$ and $Z^{(a)}$, $Z^{(n)}$ are the components of $Z\in\{X,Y\}$ with respect to the frame $(e_a, e_n)$.\footnote{ It is useful to stress here the differences in the notation between the first constraint and the Lie derivative. The first is denoted with an italic $L$, while the second with a calligraphic $\mathcal{L}$.}
\end{theorem}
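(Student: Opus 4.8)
The plan is to carry out the Kijowski--Tulczijew Hamiltonian analysis on the reduced symplectic space $(F^\partial_{PC},\varpi^\partial_{PC})$. Having fixed a representative of the class $[\omega]$ by the structural constraint (Theorem~\ref{thm:omegadecomposition}), I would proceed in three stages: first verify that $L_c$, $P_\xi$, $H_\lambda$ descend to honest functions on $F^\partial_{PC}$; then compute the Hamiltonian vector field of each constraint by solving $\iota_{X_C}\varpi^\partial_{PC}=\delta C$; and finally read off the Poisson brackets from $\{C_1,C_2\}=\iota_{X_{C_1}}\iota_{X_{C_2}}\varpi^\partial_{PC}$ and reorganize the outcome into the constraints themselves. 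The last step simultaneously yields the bracket table and the coisotropy statement, since a zero locus cut out by constraints is coisotropic precisely when the constraints are in involution.

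For well-definedness I would establish invariance under $\omega\mapsto\omega+v$ with $e^{N-3}v=0$. The key algebraic fact is that $[e^{N-3},e]=(N-3)\,e^{N-4}[e,e]=0$, because $[e,e]$ is the antisymmetrization of the symmetric tensor $g^\partial$ and hence vanishes. Applying the graded Leibniz rule to the trivially vanishing expression $[e^{N-3}v,e]=0$ then gives $e^{N-3}[v,e]=0$; since $d_\omega e\mapsto d_\omega e+[v,e]$, the density $e^{N-3}d_\omega e$ is invariant and $L_c$ descends. For $H_\lambda$ and $P_\xi$ I would treat the curvature term via $F_{\omega+v}=F_\omega+d_\omega v+\tfrac12[v,v]$, combined with $d_\omega(e^{N-3}v)=0$, the Bianchi identity $d_\omega F_\omega=0$, and integration by parts on $\Sigma$, so that the extra contributions either vanish by the same kernel identities of Lemma~\ref{lem:We_boundary} or cancel against the torsion terms already present.

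Next I would compute the Hamiltonian vector fields. Writing $X_C=X_C^e\,\tfrac{\delta}{\delta e}+X_C^\omega\,\tfrac{\delta}{\delta\omega}$ and contracting $\varpi^\partial_{PC}=\int_\Sigma e^{N-3}\delta e\,\delta[\omega]$ produces two equations, one proportional to $\delta\omega$ and one to $\delta e$; matching them with $\delta C$ determines the components. Solvability rests on Lemma~\ref{lem:We_boundary}: the injectivity of $W^{\partial,(1,1)}_{N-3}$ fixes $X_C^e$ uniquely from the $\delta\omega$-equation, while the $\delta e$-equation together with the structural constraint determines $X_C^\omega$ as a well-defined element of $\mathcal{A}^{red}(\Sigma)$. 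I expect $X_{L_c}$ to be the infinitesimal internal gauge transformation generated by $c$, $X_{P_\xi}$ the covariant Lie derivative $\mathcal{L}^{\omega_0}_\xi$ along $\xi$, and $X_{H_\lambda}$ the transversal diffeomorphism encoded by $\lambda e_n$; this geometric identification is what makes the subsequent bracket computation tractable.

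The brackets then follow from $\{C_1,C_2\}=\iota_{X_{C_1}}\iota_{X_{C_2}}\varpi^\partial_{PC}$, simplified using $d_\omega F_\omega=0$, $d_\omega^2=[F_\omega,\cdot\,]$, the graded Leibniz rule, and integration by parts, re-expressing everything in terms of $L$, $P$, $H$. The $\{L,L\}$, $\{L,P\}$ and $\{P,P\}$ brackets are the Lie-algebra and diffeomorphism relations and come out directly, the term $L_{\iota_\xi\iota_\xi F_{\omega_0}}$ in $\{P_\xi,P_\xi\}$ simply recording the curvature of the reference connection $\omega_0$. The main obstacle will be the three brackets containing $H_\lambda$: one must carefully handle the $e_n$-dependence and the fixed representative, decompose $X=[c,\lambda e_n]$ and $Y=\mathcal{L}^{\omega_0}_\xi(\lambda e_n)$ into their frame components $Z^{(a)},Z^{(n)}$, and check that the transversal pieces recombine exactly into $H$, $P$ and $L$. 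In particular, proving $\{H_\lambda,H_\lambda\}=0$ requires a delicate cancellation that relies on the non-degeneracy of $g^\partial$ (equivalently, on the invertibility statements of Lemma~\ref{lem:We_boundary}); this is precisely the step that breaks down in the light-like case and must later be repaired through the degeneracy constraint $R_\tau$.
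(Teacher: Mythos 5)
Two preliminary remarks before the substance. The paper itself does not prove this statement: Theorem \ref{thm:first-class-constraints} is recalled from \cite{CCS2020}, and the closest argument actually written out here is the proof of its light-like analogue, Theorem \ref{thm:Brackets_constraints}. Measured against that proof, your overall plan is the right one and essentially the same: compute $\delta C$ for each constraint, solve $\iota_{\mathbb{X}}\varpi^{\partial}_{PC}=\delta C$ (the $e$-component of $\mathbb{X}$ fixed by injectivity of $W_{N-3}^{\partial,(1,1)}$, the $\omega$-component fixed by compatibility with the structural constraint), then pair Hamiltonian vector fields to get the brackets. Your predicted vector fields ($[c,e]$ and $d_\omega c$ for $L_c$, $-\mathcal{L}^{\omega_0}_{\xi}$ for $P_\xi$, the transversal $\lambda e_n$-flow for $H_\lambda$) agree with the formulas appearing there.

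The genuine gap is in your first stage. Well-definedness via invariance works for $L_c$: your Leibniz argument for $e^{N-3}[v,e]=0$ when $e^{N-3}v=0$ is sound, since contraction with $\eta(\cdot,e)$ is a derivation and $[e,e]$ is the antisymmetrization of the symmetric $g^{\partial}$. But it fails for $P_\xi$ and $H_\lambda$: these functionals are \emph{not} invariant under $\omega\mapsto\omega+v$, and no combination of the Bianchi identity and integration by parts produces an identical cancellation. Concretely, writing $F_{\omega+v}=F_\omega+d_\omega v+\tfrac12[v,v]$ and integrating by parts (using $e^{N-3}v=0$ to kill the term $d_\omega(\lambda e_n)\,e^{N-3}v$) one finds
\begin{align*}
H_\lambda(\omega+v)-H_\lambda(\omega)=\pm(N-3)\int_\Sigma \lambda\, e_n\, e^{N-4}\, d_\omega e\, v\;+\;\frac12\int_\Sigma \lambda\, e_n\, e^{N-3}[v,v].
\end{align*}
The first term vanishes only after the structural constraint \eqref{e:omegareprfix} is imposed on $\omega$ (then $e_n e^{N-4}d_\omega e=e^{N-3}\gamma$ and $\lambda e^{N-3}\gamma v=\pm\lambda\gamma\, e^{N-3}v=0$); off that locus it is generically nonzero. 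The quadratic term does not vanish at all: already for $N=4$, choosing $v$ with only the components $v_1^{23},v_2^{13},v_3^{12}$ nonzero (such $v$ lies in $\Ker{W_{1}^{\partial,(1,2)}}$) gives $e_n e[v,v]\neq 0$ for generic values. An analogous leftover appears for $P_\xi$. The mechanism the theorem actually rests on is therefore not invariance but canonicity: Theorem \ref{thm:omegadecomposition} and its Corollary assert that the representative singled out by the structural constraint depends only on the class $[\omega]$, so $P_\xi$ and $H_\lambda$, evaluated on that representative, automatically descend to ${F}^{\partial}_{PC}$. Replacing your invariance step by this observation repairs the argument, and the rest of your outline then goes through as in the paper. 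One smaller correction: $\{H_\lambda,H_\lambda\}=0$ follows, after the structural constraint has eliminated the relevant terms, from the parity of the odd quantities ($\lambda\lambda=0$, $e_n e_n=0$) rather than from invertibility per se; non-degeneracy enters because only then does the structural constraint hold without the $p_{\mathcal{T}}$-remainder, whose survival in the light-like case is precisely what produces $F_{\tau'\tau'}\neq 0$, as you anticipated.
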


\section{Degenerate boundary structure} \label{sec:general_case}

In section \ref{sec:boundary_known_results} we presented the construction of the boundary structure in the non-degenerate case.  Let now $g^{\partial}$ be degenerate, i.e. admitting a vector field $X$ such that $\iota_X g^{\partial}=0$.
{\subsection{Fixing a representative} In this section we describe a possible way for fixing the freedom of the choice of the connection $\omega \in [\omega]$ , adapting the non-degenerate case presented in \cite{CCS2020} and summarized in section \ref{sec:recap_non-deg_case}.
 The main difference is that in the degenerate case, because of the different outcome of Lemma \ref{lem:varrho12_deg}, it is no longer possible to find an $\omega \in [\omega]$ such that $e_n e^{N-4}d_{\omega} e \in \Ima W_{N-3}^{\partial,(1,1)}$.} Indeed, in contrast to the non-degenerate case, the map 
\begin{align*}
v \in \Ker {W_{N-3}^{\partial,(1,2)} }\mapsto e_n e^{N-4} [v,e] \in \Omega_{\partial}^{N-2,N-2} 
\end{align*}
is not injective on $W_{N-3}^{\partial,(1,2)}$ (Lemma \ref{lem:varrho12_deg}). The workaround is to separately consider the components of $d_{\omega} e $ in $\mathcal{T}$ and the components of $\omega$ in $\mathcal{K}$ (where $\mathcal{T}$ and $\mathcal{K}$ are introduced in \eqref{e:defTKS}). Indeed in the following theorem we consider a weaker version of the structural constraint \eqref{e:omegareprfix} that generalizes it for a  degenerate metric. This theorem is the generalization of Theorem \ref{thm:omegadecomposition}. 

\begin{theorem}\label{thm:omegadecomposition_deg}
    Let $g^{\partial}$ be degenerate. Given any $\widetilde{\omega} \in \Omega_{\partial}^{1,2}$, there is a unique decomposition 
    \begin{equation} \label{e:omegadecomp_mod}
        \widetilde{\omega}= \omega +v
    \end{equation}
    with $\omega$ and $v$ satisfying
    \begin{subequations}\label{e:omegareprfix2_mod}
        \begin{align}
            & e^{N-3}v=0, \\
            & e_n e^{N-4} d_{\omega} e - e_n e^{N-4} p_{\mathcal{T}}(d_{\omega} e) \in \Ima W_{N-3}^{\partial,(1,1)}, \label{e:structural_constraint_mod}\\
          & p_{\mathcal{K}} v = 0.
        \end{align}
    \end{subequations}
\end{theorem}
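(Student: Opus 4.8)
The plan is to turn the three conditions into a single linear equation for the correction term $v$ and to solve it by a kernel/cokernel analysis. The condition $e^{N-3}v=0$ says exactly that $v\in\mathrm{Ker}\,W_{N-3}^{\partial,(1,2)}$, so I look for $v$ in this space and set $\omega=\widetilde\omega-v$. Since $d_\omega e=d_{\widetilde\omega}e-[v,e]=d_{\widetilde\omega}e-\varrho^{(1,2)}(v)$, the task becomes: find a \emph{unique} $v\in\mathrm{Ker}\,W_{N-3}^{\partial,(1,2)}$ with $p_{\mathcal K}v=0$ for which $\omega=\widetilde\omega-v$ satisfies the weakened structural constraint \eqref{e:structural_constraint_mod}.

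First I would simplify that constraint. Applying $\varrho$ to the identity $v\wedge e^{N-3}=0$ and using that $\varrho$ is a graded derivation for $\wedge$ with $\varrho^{(1,1)}(e)=[e,e]=0$, one gets $\varrho^{(1,2)}(v)\wedge e^{N-3}=0$, i.e. $\varrho^{(1,2)}(v)\in\mathrm{Ker}\,W_{N-3}^{\partial,(2,1)}$ for every $v\in\mathrm{Ker}\,W_{N-3}^{\partial,(1,2)}$. In particular $\varrho^{(1,2)}(v)\in\Ima\varrho^{(1,2)}|_{\mathrm{Ker}W_{N-3}^{\partial,(1,2)}}$, which is orthogonal to $J\supseteq\mathcal T$ for the Riemannian metric used to define $J$ and $p_{\mathcal T}$; hence $p_{\mathcal T}(\varrho^{(1,2)}(v))=0$. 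Consequently $p_{\mathcal T}(d_\omega e)=p_{\mathcal T}(d_{\widetilde\omega}e)$ does not depend on $v$, and, introducing the linear map $\psi(v):=\big[e_n e^{N-4}\varrho^{(1,2)}(v)\big]$ on $\mathrm{Ker}\,W_{N-3}^{\partial,(1,2)}$ with values in $\Omega_\partial^{N-2,N-2}/\Ima W_{N-3}^{\partial,(1,1)}$, the constraint \eqref{e:structural_constraint_mod} collapses to the single linear equation
\[
\psi(v)=\big[e_n e^{N-4}\big(d_{\widetilde\omega}e-p_{\mathcal T}(d_{\widetilde\omega}e)\big)\big].
\]

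For uniqueness I would compute $\mathrm{Ker}\,\psi$. Clearly $\mathcal K\subseteq\mathrm{Ker}\,\psi$, since $\varrho^{(1,2)}$ vanishes on $\mathcal K$. Conversely, if $\psi(v)=0$ then $\alpha:=\varrho^{(1,2)}(v)$ satisfies $e^{N-3}\alpha=0$ and $e_n e^{N-4}\alpha\in\Ima W_{N-3}^{\partial,(1,1)}$; the degenerate counterpart of Lemma \ref{lem:Omega2,1_d4} (proved in coordinates as in Proposition \ref{prop:components_of_tau}) forces $\alpha\in\mathcal T$, while $\alpha\in\Ima\varrho^{(1,2)}|_{\mathrm{Ker}W_{N-3}^{\partial,(1,2)}}$, and these two spaces intersect trivially by the definition \eqref{e:defTKS} of $\mathcal T$. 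Hence $\alpha=0$ and $v\in\mathrm{Ker}\,\varrho^{(1,2)}\cap\mathrm{Ker}\,W_{N-3}^{\partial,(1,2)}=\mathcal K$, so $\mathrm{Ker}\,\psi=\mathcal K$. The condition $p_{\mathcal K}v=0$ restricts $v$ to a complement of $\mathcal K$ in $\mathrm{Ker}\,W_{N-3}^{\partial,(1,2)}$, on which $\psi$ is injective; two admissible solutions therefore differ by an element of $\mathcal K\cap\mathrm{Ker}\,p_{\mathcal K}=\{0\}$, giving uniqueness.

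Existence amounts to showing that the right-hand class lies in $\Ima\psi$, and this is where I expect the main difficulty. I would establish the degenerate analogue of Lemma \ref{lem:Omega2,2_d4}, namely the direct-sum decomposition
\[
\Omega_\partial^{N-2,N-2}=\Ima W_{N-3}^{\partial,(1,1)}\ \oplus\ e_n e^{N-4}\varrho^{(1,2)}\big(\mathrm{Ker}\,W_{N-3}^{\partial,(1,2)}\big)\ \oplus\ e_n e^{N-4}\mathcal T,
\]
the last summand being the new cokernel created by the nontrivial kernel of Lemma \ref{lem:varrho12_deg}. Granting it, subtracting $e_n e^{N-4}p_{\mathcal T}(d_{\widetilde\omega}e)$ removes precisely the $e_n e^{N-4}\mathcal T$-component of $e_n e^{N-4}d_{\widetilde\omega}e$, leaving a class in $\Ima W_{N-3}^{\partial,(1,1)}\oplus e_n e^{N-4}\varrho^{(1,2)}(\mathrm{Ker}\,W_{N-3}^{\partial,(1,2)})$, i.e. in $\Ima\psi$; its unique preimage with $p_{\mathcal K}v=0$ is the desired $v$. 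The hard part is proving this decomposition: I must check that $e_n e^{N-4}$ is injective on $\mathcal T$ and, crucially, that $e_n e^{N-4}$ applied to a complement of $\mathrm{Ker}\,W_{N-3}^{\partial,(2,1)}$ contributes nothing to the $e_n e^{N-4}\mathcal T$-slot, so that the $\mathcal T$-component of $e_n e^{N-4}d_{\widetilde\omega}e$ is genuinely $e_n e^{N-4}p_{\mathcal T}(d_{\widetilde\omega}e)$. I would verify this by the explicit normal-coordinate description of the components of $\mathcal T$, $\mathcal K$ and $\mathcal S$ behind Proposition \ref{prop:components_of_tau} and Corollary \ref{cor:components_of_W-1[tau,e]}, using Lemma \ref{lem:relationSandT} to control the $\mathcal S$-pairing dually.
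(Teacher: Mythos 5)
Your proposal is correct and is essentially the paper's own argument: both hinge on the three-summand decomposition
\begin{equation*}
\Omega_{\partial}^{N-2,N-2}=\Ima W_{N-3}^{\partial,(1,1)}\ \oplus\ e_n e^{N-4}\varrho^{(1,2)}\bigl(\mathrm{Ker}\, W_{N-3}^{\partial,(1,2)}\bigr)\ \oplus\ e_n e^{N-4}\mathcal{T},
\end{equation*}
applied to $\beta=e_n e^{N-4}d_{\widetilde{\omega}}e$, with the residual $\mathcal{K}$-ambiguity of $v$ removed by $p_{\mathcal{K}}v=0$. The one notable divergence is in emphasis: the decomposition you defer as ``the hard part'' is exactly the paper's Lemma \ref{lem:Omega2,2_d4_mod}, which is obtained there with no coordinate work by combining the definition \eqref{e:defTKS} of $\mathcal{T}$ (which splits $\mathrm{Ker}\, W_{N-3}^{\partial,(2,1)}$ as $\Ima \varrho^{(1,2)}|_{\mathrm{Ker} W_{N-3}^{\partial,(1,2)}}\oplus\mathcal{T}$, using precisely the fact you prove, that $\varrho^{(1,2)}$ maps $\mathrm{Ker}\, W_{N-3}^{\partial,(1,2)}$ into $\mathrm{Ker}\, W_{N-3}^{\partial,(2,1)}$) with the non-degenerate Lemma \ref{lem:Omega2,2_d4}, which already writes any $\beta$ as $e^{N-3}\gamma+e_n e^{N-4}\alpha$ with $\alpha\in \mathrm{Ker}\, W_{N-3}^{\partial,(2,1)}$; so the normal-coordinate verification you plan is unnecessary. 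Conversely, two points you spell out are left tacit in the paper's three-line proof: the $v$-independence of $p_{\mathcal{T}}(d_{\omega}e)$ (via $p_{\mathcal{T}}\circ\varrho^{(1,2)}|_{\mathrm{Ker} W_{N-3}^{\partial,(1,2)}}=0$), and the compatibility requirement that the $\mathcal{T}$-slot of the decomposition of $e_n e^{N-4}d_{\widetilde{\omega}}e$ coincide with $e_n e^{N-4}p_{\mathcal{T}}(d_{\widetilde{\omega}}e)$, without which \eqref{e:structural_constraint_mod} does not follow from the decomposition; both are genuinely needed, and both hold once the projections $p_{\mathcal{T}}$, $p_{\mathcal{K}}$ are chosen suitably (orthogonal, or adapted to the decomposition), a freedom whose harmlessness is the content of Section \ref{sec:Independence}. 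Your explicit uniqueness step ($\mathrm{Ker}\,\psi=\mathcal{K}$, then intersect with $\mathrm{Ker}\, p_{\mathcal{K}}$) is likewise correct and makes precise what the paper only asserts implicitly through the uniqueness clause of Lemma \ref{lem:Omega2,2_d4_mod}.
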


The proof is based on the following two lemmas generalizing respectively Lemmas \ref{lem:Omega2,1_d4} and \ref{lem:Omega2,2_d4}.

\begin{lemma}\label{lem:Omega2,1_d4_mod}
Let $g^{\partial}$ be degenerate and let $\alpha \in \Omega^{2,1}_\partial$. Then $\alpha=0$ if and only if 
\begin{align}\label{e:ConditionforOmega21_d4_deg}
\begin{cases}
e^{N-3}\alpha =0 \\
e_n e^{N-4}\alpha  - e_n e^{N-4}p_{\mathcal{T}}\alpha \in \Ima W_{N-3}^{\partial, (1,1)}\\
p_{\mathcal{T}}\alpha = 0
\end{cases}.
\end{align}
\end{lemma}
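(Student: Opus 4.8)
The direction $\alpha=0\Rightarrow$ (the three conditions) is immediate: if $\alpha=0$ then $e^{N-3}\alpha=0$, $p_{\mathcal{T}}\alpha=0$, and $e_ne^{N-4}\alpha-e_ne^{N-4}p_{\mathcal{T}}\alpha=0\in\Ima W_{N-3}^{\partial,(1,1)}$. For the nontrivial direction assume the three conditions. The third one, $p_{\mathcal{T}}\alpha=0$, makes the subtracted term in the second condition disappear, so the system reduces to
\begin{equation*}
e^{N-3}\alpha=0,\qquad e_ne^{N-4}\alpha\in\Ima W_{N-3}^{\partial,(1,1)},\qquad p_{\mathcal{T}}\alpha=0 .
\end{equation*}
The first two equations are exactly the hypotheses of the non-degenerate Lemma \ref{lem:Omega2,1_d4}; the whole point is that they no longer force $\alpha=0$ once $g^\partial$ is degenerate, and the plan is to show that the residual freedom they leave is precisely $\mathcal{T}$, which the third equation then annihilates.

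Concretely, let $K:=\{\alpha\in\Omega^{2,1}_\partial: e^{N-3}\alpha=0,\ e_ne^{N-4}\alpha\in\Ima W_{N-3}^{\partial,(1,1)}\}$ be the solution space of the first two equations, i.e.\ the kernel of
\begin{equation*}
\Phi\colon\Omega^{2,1}_\partial\longrightarrow \Omega^{N-1,N-2}_\partial\oplus\bigl(\Omega^{N-2,N-2}_\partial/\Ima W_{N-3}^{\partial,(1,1)}\bigr),\qquad \alpha\longmapsto\bigl(e^{N-3}\alpha,\,[e_ne^{N-4}\alpha]\bigr).
\end{equation*}
A dimension count using that $W_{N-3}^{\partial,(2,1)}$ is surjective and $W_{N-3}^{\partial,(1,1)}$ is injective (Lemma \ref{lem:We_boundary}) shows that source and target both have dimension $\tfrac{N(N-1)(N-2)}{2}$; in the non-degenerate case Lemma \ref{lem:Omega2,1_d4} says $\Phi$ is injective, hence an isomorphism. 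In the degenerate case I would compute $\dim K$ by tracking the single defect. First one checks $\Ima\varrho^{(1,2)}|_{\mathrm{Ker}W_{N-3}^{\partial,(1,2)}}\subseteq\mathrm{Ker}W_{N-3}^{\partial,(2,1)}$: indeed $[e^{N-3},e]=0$ (the $\eta$-contraction against a symmetric $g$ is antisymmetrized away), so $[v,e]\wedge e^{N-3}=[v\wedge e^{N-3},e]=0$ whenever $v\wedge e^{N-3}=0$. By the modular law this gives $\mathrm{Ker}W_{N-3}^{\partial,(2,1)}=\Ima\varrho^{(1,2)}|_{\mathrm{Ker}W_{N-3}^{\partial,(1,2)}}\oplus\mathcal{T}$, and Lemma \ref{lem:varrho12_deg} yields $\dim\mathcal{T}=\tfrac{N(N-3)}{2}$. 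The target goal of the count is then $\dim K=\tfrac{N(N-3)}{2}=\dim\mathcal{T}$.

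It remains to identify $K$ with $\mathcal{T}$, or at least to show that $p_{\mathcal{T}}|_K$ is injective; since $\dim K=\dim\mathcal{T}$ this is equivalent to $K\cap\mathrm{Ker}\,p_{\mathcal{T}}=0$. I would argue by splitting $\alpha=\alpha_I+\alpha_{\mathcal{T}}$ along $\mathrm{Ker}W_{N-3}^{\partial,(2,1)}=\Ima\varrho^{(1,2)}|_{\mathrm{Ker}W_{N-3}^{\partial,(1,2)}}\oplus\mathcal{T}$, writing $\alpha_I=[v,e]$ with $v$ chosen in a complement of $\mathrm{Ker}\,\varrho^{(1,2)}$ inside $\mathrm{Ker}W_{N-3}^{\partial,(1,2)}$. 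The second equation, fed into the companion generalization of Lemma \ref{lem:Omega2,2_d4} (the direct-sum description of $\Omega_\partial^{N-2,N-2}$, whose transversal summand is built from $v\mapsto e_ne^{N-4}[v,e]$ and whose loss of injectivity is measured by $\mathrm{Ker}\,\varrho^{(1,2)}|_{\mathrm{Ker}W_{N-3}^{\partial,(1,2)}}$ via Lemma \ref{lem:varrho12_deg}), forces the transversal component $e_ne^{N-4}[v,e]$ to vanish, hence $\alpha_I=0$ and $\alpha=\alpha_{\mathcal{T}}\in\mathcal{T}$. Thus $K=\mathcal{T}$, $p_{\mathcal{T}}\alpha=\alpha$, and the third condition gives $\alpha=0$. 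As an alternative and more hands-on route, in normal coordinates adapted to the null leg $e_{N-1}$ one can simply run the explicit computation of the non-degenerate proof and read off that the components of $\alpha$ left undetermined by $e^{N-3}\alpha=0$ and $e_ne^{N-4}\alpha\in\Ima W_{N-3}^{\partial,(1,1)}$ are exactly the $\tfrac{N(N-3)}{2}$ components singled out by $\mathcal{T}$ in Remark \ref{rmk:meaningofspaces}.

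The main obstacle is precisely this identification $K=\mathcal{T}$ (equivalently the injectivity of $p_{\mathcal{T}}$ on $K$): one must show that the weakened transversal condition $e_ne^{N-4}\alpha\in\Ima W_{N-3}^{\partial,(1,1)}$ absorbs the $\mathcal{T}$-part of $\alpha$ while still killing its $\Ima\varrho^{(1,2)}$-part, with no spurious coupling between the two summands. This is exactly where the degenerate behaviour of $\varrho^{(1,2)}$ recorded in Lemma \ref{lem:varrho12_deg} enters and where the argument departs from \cite{CCS2020}; everything else is the bookkeeping already carried out in the non-degenerate proof.
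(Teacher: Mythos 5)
Your overall skeleton is the right one, and indeed it is what the paper tacitly means by ``trivial generalization of Lemma \ref{lem:Omega2,1_d4}'': use the third condition to turn the weakened second condition into $e_ne^{N-4}\alpha\in\Ima W_{N-3}^{\partial,(1,1)}$, split $\alpha$ along $\mathrm{Ker}W_{N-3}^{\partial,(2,1)}=\Ima\varrho^{(1,2)}|_{\mathrm{Ker}W_{N-3}^{\partial,(1,2)}}\oplus\mathcal{T}$, and kill the $[v,e]$ part with the uniqueness in Lemma \ref{lem:Omega2,2_d4_mod}. However, the claim that you yourself single out as the crux --- that the solution space $K$ of the first two (simplified) conditions equals $\mathcal{T}$, with $\dim K=\tfrac{N(N-3)}{2}$ --- is false: in fact $K=\{0\}$. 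The two conditions defining $K$ involve only wedge products with $e$ and $e_n$; the fiber metric $\eta$ never enters them, and the degeneracy of $g^\partial=e^*\eta$ is a statement about $\eta$ restricted to $e(T\Sigma)$, so it cannot change $\mathrm{Ker}\,\Phi$. Concretely, given a degenerate configuration one may replace $\eta$ by a Lorentzian fiber metric for which $(e_1,\dots,e_{N-1})$ is spacelike-orthonormal and $e_n$ timelike; then $g^\partial$ is nondegenerate, Lemma \ref{lem:Omega2,1_d4} applies, and since neither its hypotheses nor its conclusion mention the metric, it forces $\alpha=0$ for the original data as well. (A direct check in the standard basis for $N=4$ confirms this: $e\alpha=0$ gives $\alpha^4_{\mu\nu}=0$ plus three trace-type relations, while $e_n\alpha\in\Ima W_{1}^{\partial,(1,1)}$ forces $\alpha^b_{\mu\nu}=c_\mu\delta^b_\nu-c_\nu\delta^b_\mu$; the two systems are compatible only for $c=0$, i.e.\ $\alpha=0$, with $\eta$ nowhere used.) What degeneracy actually breaks is the statement involving $\varrho^{(1,2)}$ --- which does use $\eta$ --- namely the \emph{surjectivity} needed to impose the structural constraint as a choice of representative (Theorem \ref{thm:omegadecomposition} versus Theorem \ref{thm:omegadecomposition_deg}); it does not break this injectivity lemma, whose third hypothesis is, strictly speaking, logically redundant and is carried along because only the weakened constraint \eqref{e:structural_constraint_mod} can be imposed by gauge fixing.

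The error also makes your argument internally inconsistent. The inclusion $\mathcal{T}\subseteq K$ would require $e_ne^{N-4}\theta\in\Ima W_{N-3}^{\partial,(1,1)}$ for every $\theta\in\mathcal{T}$; but the uniqueness in Lemma \ref{lem:Omega2,2_d4_mod} --- the very tool you invoke to kill $\alpha_I$ --- says precisely that $\Ima W_{N-3}^{\partial,(1,1)}$, $e_ne^{N-4}[\,\cdot\,,e]\bigl(\mathrm{Ker}W_{N-3}^{\partial,(1,2)}\bigr)$ and $e_ne^{N-4}\mathcal{T}$ meet pairwise only in $0$, so no nonzero $\theta$ can satisfy that. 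Applied correctly, that uniqueness annihilates \emph{both} summands of $\alpha$ at once, giving $K=\{0\}$ and contradicting your dimension count; your ``hands-on alternative'' (reading off $\tfrac{N(N-3)}{2}$ undetermined components in adapted coordinates) would likewise fail if carried out, since no components are left undetermined. Your final conclusion happens to survive --- $K=\{0\}\subseteq\mathcal{T}$, so the last step is vacuously fine --- but the proof as written is organized around a false intermediate picture. The fix is simply to delete the $K=\mathcal{T}$ material: after using $p_{\mathcal{T}}\alpha=0$ to remove the subtraction in the second condition, either quote the (metric-independent) argument of Lemma \ref{lem:Omega2,1_d4}, or run your splitting argument and let the uniqueness of Lemma \ref{lem:Omega2,2_d4_mod} kill both components.
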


\begin{proof}
    Trivial generalization of Lemma \ref{lem:Omega2,1_d4}.
\end{proof}

\begin{lemma}\label{lem:Omega2,2_d4_mod}
Let $\beta \in \Omega^{N-2,N-2}_\partial$. If $g^\partial$ is degenerate, there exist a unique $v \in \mathrm{Ker} W_{N-3}^{\partial, (1,2)}$, a unique $\gamma \in \Omega_{\partial}^{1,1}$ and a unique $\theta \in \mathcal{T}$ such that 
\begin{align*}
\beta = e^{N-3} \gamma + e_n e^{N-4} [v, e] + e_n e^{N-4} \theta.
\end{align*}
\end{lemma}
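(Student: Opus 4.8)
The plan is to recognise the statement as the bijectivity of an explicit $C^\infty(\Sigma)$-linear map and to reduce it, via a direct-sum decomposition of $\mathrm{Ker}\,W_{N-3}^{\partial,(2,1)}$, to a single injectivity claim that can be closed by a dimension count. First I would set $\alpha:=[v,e]+\theta$ and observe that $[v,e]=\varrho^{(1,2)}(v)$ lies in $\mathrm{Ker}\,W_{N-3}^{\partial,(2,1)}$ whenever $v\in\mathrm{Ker}\,W_{N-3}^{\partial,(1,2)}$: indeed $\varrho^{(1,2)}$ commutes with wedging by $e$ up to sign (graded Leibniz for the bracket together with $[e,e]=0$, hence $[e^{N-3},e]=0$), so $W_{N-3}^{\partial,(2,1)}([v,e])=\pm[W_{N-3}^{\partial,(1,2)}(v),e]=0$. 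Since also $\mathcal{T}\subset\mathrm{Ker}\,W_{N-3}^{\partial,(2,1)}$ by definition, and since the definition of the complement $J$ together with Lemma \ref{lem:varrho12_deg} exhibits $\mathcal{T}$ as a complement of $\Ima\varrho^{(1,2)}|_{\mathrm{Ker}\,W_{N-3}^{\partial,(1,2)}}$ inside $\mathrm{Ker}\,W_{N-3}^{\partial,(2,1)}$ (the ranks matching as $\dim\mathcal{T}=\tfrac{N(N-3)}{2}$), I get by the modular law that $\alpha$ ranges exactly over $\mathrm{Ker}\,W_{N-3}^{\partial,(2,1)}$, with $\theta=p_{\mathcal{T}}\alpha$ uniquely determined and $v$ determined modulo $\mathcal{K}=\mathrm{Ker}\,\varrho^{(1,2)}|_{\mathrm{Ker}\,W_{N-3}^{\partial,(1,2)}}$. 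Thus the assertion is equivalent to the bijectivity of the map $\Psi\colon(\gamma,\alpha)\mapsto e^{N-3}\gamma+e_ne^{N-4}\alpha$ on $\Omega^{1,1}_\partial\oplus\mathrm{Ker}\,W_{N-3}^{\partial,(2,1)}$, the stated uniqueness of $v$ then holding modulo $\mathcal{K}$ (genuinely, once $p_{\mathcal{K}}v=0$ is imposed as in Theorem \ref{thm:omegadecomposition_deg}).

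Next, since these are finite-rank $C^\infty(\Sigma)$-modules it suffices to argue fibrewise, and a dimension count reduces everything to injectivity. Using the surjectivity of $W_{N-3}^{\partial,(2,1)}$ (Lemma \ref{lem:We_boundary}) one gets $\dim\mathrm{Ker}\,W_{N-3}^{\partial,(2,1)}=\tfrac{N(N-1)(N-3)}{2}$, whence $\dim\Omega^{1,1}_\partial+\dim\mathrm{Ker}\,W_{N-3}^{\partial,(2,1)}=N(N-1)+\tfrac{N(N-1)(N-3)}{2}=\tfrac{N(N-1)^2}{2}=\dim\Omega^{N-2,N-2}_\partial$. Hence it is enough to show that $\Psi$ is injective.

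For injectivity I would take $e^{N-3}\gamma+e_ne^{N-4}\alpha=0$ with $\alpha\in\mathrm{Ker}\,W_{N-3}^{\partial,(2,1)}$ and extract constraints by wedging. Wedging with $e$ and using $e\wedge e_n=e_n\wedge e$ together with $e^{N-3}\alpha=W_{N-3}^{\partial,(2,1)}(\alpha)=0$ annihilates the second term and gives $e^{N-2}\gamma=0$; wedging with $e_n$ annihilates the second term via $e_n\wedge e_n=0$ and gives $e_ne^{N-3}\gamma=0$. The decisive point is then to feed the relation back into the vanishing criterion of Lemma \ref{lem:Omega2,1_d4_mod}: since $e^{N-3}\gamma\in\Ima W_{N-3}^{\partial,(1,1)}$, the equation forces $e_ne^{N-4}\alpha\in\Ima W_{N-3}^{\partial,(1,1)}$, and one must deduce from this (for $\alpha\in\mathrm{Ker}\,W_{N-3}^{\partial,(2,1)}$) first that $p_{\mathcal{T}}\alpha=0$ and then, via Lemma \ref{lem:Omega2,1_d4_mod}, that $\alpha=0$; injectivity of $W_{N-3}^{\partial,(1,1)}$ (Lemma \ref{lem:We_boundary}) finally yields $\gamma=0$.

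The hard part is exactly this last step, namely showing that the $\mathcal{T}$-contribution $e_ne^{N-4}\theta$ is linearly independent from both $\Ima W_{N-3}^{\partial,(1,1)}$ and the $\Ima\varrho^{(1,2)}|_{\mathrm{Ker}\,W_{N-3}^{\partial,(1,2)}}$-part, i.e. that the three summands defining $\Psi$ intersect only at $0$. This independence is the genuinely new degenerate phenomenon and cannot be read off from the kernel/image lemmas alone; I expect to establish it by passing to the standard basis in normal coordinates (as in Proposition \ref{prop:components_of_tau}) and verifying the independence directly on the explicit nonzero components $X_{\mu_1}^{\mu_2}$ and $Y_\mu$, in the same componentwise spirit in which the non-degenerate Lemma \ref{lem:Omega2,2_d4} is proved in \cite{CCS2020}, the degeneracy of $g^\partial$ being precisely what opens the extra slot filled by $\mathcal{T}$.
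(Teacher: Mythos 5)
Your first half coincides with the paper's own first step: from $\Ima \varrho^{(1,2)}|_{\mathrm{Ker}\, W_{N-3}^{\partial,(1,2)}}\subseteq \mathrm{Ker}\, W_{N-3}^{\partial,(2,1)}$, the definition $\mathcal{T}=\mathrm{Ker}\, W_{N-3}^{\partial,(2,1)}\cap J$ and the modular law, one gets $\mathrm{Ker}\, W_{N-3}^{\partial,(2,1)}=\Ima \varrho^{(1,2)}|_{\mathrm{Ker}\, W_{N-3}^{\partial,(1,2)}}\oplus\mathcal{T}$, so the lemma reduces to showing that every $\beta$ is uniquely $e^{N-3}\gamma+e_ne^{N-4}\alpha$ with $\alpha\in\mathrm{Ker}\, W_{N-3}^{\partial,(2,1)}$; your dimension count for this reduction is correct, and your remark that $v$ is really only determined modulo $\mathcal{K}$ (genuinely fixed only by $p_{\mathcal{K}}v=0$ as in Theorem \ref{thm:omegadecomposition_deg}) is a fair observation about the statement itself.

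Where you diverge from the paper, however, there is a genuine gap. The decisive claim --- injectivity of your map $\Psi$, equivalently the direct sum $\Omega^{N-2,N-2}_\partial=\Ima W_{N-3}^{\partial,(1,1)}\oplus e_ne^{N-4}\,\mathrm{Ker}\, W_{N-3}^{\partial,(2,1)}$, equivalently the implication ``$\alpha\in\mathrm{Ker}\, W_{N-3}^{\partial,(2,1)}$ and $e_ne^{N-4}\alpha\in\Ima W_{N-3}^{\partial,(1,1)}$ imply $p_{\mathcal{T}}\alpha=0$'' --- is never established: you label it ``the hard part'' and only announce an intention to verify it componentwise in normal coordinates. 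Since everything past the modular-law step hinges on exactly this, what you have is a reduction, not a proof. You also mislocate the nature of this step: it is not ``the genuinely new degenerate phenomenon.'' It concerns only wedge products with $e$ and $e_n$, i.e.\ the pointwise linear independence of $(e_1,\dots,e_{N-1},e_n)$ in $\mathcal{V}_\Sigma$, and such properties of the $W$-maps are insensitive to the degeneracy of $g^\partial$ (as the paper notes before Lemma \ref{lem:We_bulk}). Indeed it is precisely the intermediate statement in the proof of the nondegenerate Lemma \ref{lem:Omega2,2_d4} in \cite{CCS2020}, and the paper's proof of Lemma \ref{lem:Omega2,2_d4_mod} consists of nothing more than citing that statement and then splitting $\alpha=[v,e]+\theta$ exactly as you did. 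Degeneracy enters only in how $\mathrm{Ker}\, W_{N-3}^{\partial,(2,1)}$ decomposes (Lemma \ref{lem:varrho12_deg} makes $\varrho^{(1,2)}$ fail to surject onto it), not in the independence you are worried about. To close your argument, either import that cited decomposition --- which makes your coordinate computation unnecessary --- or actually carry the computation out; as written, neither is done.
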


\begin{proof}
By definition of $\mathcal{T}$ it is clear that for each element $\alpha \in \mathrm{Ker}W_{N-3}^{\partial (2,1)}$ it is possible to find $\theta \in \mathcal{T}$ and $v \in \mathrm{Ker} W_{N-3}^{\partial, (1,2)}$ such that $\alpha= [v,e] + \theta.$ From the proof of Lemma \ref{lem:Omega2,2_d4} we also know that each element $\beta \in \Omega^{N-2,N-2}_\partial$ can be written as 
$\beta = e^{N-3} \gamma + e_n e^{N-4} \alpha$ for some $\alpha \in \mathrm{Ker}W_{N-3}^{\partial (2,1)}$. Combining these two results we get the claim.
\end{proof}

\begin{proof}[Proof of Theorem \ref{thm:omegadecomposition_deg}]
    Let $\widetilde{\omega} \in \Omega_\partial^{1,2}$. From Lemma \ref{lem:Omega2,2_d4_mod} we deduce that there exist $\sigma \in \Omega_\partial^{1,1}$, $v \in \text{Ker} W_1^{\partial,(1,2)}$ and $\theta \in \mathcal{T}$ such that 
\begin{align*}
e_n e^{N-4} d_{\widetilde{\omega}} e = e^{N-3} \sigma + e_n e^{N-4}[v,e] + e_n e^{N-4} \theta.
\end{align*}
We define $\omega := \widetilde{\omega} - v $. Then $\omega$ and $v$ satisfy \eqref{e:omegadecomp_mod} and \eqref{e:omegareprfix2_mod}.
\end{proof}

In contrast with the non-degenerate case, this theorem does not fix completely the freedom of $\omega \in [\omega]$. Hence we require the following additional equation: 
\begin{align}\label{e:second_structural_constraint}
p_{\mathcal{K}} \omega =0.
\end{align}
Hence \eqref{e:structural_constraint_mod} and \eqref{e:second_structural_constraint} fix uniquely the representative in the equivalence class\footnote{ Starting from the definition of $\mathcal{K}$ in \eqref{e:defTKS}, it is a straightforward check that this last equation fixes the components of $\omega \in \Ker{ W_{N-3}^{\partial,(1,2)}}$ not included in \eqref{e:structural_constraint_mod}. Indeed, the elements of $\omega \in \Ker \rho $ are the ones that no longer appear in the structural constraints in the degenerate case opposed to the non-degenerate one.}. 

{
\subsection{Independence from the choices}\label{sec:Independence}
In this section we explore the independence of the analysis from the choices that we have made in the construction. We prove it through the following general theorem.
\begin{theorem}
    Let $(P, \varpi)$ be a presymplectic manifold with kernel distribution $K$, smooth leaf space $(\underline{P}, \underline{\varpi})$ and canonical projection $\pi: P \rightarrow \underline{P}$. Let $Q$ be a submanifold of $P$ such that
    \begin{align*}
        \rho:= \pi |_{Q} : Q \rightarrow \underline{P}
    \end{align*}
    is a diffeomorphism. Then $(Q, \varpi|_{Q})$ is a symplectic manifold and $\rho$ is a symplectomorphism.
\end{theorem}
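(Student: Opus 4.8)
The plan is to show that the two defining properties of a symplectic manifold—closedness and nondegeneracy of $\varpi|_Q$—follow from the hypothesis that $\rho = \pi|_Q$ is a diffeomorphism onto the leaf space. First I would record the general structure of presymplectic reduction: by definition the kernel distribution $K = \ker \varpi$ is integrable, its leaf space $\underline{P}$ carries the unique two-form $\underline{\varpi}$ characterized by $\pi^*\underline{\varpi} = \varpi$, and this $\underline{\varpi}$ is symplectic (closed because $\varpi$ is closed and $\pi$ is a surjective submersion, nondegenerate by construction of the quotient). These facts are standard and I would invoke them at the outset.

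The core of the argument is the identity $\rho^*\underline{\varpi} = \varpi|_Q$. This is immediate from $\pi^*\underline{\varpi} = \varpi$ by restricting to $Q$: for the inclusion $\iota\colon Q \hookrightarrow P$ we have $\rho = \pi \circ \iota$, so $\rho^*\underline{\varpi} = \iota^*\pi^*\underline{\varpi} = \iota^*\varpi = \varpi|_Q$. Once this identity is in hand, everything follows formally. Since $\rho$ is a diffeomorphism and $\underline{\varpi}$ is closed, $\varpi|_Q = \rho^*\underline{\varpi}$ is closed as the pullback of a closed form. Likewise, since $\rho$ is a diffeomorphism it induces isomorphisms on each tangent space, and the pullback of a nondegenerate form along such an isomorphism is again nondegenerate; hence $\varpi|_Q$ is nondegenerate. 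Therefore $(Q, \varpi|_Q)$ is symplectic, and the same identity $\rho^*\underline{\varpi} = \varpi|_Q$ says precisely that $\rho$ is a symplectomorphism.

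The only point requiring a little care—and the step I would flag as the main (albeit mild) obstacle—is verifying that $\varpi|_Q$ is genuinely nondegenerate \emph{as a two-form on $Q$}, i.e.\ that its radical is trivial when tested against vectors tangent to $Q$, not merely against vectors transverse to the kernel $K$ in $P$. A priori, restricting a presymplectic form to a submanifold could both kill directions (if $Q$ met the leaves tangentially) and introduce degeneracy. The hypothesis that $\rho$ is a diffeomorphism is exactly what rules this out: because $d\rho_q\colon T_qQ \to T_{\pi(q)}\underline{P}$ is a linear isomorphism for every $q \in Q$, the pullback $\rho^*\underline{\varpi}$ has the same rank as $\underline{\varpi}$, which is full. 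Concretely, if $u \in T_qQ$ satisfies $\varpi|_Q(u, \cdot) = 0$, then $\underline{\varpi}(d\rho_q(u), d\rho_q(\cdot)) = 0$ for all tangent vectors, and surjectivity of $d\rho_q$ forces $d\rho_q(u)$ to lie in the radical of $\underline{\varpi}$, hence $d\rho_q(u) = 0$, hence $u = 0$ by injectivity. I would present the proof in this order—quote the standard reduction facts, establish the pullback identity, then deduce closedness and nondegeneracy—keeping the nondegeneracy argument explicit since it is where the diffeomorphism hypothesis does the essential work.
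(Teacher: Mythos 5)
Your proof is correct, but it takes a different route from the paper's. You establish the pullback identity $\rho^*\underline{\varpi} = \varpi|_Q$ first and then import both closedness and, crucially, nondegeneracy from the reduced form $\underline{\varpi}$, which you quote as a standard fact of presymplectic reduction. The paper instead works entirely upstairs in $P$: from the diffeomorphism hypothesis it extracts the splitting $T_xP = T_xQ \oplus K_x$ (the image of $(d_x\rho)^{-1}$ splits the exact sequence $0 \to K_x \to T_xP \to T_{\pi(x)}\underline{P} \to 0$), and then shows by hand that any $v \in T_xQ$ in the radical of $\varpi|_Q$ is $\varpi$-orthogonal to all of $T_xQ \oplus K_x = T_xP$ (orthogonality to $K_x$ being automatic since $K$ is the kernel of $\varpi$), hence lies in $K_x \cap T_xQ = \{0\}$; only afterwards does it invoke the relation $\varpi_x(v,w) = \underline{\varpi}_{\pi(x)}([v],[w])$ to get the symplectomorphism. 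The paper's argument is thus self-contained: it never uses nondegeneracy of $\underline{\varpi}$, and in fact, combined with the final pullback identity, it would yield that nondegeneracy as a corollary. Your argument is shorter and more functorial, but shifts the burden onto the quoted fact that the quotient form is symplectic --- which is legitimate here, since that fact is part of the standard reduction package the theorem's hypotheses presuppose, and its usual one-line proof (push a radical vector down with the surjection $d\pi$) mirrors the paper's computation. Both proofs ultimately pivot on the same point, namely that $d_q\rho$ is a linear isomorphism; your injectivity/surjectivity step and the paper's splitting are two phrasings of it.
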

\begin{proof}
    For every $x\in P$ we have that the exact sequence
    \begin{align*}
        0 \rightarrow K_x \rightarrow T_x P \overset{d_x\pi}{\rightarrow} T_{\pi(x)} \underline{P} \rightarrow 0.
    \end{align*}
    For $x\in Q$ we have the splitting $d_{\rho(x)}: T_{\pi(x)}\underline{P}\rightarrow T_x P$ with image $T_x Q$ which gives $T_xM = T_x Q \oplus K_x$. Let now $v \in (T_xQ)^{\perp}$, then $\varpi_x(v,w)=0$ $\forall w \in T_xQ$. Furthermore $\varpi_x(v,w)= \varpi_x (v, w + \widetilde{w})$ for all $\widetilde{w} \in K_x$. From the previous result we get that 
    $\varpi_x (v, \widehat{w})=0$ for all $\widehat{w} \in T_x P$. This implies that $v \in (T_x P)^{\perp}=K_x$. Therefore $(T_xQ)^{\perp} \subseteq K_x$ and 
    \begin{align*}
        (T_xQ)^{\perp} \cap T_xQ \subseteq K_x \cap T_xQ = \emptyset. 
    \end{align*}
    Hence $(Q, \varpi|_{Q})$ is symplectic.
    
    From the definition of leaf space we have that 
    \begin{align*}
        \varpi_x(v,w) = \underline{\varpi}_{\pi(x)}([v],[w]) \qquad \forall x \in P \; \forall v,w \in T_x P
    \end{align*}
    Restricted to $Q$ this becomes
        \begin{align*}
        \varpi_x(v,w) = \underline{\varpi}_{\rho(x)}([v],[w]) \qquad \forall x \in Q \; \forall v,w \in T_x Q.
    \end{align*}
    Since $\rho$ is a diffeomorphism and $(Q, \varpi|_{Q})$ is a symplectic manifold, this last equation proves that $\rho$ is a symplectomorphim.
\end{proof}
\begin{corollary}
If $Q$ and $Q'$ are submanifolds of $P$ such that $\pi|_{Q}$ and $\pi|_{Q'}$ are diffeomorphisms with $\underline{P}$, then $(Q, \varpi|_{Q})$ and $(Q', \varpi|_{Q'})$ are canonically simplectomorphic.
\end{corollary}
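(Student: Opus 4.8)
The plan is to derive the statement directly from the preceding Theorem by applying it to each of the two submanifolds separately and then composing the resulting maps. First I would invoke the Theorem with $Q$: since by hypothesis $\rho := \pi|_{Q} : Q \to \underline{P}$ is a diffeomorphism, the Theorem guarantees that $(Q, \varpi|_{Q})$ is a symplectic manifold and that $\rho$ is a symplectomorphism onto $(\underline{P}, \underline{\varpi})$. Applying the very same Theorem to $Q'$ yields that $(Q', \varpi|_{Q'})$ is symplectic and that $\rho' := \pi|_{Q'} : Q' \to \underline{P}$ is likewise a symplectomorphism.

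Next I would build the comparison map as the composition $\phi := (\rho')^{-1} \circ \rho : Q \to Q'$. Being the composition of a symplectomorphism with the inverse of a symplectomorphism, $\phi$ is itself a symplectomorphism; concretely, using $\rho^* \underline{\varpi} = \varpi|_{Q}$ and $(\rho')^* \underline{\varpi} = \varpi|_{Q'}$ one checks
\begin{align*}
\phi^*(\varpi|_{Q'}) = \rho^* \big((\rho')^{-1}\big)^* (\varpi|_{Q'}) = \rho^* \underline{\varpi} = \varpi|_{Q},
\end{align*}
which is exactly the asserted symplectomorphism between the two symplectic submanifolds.

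It remains only to explain in what sense $\phi$ is \emph{canonical}, and this is the single point of the argument deserving any care. I would observe that $\phi$ is manufactured entirely out of the canonical projection $\pi$ onto the leaf space, which depends only on the presymplectic datum $(P, \varpi)$ through its kernel distribution $K$, and not on any auxiliary choice. Equivalently, $\phi$ is the unique map satisfying $\pi \circ \phi = \pi|_{Q}$, i.e. it sends each point of $Q$ to the unique point of $Q'$ lying on the same leaf of the characteristic foliation. Since no further data enter its construction, $\phi$ is canonical and the two symplectic submanifolds are canonically symplectomorphic. Beyond this observation, the corollary is a purely formal consequence of the Theorem, so I do not anticipate any substantive obstacle.
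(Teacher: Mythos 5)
Your proof is correct and takes exactly the route the paper intends: the corollary is stated there without proof as an immediate consequence of the preceding theorem, obtained by applying it to both $Q$ and $Q'$ and composing $(\pi|_{Q'})^{-1}\circ\pi|_{Q}$. Your verification of the pullback identity and your justification of canonicity (the map is built solely from the canonical projection $\pi$, i.e., it matches points lying on the same leaf of the kernel foliation) supply precisely the details the paper leaves implicit.
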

\begin{remark}
In our case $P$ is the space of restrictions to the boundary $\widetilde{F}_{PC}$ with presymplectic form $\check{\varpi}$, $Q$ is the subspace of $\widetilde{F}_{PC}$ where $\omega$ satisfies the constraints \eqref{e:structural_constraint_mod} and \eqref{e:second_structural_constraint}, while $\underline{P}$ is the geometric phase space $F_{PC}^{\partial}$ with simplectic form $\varpi_{PC}^{\partial}$ defined in \eqref{e:classical-boundary-symplform}. The map $\pi$ is given by $\pi_{PC}$ defined in \eqref{e:presymplecticreduction} and $\rho$ is its restriction to $Q$. The inverse of $\rho$ is given by the map $(e,[\omega]) \mapsto (e, \omega')$ where $ \omega'$ is the unique representative of  the class $[\omega]$ satisfying \eqref{e:structural_constraint_mod} and \eqref{e:second_structural_constraint}.
\end{remark}

The existence of a canonical symplectomorphism between the constructions corresponding to different possible choices of the representative in the equivalence class of $[\omega]$ guarantees the independence of the construction on such choices. In particular the choice of the projection that leads to \eqref{e:second_structural_constraint} is immaterial in the construction since we do not use this constraints anywhere else.
}
{\subsection{Constraints of the theory}
Let} us now turn to the constraints of the theory. In the degenerate case we can still adopt the approach of the non-degenerate one adapting it to encompass the differences between Lemma \ref{lem:Omega2,1_d4} and Lemma \ref{lem:Omega2,1_d4_mod}.
The main difference is that now the constraint $L_c$ together with the new structural constraint \eqref{e:structural_constraint_mod} is no longer equivalent to $d_{\omega}e=0$ (one set of the Euler-Lagrange equations in the bulk) since we are missing the third equation in \eqref{e:ConditionforOmega21_d4_deg}. Indeed we have to add an additional constraint that, thanks to Lemma \ref{lem:relationSandT}, we can express as 
\begin{equation}\label{e:Constraint_R}
R_{\tau} = \int_{\Sigma} \tau  d_{\omega} e
\end{equation}
through an odd Lagrange multiplier $\tau \in \mathcal{S}[1]$\footnote{As before the notation $[1]$ denotes that $\tau$ is an odd quantity.}. Furthermore, to simplify the computation of the brackets between the constraints, it is useful to modify the constraint $H_{\lambda}$ by adding to it a term proportional to $R_{\tau}:$
\begin{equation}
    H_{\lambda} = \int_{\Sigma} \lambda e_n \left(\frac{1}{(N-3)}e^{N-3}F_\omega -e^{N-4}(\omega-\omega_0)p_{\mathcal{T}}(d_{\omega}e)+ \frac{1}{(N-1)!}\Lambda e^{N-1}\right).
\end{equation}
Note that we can as well express the second term in this constraint as
\begin{align*}
    \lambda p_{\mathcal{S}} ( e_n e^{N-4}(\omega-\omega_0))d_{\omega}e
\end{align*}
to make it explicitly in the form of \eqref{e:Constraint_R}.

\begin{remark} 
The additional part in $H_{\lambda}$ proportional to $R_{\tau}$ has been added only to ease the computation of the Hamiltonian vector field of the constraint $H_{\lambda}$ itself. Such a linear combination does not affect the constrained set and the structure of the constraints, i.e. the distinction between first and second class constraints (see Proposition \ref{prop:numberofsecondclassconstraints} and Remark \ref{rem:linearcomb_constraints_fsc} in Appendix \ref{sec:first-second_class_def}). Similar considerations hold also for the part of the constraint $P_{\xi}$ proportional to $L_c$, as already mentioned in \cite[Remark 4.24]{CS2019} and \cite[Remark 21]{CCS2020}.
\end{remark}

Before analysing the structure of these constraints and their Poisson brackets we need some additional results concerning the elements in $\mathcal{S}$ whose variations are constrained and are thus depending on $e$.

\begin{lemma}\label{lem:expression_deltatau_constrained}
The variation of an element $\tau \in \mathcal{S}$ is constrained by the following equations:
\begin{align*}
    p_{\widetilde{\rho}}'\delta \tau &= \widetilde{\rho}^{-1}\left(\frac{\delta \widetilde{\rho}}{\delta e } (\tau) \delta e\right), \\
    p_{W}'\delta \tau &= W_1^{-1}(\tau \delta e)
\end{align*}
where the inverses\footnote{ Note that, in order to avoid cumbersome notation, we will from now on avoid to write all the indices of the inverse functions of $W_{\bullet}^{\partial,(\bullet,\bullet)}$ and of $\widetilde{\rho}^{(\bullet,\bullet)}$.} are defined on their images and $p_{\widetilde{\rho}}'$ and $p_{W}'$ are respectively the projections to a complement of the kernel of $\widetilde{\rho}$ and $W_1^{\partial, (N-3,N-1)}$.
\end{lemma}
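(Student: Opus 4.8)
The goal is to compute how a variation $\delta\tau$ of an element $\tau \in \mathcal{S}$ is forced to behave, given that $\mathcal{S}$ is defined by the two simultaneous conditions $\tau \in \mathrm{Ker}\, W_1^{\partial,(N-3,N-1)}$ and $\tau \in \mathrm{Ker}\,\widetilde{\varrho}^{(N-3,N-1)}$ (see \eqref{e:defS}). The key point is that both defining maps depend on the coframe $e$ (the first through the wedge with $e$, the second through $\widetilde{e}$, which is built from $e$), so varying $\tau$ while staying in $\mathcal{S}$ forces a relation between $\delta\tau$ and $\delta e$. The plan is to differentiate each of the two defining equations with respect to the full variation $\delta$ and then solve the resulting linear equations for the components of $\delta\tau$ lying in a complement of the relevant kernels.

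First I would differentiate the condition $W_1^{\partial,(N-3,N-1)}(\tau) = \tau\wedge e = 0$. Applying $\delta$ and using the Leibniz rule gives $\delta\tau\wedge e + \tau\wedge\delta e = 0$, i.e. $W_1(\delta\tau) = -\tau\,\delta e$ up to the sign coming from the total-degree commutation convention. Since $W_1^{\partial,(N-3,N-1)}$ is not injective, this equation only constrains the part of $\delta\tau$ transversal to its kernel: writing $p_W'$ for the projection onto a complement of $\mathrm{Ker}\,W_1^{\partial,(N-3,N-1)}$, the map $W_1$ restricted to that complement is invertible onto its image, and $\tau\,\delta e$ lies in that image by construction. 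Inverting there yields $p_W'\,\delta\tau = W_1^{-1}(\tau\,\delta e)$, which is the second displayed formula. Entirely analogously, differentiating $\widetilde{\varrho}^{(N-3,N-1)}(\tau) = [\tau,\widetilde{e}] = 0$ produces $\widetilde{\varrho}(\delta\tau) + \bigl(\tfrac{\delta\widetilde{\varrho}}{\delta e}(\tau)\bigr)\delta e = 0$, where the second term collects the dependence of $\widetilde{\varrho}$ on $e$ through $\widetilde{e}$. Projecting onto a complement of $\mathrm{Ker}\,\widetilde{\varrho}$ via $p_{\widetilde{\rho}}'$ and inverting $\widetilde{\rho}$ on its image gives the first displayed formula.

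The main technical obstacle, and the step requiring the most care, is the term $\tfrac{\delta\widetilde{\rho}}{\delta e}(\tau)$: one must verify that varying the map $\widetilde{\varrho}$ is well defined and that $\widetilde{e} = e - \widehat{e}$ depends on $e$ in a controlled way, since $\widehat{e} = \beta\,\iota_X e$ is itself fixed implicitly by the orthogonality-type condition imposed earlier in Section \ref{sec:technical}. I would need to check that the right-hand sides genuinely land in the images of $\widetilde{\rho}$ and $W_1$ respectively, so that the inverses $\widetilde{\rho}^{-1}$ and $W_1^{-1}$ (defined only on those images, per the footnote) can legitimately be applied; this is where the algebraic structure of $\mathcal{S}$ from Proposition \ref{prop:components_of_tau} is implicitly used. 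Finally I would confirm that the two projected equations together capture all the constrained components of $\delta\tau$ — the remaining components, lying in the intersection of the two kernels, are the genuinely free directions and are left unconstrained, consistently with the dimension count $\dim\mathcal{S} = \tfrac{N(N-3)}{2}$.
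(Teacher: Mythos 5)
Your proposal is correct and follows essentially the same route as the paper's proof: vary the two defining equations $\tau\wedge e=0$ and $\widetilde{\rho}(\tau)=0$ of $\mathcal{S}$, then invert $W_1^{\partial,(N-3,N-1)}$ and $\widetilde{\rho}$ on their images after projecting onto complements of their kernels. The additional checks you flag (that the right-hand sides lie in the relevant images, and that the $e$-dependence of $\widetilde{e}$ is controlled) are sensible but are treated by the paper elsewhere (Lemma \ref{lem:[tau,e]inImW} and Lemma \ref{lem:computations_for_LR_and_PR}) rather than inside this proof.
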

\begin{remark}
Different choices of projections lead to different terms in the kernel of the two maps. Nonetheless these additional terms are in $\mathcal{S}$ where the variation is free. Hence they will not play any role in the computations.
\end{remark}
\begin{proof}
From \eqref{e:defS} we know the elements $\tau \in \mathcal{S}$ must satisfy the following equations:
\begin{align*}
    \tau \wedge e =0; \quad  \widetilde{\rho}(\tau)=0.
\end{align*}
Hence varying each equation we obtain some constraints for the variation $\delta\tau$:
\begin{align*}
    \delta \tau \wedge e - \tau \wedge \delta e =0 ; \quad  \widetilde{\rho}(\delta \tau) + \frac{\delta \widetilde{\rho}}{\delta e } (\tau) \delta e =0.
\end{align*}
We can invert these equations using the inverses of $W_1^{\partial, (N-3, N-1)}$ and $\widetilde{\rho}$ on their images. Denoting with $p'_W$ and $p'_{\widetilde{\rho}}$  the projections to some complements of the kernel of $W_1^{\partial, (N-3, N-1)}$ and $\widetilde{\rho}$  in $\Omega_{\partial}^{N-3, N-1}$ respectively, we obtain
\begin{align*}
    p'_W \delta \tau = W_1^{-1} (\tau \wedge \delta e ) ; \quad p'_{\widetilde{\rho}} \delta \tau= \widetilde{\rho}^{-1}\left(\frac{\delta \widetilde{\rho}}{\delta e } (\tau) \delta e\right).
\end{align*}
These relations fix the constrained part of the variation of $\tau \in \mathcal{S}$ in terms of the variation of $e$.  
\end{proof}

\begin{lemma} \label{lem:computations_for_LR_and_PR}
The following identities hold:
\begin{align*}
    \widetilde{\rho}^{-1}\left(\frac{\delta \widetilde{\rho}}{\delta e } (\tau) [c,e]\right)=p'_{\widetilde{\rho}}[c, \tau], \qquad  \widetilde{\rho}^{-1}\left(\frac{\delta \widetilde{\rho}}{\delta e } (\tau) \mathcal{L}_{\xi}^{\omega_0} e\right)p'_{\widetilde{\rho}}\mathcal{L}_{\xi}^{\omega_0}\tau.
\end{align*}
\end{lemma}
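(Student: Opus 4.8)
The plan is to establish both identities directly from the definition of the map $\widetilde{\rho}^{(\bullet,\bullet)}$ as the generalized Lie-algebra action $X \mapsto [X,\widetilde{e}]$ together with the derivation property of the variational operator. The key observation is that the quantity $\frac{\delta \widetilde{\rho}}{\delta e}(\tau)\,\delta e$ is precisely the part of the variation of $\widetilde{\rho}(\tau)=[\tau,\widetilde{e}]$ coming from varying $\widetilde{e}$ (equivalently $e$) while holding $\tau$ fixed. Since $\widetilde{e}=e-\widehat{e}=e-\beta\,\iota_X e$ depends linearly on $e$ (for $\beta$, $X$, $Y_0^i$ treated as fixed background data near the reference coframe), the derivative $\frac{\delta \widetilde{\rho}}{\delta e}(\tau)\,\delta e$ is simply $[\tau,\delta\widetilde{e}]$, where $\delta\widetilde{e}$ is the induced variation of $\widetilde{e}$.

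First I would compute $\frac{\delta\widetilde\rho}{\delta e}(\tau)\,[c,e]$. Substituting the variation $\delta e = [c,e]$ (the infinitesimal $SO(N-1,1)$-gauge transformation generated by the Lagrange multiplier $c\in\Omega^{0,2}_\partial[1]$) gives $\frac{\delta\widetilde\rho}{\delta e}(\tau)\,[c,e]=[\tau,[c,\widetilde e]]$, using that the construction of $\widetilde e$ from $e$ is equivariant, so $\delta\widetilde e = [c,\widetilde e]$ under a gauge variation. The graded Jacobi identity for the bracket $[\cdot,\cdot]$ on $\Omega^{\bullet,2}_\partial$ then lets me rewrite $[\tau,[c,\widetilde e]]$ as a combination of $[c,[\tau,\widetilde e]]$ and $[[c,\tau],\widetilde e]$. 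The first summand is $[c,\widetilde\rho(\tau)]=0$ because $\tau\in\mathcal S$ satisfies $\widetilde\rho(\tau)=0$ by \eqref{e:defS}. The second summand is exactly $\widetilde\rho([c,\tau])$. Applying $\widetilde\rho^{-1}$ (defined on its image) and projecting with $p'_{\widetilde\rho}$ onto a complement of $\ker\widetilde\rho$ yields $\widetilde\rho^{-1}(\frac{\delta\widetilde\rho}{\delta e}(\tau)[c,e])=p'_{\widetilde\rho}[c,\tau]$, which is the first identity.

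For the second identity the argument is parallel, with the gauge variation replaced by the diffeomorphism variation $\delta e=\mathcal L^{\omega_0}_\xi e=\iota_\xi d_{\omega_0}e-d_{\omega_0}\iota_\xi e$. The covariant Lie derivative $\mathcal L^{\omega_0}_\xi$ is again a degree-preserving derivation compatible with the bracket and with the construction of $\widetilde e$, so $\delta\widetilde e=\mathcal L^{\omega_0}_\xi\widetilde e$ and hence $\frac{\delta\widetilde\rho}{\delta e}(\tau)\,\mathcal L^{\omega_0}_\xi e=[\tau,\mathcal L^{\omega_0}_\xi\widetilde e]$. Using the Leibniz rule $\mathcal L^{\omega_0}_\xi[\tau,\widetilde e]=[\mathcal L^{\omega_0}_\xi\tau,\widetilde e]+[\tau,\mathcal L^{\omega_0}_\xi\widetilde e]$ and again $[\tau,\widetilde e]=\widetilde\rho(\tau)=0$, so $\mathcal L^{\omega_0}_\xi[\tau,\widetilde e]=0$, I obtain $[\tau,\mathcal L^{\omega_0}_\xi\widetilde e]=-[\mathcal L^{\omega_0}_\xi\tau,\widetilde e]=-\widetilde\rho(\mathcal L^{\omega_0}_\xi\tau)$. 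After inverting and projecting this gives $p'_{\widetilde\rho}\mathcal L^{\omega_0}_\xi\tau$ up to a sign, which fixes the normalization of the stated formula.

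The main obstacle I anticipate is justifying that the background data $(X,\beta,Y_0^i)$ entering the definition of $\widetilde e$ can legitimately be held fixed, so that $\delta\widetilde e$ equals the \emph{equivariant} transport of the variation (i.e. $[c,\widetilde e]$ and $\mathcal L^{\omega_0}_\xi\widetilde e$ respectively) rather than picking up extra terms from varying $\widehat e=\beta\,\iota_X e$; this is exactly the content of the defining condition for $\beta$ in Lemma \ref{lem:deftau_rho} and must be invoked to discard the spurious contributions. A secondary subtlety is the bookkeeping of the graded signs in the Jacobi and Leibniz identities, since $c$, $\xi$ and $\tau$ all carry odd total degree; but these only affect the overall sign in front of $p'_{\widetilde\rho}\mathcal L^{\omega_0}_\xi\tau$, which is absorbed into the stated identity.
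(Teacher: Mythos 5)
Your overall skeleton matches the paper's: vary the defining relation $[\tau,\widetilde e]=0$, reduce $\frac{\delta\widetilde\rho}{\delta e}(\tau)\,\delta e$ to $[\tau,\delta\widetilde e]$, then use the graded Jacobi identity (gauge case) or the Leibniz rule for $\mathcal L^{\omega_0}_\xi$ (diffeomorphism case) together with $\widetilde\rho(\tau)=0$, and finally invert and project. However, there is a genuine gap exactly at the step you flag as an ``obstacle'' and then wave away: the claim that $\delta\widetilde e$ equals the equivariant transport of $\delta e$. The data $X$ and $\beta$ are \emph{not} background: they are determined by $e$ through $\iota_X g^\partial=0$ (with $g^\partial=\eta(e,e)$), $\iota_X\beta=1$, and the $Y_0$-condition, so $\delta\widehat e=\delta\beta\,\iota_Xe+\beta\,\iota_{\delta X}e+\beta\,\iota_X\delta e$ contains contributions from $\delta X$ and $\delta\beta$ that must be computed, not discarded. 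This is precisely where the paper's proof does its real work: it derives $\delta X=2{g^\partial}^{-1}(\iota_X\eta(\delta e,e))+\lambda X$ and the equations fixing $\delta\beta$, and then checks case by case that for $\delta e=[c,e]$ one gets $\delta X=\lambda X$, $\delta\beta=\lambda\beta$ (pure rescalings whose contributions cancel, giving $\delta\widetilde e=[c,\widetilde e]$), while for $\delta e=\mathcal L^{\omega_0}_\xi e$ one gets $\delta X=\mathcal L_\xi X+\lambda X$ and $\delta\beta=\mathcal L^{\omega_0}_\xi\beta+\lambda\beta$, which assemble into $\delta\widetilde e=\mathcal L^{\omega_0}_\xi\widetilde e$.

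Your proposal is moreover internally inconsistent on this point: in the opening paragraph you assert that $(X,\beta,Y_0^i)$ may be held fixed so that $\delta\widetilde e$ is just the linear transport of $\delta e$, but in the diffeomorphism case holding them fixed gives $\delta\widehat e=\beta\,\iota_X\mathcal L^{\omega_0}_\xi e$, which differs from $\mathcal L^{\omega_0}_\xi(\beta\,\iota_X e)$ by the nonvanishing terms $(\mathcal L_\xi\beta)\,\iota_Xe+\beta\,\iota_{[\xi,X]}e$; the fixed-background assumption and the equivariance you invoke are incompatible, and it is the equivariance (proved via the variations of $X$ and $\beta$) that is correct. That this works out in the gauge case even with frozen $X,\beta$ is an accident of $\iota_X$ commuting with the internal action, not a justification. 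Citing ``the defining condition for $\beta$ in Lemma \ref{lem:deftau_rho}'' does not fill this hole: what is needed is the explicit computation of $\delta X$ and $\delta\beta$ under each variation, as in the paper. The residual sign you propose to ``absorb'' should likewise be settled by the graded conventions rather than by fiat, but that is minor compared to the missing equivariance argument.
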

\begin{proof}
    We start by making more explicit the expression $\widetilde{\rho}^{-1}\left(\frac{\delta \widetilde{\rho}}{\delta e } (\tau) \delta e\right)$.
    By definition, if $\tau \in \mathcal{S}$, then $[\tau, \widetilde{e}]=0$. Hence
    \begin{align*}
        0 = \delta [\tau, \widetilde{e}] = [\delta \tau, \widetilde{e}] + [\tau, \delta \widetilde{e}].
    \end{align*}
    We now compute $\delta \widetilde{e}$ in terms of $\delta e$:
    \begin{align*}
        \delta \widetilde{e}= \delta e - \delta \widehat{e}= \delta e - \delta (\beta \iota_X e)= \delta e - \delta \beta \iota_X e + \beta \iota_{\delta X} e -  \beta \iota_X \delta e.
    \end{align*}
    We have then to compute the variation $\delta X$ and $\delta \beta$. We start from the first: from the defining equation $\iota_X g^{\partial}=0$ we get
    \begin{align*}
        \iota_{\delta X} g^{\partial} - \iota_X \delta g^{\partial}=0
    \end{align*}
    and hence, inverting $g^{\partial}$ on its image, we get
    $\delta X= {g^{\partial}}^{-1}(\iota_X \delta g^{\partial})$. Since $g^{\partial}$ can be written in terms of $e$ and $\eta$ as $g^{\partial}= \eta(e,e)$, we can write this part of $\delta X$ in terms of $\delta e$.
    The remaining part of $\delta X$ not fixed by this equation is such that $\iota_{\delta X} g^{\partial}=0$, hence 
    \begin{align*}
        \delta X=  2{g^{\partial}}^{-1}(\iota_X \eta ( \delta e , e )) + \lambda X
    \end{align*}
    for some function $\lambda$.

    Let us now pass to $\delta \beta$. Its value is completely determined by the equations $\iota_{X}\delta \beta - \iota_{\delta X} \beta=0$ and     
    \begin{align*}
        \iota_{Y_0^1}\dots \iota_{Y_0^{N-2}}&\left(\iota_{\delta X} (\beta e^{N-3}) v - \iota_X(\delta \beta  e^{N-3}) v \right)\\ &+ \iota_{Y_0^1}\dots \iota_{Y_0^{N-2}}\left( (N-3)\iota_X(\beta  \delta e  e^{N-4}) v + \iota_X (\beta  e^{N-3}) \delta  v\right)= 0.
    \end{align*}
    This last equation must hold for every $v$ and $\delta v$ that satisfy respectively $e^{N-3} \wedge v = 0$ and $(N-3) \delta e e^{N-4} v + e^{N-3}\delta v=0$.

We can now plug  the values $\delta e= [c,e]$ and $\delta e= \mathcal{L}_{\xi}^{\omega_0} e$ in the first formula of Lemma \ref{lem:expression_deltatau_constrained} using the above results.
In the first case we get
\begin{align*}
    \delta X=  2{g^{\partial}}^{-1}(\iota_X [[c,e] , e ])+ \lambda X= 2{g^{\partial}}^{-1}(\iota_X [c,[e  , e ]])+ \lambda X= \lambda X
\end{align*}
and $ \delta \beta= \lambda \beta$. Consequently
\begin{align*}
    \widetilde{\rho}^{-1}\left([\tau, [c,e] -  \beta \iota_X [c,e]] \right)
    &=\widetilde{\rho}^{-1}\left([\tau, [c,e] -   [c, \beta\iota_X e]] \right)\\
    &= \widetilde{\rho}^{-1}\left([\tau, [c,\widetilde{e}] \right)\\
    &= \widetilde{\rho}^{-1}\left([[\tau, c],\widetilde{e}] +[c, [\tau, \widetilde{e}]]\right)\\
    &= p'_{\widetilde{\rho}} [\tau, c].
\end{align*}
In the second case we have
\begin{align*}
    \delta X=  2{g^{\partial}}^{-1}(\iota_X [\mathcal{L}_{\xi}^{\omega_0} e , e ])+ \lambda X= {g^{\partial}}^{-1}(\iota_X \mathcal{L}_{\xi}^{\omega_0} g^{\partial})+ \lambda X.
\end{align*}
and $\delta \beta= \mathcal{L}_{\xi}^{\omega_0} \beta+ \lambda \beta$.
In coordinates we obtain the following expressions
\begin{align*}
  \delta X^{\mu}&= X^{\rho} \partial_{\rho} \xi^{\mu} + \xi^{\rho} \partial_{\rho} X^{\mu} + \lambda X^{\mu}\\
    \iota_X \mathcal{L}_{\xi}^{\omega_0} e &= X^{\rho} \xi^{\mu}{ d_{\omega_0}}_{\mu} e_{\rho} - X^{\rho} e_{\mu} d_{\rho} \xi^{\mu}.
\end{align*}
Hence 
\begin{align*}
    \iota_X \mathcal{L}_{\xi}^{\omega_0} e + \iota_{\delta X} e= \iota_{\xi} d_{\omega} (\iota_X e )+ \lambda \iota_X e ,
\end{align*}
and collecting all these formulas we get
\begin{align*}
    \widetilde{\rho}^{-1}\left(\frac{\delta \widetilde{\rho}}{\delta e } (\tau) \mathcal{L}_{\xi}^{\omega_0} e\right)
    &= \widetilde{\rho}^{-1}\left([\tau,\mathcal{L}_{\xi}^{\omega_0} e - \mathcal{L}_{\xi}^{\omega_0} (\beta \iota_X e )]\right)\\
    &= \widetilde{\rho}^{-1}\left([\tau,\mathcal{L}_{\xi}^{\omega_0} \widetilde{e}] \right)\\
    &= \widetilde{\rho}^{-1}\left(\mathcal{L}_{\xi}^{\omega_0}[\tau, \widetilde{e}] - [\mathcal{L}_{\xi}^{\omega_0}\tau, \widetilde{e}]\right)\\
    & =  p'_{\widetilde{\rho}}\mathcal{L}_{\xi}^{\omega_0}\tau.
\end{align*}
\end{proof}

The addition of the constraint $R_{\tau}$ to compensate the different structure of the lightlike case has important consequences on the structure of the set of constraints. 

\begin{theorem}\label{thm:Brackets_constraints}
    Let $g^\partial$ be degenerate on $\Sigma$. Then the structure of the Poisson brackets of the constraints $L_c$, $P_{\xi}$, $H_{\lambda}$ and $R_{\tau}$ is given by the following expressions:
    \begin{align*}
    &\begin{aligned}
        	&\{L_c, L_c\}  = - \frac{1}{2} L_{[c,c]} & \qquad\qquad
       	& \{P_{\xi}, P_{\xi}\}  =  \frac{1}{2}P_{[\xi, \xi]}- \frac{1}{2}L_{\iota_{\xi}\iota_{\xi}F_{\omega_0}} \\
       	& \{L_c, P_{\xi}\}  =  L_{\mathcal{L}_{\xi}^{\omega_0}c} &
       	& \{H_{\lambda},H_{\lambda}\}  \approx F_{\tau' \tau'} \\
       	&  \{L_c, R_{\tau}\} =  -R_{p_{\mathcal{S}}[c, \tau]} & 
       	& \{P_{\xi},R_{\tau}\}  = R_{p_{\mathcal{S}}\mathcal{L}_{\xi}^{\omega_0}\tau}.\\
       	&\{ R_{\tau}, H_{\lambda} \}   \approx F_{\tau \tau'} + G_{\lambda \tau}  &
 		& \{R_{\tau},R_{\tau}\}  \approx F_{\tau \tau}
        \end{aligned}\\
	&\{L_c,  H_{\lambda}\}
        = - P_{X^{(a)}} + L_{X^{(a)}(\omega - \omega_0)_a} - H_{X^{(n)}} +  R_{p_{\mathcal{S}}(X^{(a)}e_a e^{N-4} (\omega- \omega_0)-\lambda e_n d_{\omega_0} c)} \\        
        &\{P_{\xi},H_{\lambda}\} =  P_{Y^{(a)}} -L_{ Y^{(a)} (\omega - \omega_0)_a} +H_{ Y^{(n)}}-R_{p_{\mathcal{S}}(Y^{(a)} e_a e^{N-4} (\omega- \omega_0) -\lambda e_n \iota_ {\xi}F_{\omega_0})} 
    \end{align*}
    where $\tau'=  p_{\mathcal{S}}(\lambda e_n e^{N-4} (\omega-\omega_0))$, $X= [c, \lambda e_n ]$, $Y = \mathcal{L}_{\xi}^{\omega_0} (\lambda e_n)$ and $Z^{(a)}$, $Z^{(n)}$ are the components of $Z\in\{X,Y\}$ with respect to the frame $(e_a, e_n)$. Furthermore $F_{\tau \tau}$, $F_{\tau \tau'}$, $F_{\tau' \tau'}$ and $G_{\lambda \tau}$ are functions of $e$, $\omega$, $\tau $ (or $\tau'$) and $\lambda$ defined in the proof that are not proportional to any other constraint.
\end{theorem}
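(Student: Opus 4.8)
The plan is to compute, for each of the four constraints, its Hamiltonian vector field with respect to the symplectic form $\varpi^{\partial}_{PC}=\int_\Sigma e^{N-3}\delta e\,\delta[\omega]$ of \eqref{e:classical-boundary-symplform}, and then read off every bracket as $\{F,G\}=\iota_{X_F}\delta G$. Writing a generic variation as $\delta F=\int_\Sigma(\alpha_e\,\delta e+\alpha_\omega\,\delta\omega)$ and a vector field as $X_F=\mathbb{X}_e\tfrac{\delta}{\delta e}+\mathbb{X}_\omega\tfrac{\delta}{\delta\omega}$, the defining equation $\iota_{X_F}\varpi^{\partial}_{PC}=\delta F$ forces $e^{N-3}\mathbb{X}_e=\alpha_\omega$ and $e^{N-3}\mathbb{X}_\omega=-\alpha_e$. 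The $e$-component is then recovered by inverting $W_{N-3}^{\partial,(1,1)}$, which is injective by Lemma \ref{lem:We_boundary}.(\ref{lem:Wep11}), while the $\omega$-component is recovered only modulo $\mathrm{Ker}\,W_{N-3}^{\partial,(1,2)}$; since this kernel is exactly the gauge direction quotiented out in passing to $F_{PC}^{\partial}$, the class $[\mathbb{X}_\omega]$ is well defined. The structural constraints \eqref{e:structural_constraint_mod} and \eqref{e:second_structural_constraint} guarantee that $\alpha_\omega$ lands in the relevant image, so every vector field is well defined on the geometric phase space.

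For the three constraints $L_c$, $P_\xi$, $H_\lambda$ I would recycle the variations already computed in the non-degenerate case (proof of Theorem \ref{thm:first-class-constraints} in \cite{CCS2020}), keeping track of the extra contributions produced by the $p_{\mathcal{T}}(d_\omega e)$ term added to $H_\lambda$. The genuinely new input is the vector field of $R_\tau=\int_\Sigma\tau\,d_\omega e$. Here the subtlety is that $\tau\in\mathcal{S}$ is not an external parameter but depends on $e$ through the defining equations of $\mathcal{S}$, so that its variation is constrained by Lemma \ref{lem:expression_deltatau_constrained}. Varying $R_\tau$ therefore produces, besides the naive $\int_\Sigma\tau\,\delta\omega\,e$ and $\int_\Sigma\tau\,[\omega,\delta e]$ pieces, an additional term coming from $\delta\tau$; the $\omega$-component of $X_{R_\tau}$ is ultimately controlled by $W_{N-3}^{-1}([\tau,e])$, which exists and has the explicit free components supplied by Lemma \ref{lem:[tau,e]inImW} and Corollary \ref{cor:components_of_W-1[tau,e]}.

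With the vector fields in hand, I would evaluate the brackets block by block. The $\{L,L\}$, $\{L,P\}$ and $\{P,P\}$ brackets are insensitive to the degeneracy and reproduce the non-degenerate formulas of Theorem \ref{thm:first-class-constraints} verbatim. The mixed brackets $\{L_c,H_\lambda\}$ and $\{P_\xi,H_\lambda\}$ follow the same computation as in the non-degenerate case but now fail to close into $L,P,H$ alone: the leftover pieces are precisely collected into the $R_{p_{\mathcal{S}}(\dots)}$ terms, which one recognizes as constraints of type $R$ using Lemma \ref{lem:relationSandT}, so that only the $p_{\mathcal{S}}$-projection survives. The brackets $\{L_c,R_\tau\}$ and $\{P_\xi,R_\tau\}$ are obtained directly from the two identities of Lemma \ref{lem:computations_for_LR_and_PR}, which translate the action of $X_{L_c}$ and $X_{P_\xi}$ on the $e$-dependent object $\tau$ into $p'_{\widetilde{\rho}}[c,\tau]$ and $p'_{\widetilde{\rho}}\mathcal{L}^{\omega_0}_\xi\tau$; projecting back to $\mathcal{S}$ then yields $-R_{p_{\mathcal{S}}[c,\tau]}$ and $R_{p_{\mathcal{S}}\mathcal{L}^{\omega_0}_\xi\tau}$.

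The hard part, and the source of the second-class phenomenon, is the block $\{H_\lambda,H_\lambda\}$, $\{R_\tau,H_\lambda\}$, $\{R_\tau,R_\tau\}$. Observe first that the modified $H_\lambda$ secretly contains an $R_{\tau'}$ piece with $\tau'=p_{\mathcal{S}}(\lambda e_n e^{N-4}(\omega-\omega_0))$, so $\{H_\lambda,H_\lambda\}$ inherits an $R$--$R$ type contribution and the three brackets of this block are all governed by the same mechanism. The remainders $F_{\tau\tau}$, $F_{\tau\tau'}$, $F_{\tau'\tau'}$ and $G_{\lambda\tau}$ do not reorganize into any of $L,P,H,R$; to establish this --- and hence that $R_\tau$ is genuinely second class --- I would use the explicit component description of $\mathcal{S}$ from Proposition \ref{prop:components_of_tau} together with Corollary \ref{cor:components_of_W-1[tau,e]} to exhibit the surviving quadratic expression in $\tau,\tau'$ and verify it is not proportional to a constraint. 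Because these identities only need to hold on the constrained locus, they are stated as \emph{weak equalities} (denoted $\approx$), and the bookkeeping of which terms may be dropped modulo the constraints is the most delicate point; the $e$-dependence of $\tau$ reappears here and must be propagated consistently through Lemma \ref{lem:expression_deltatau_constrained} whenever one differentiates the hidden $R$-part inside the modified $H_\lambda$.
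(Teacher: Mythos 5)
Your proposal follows essentially the same route as the paper's own proof: computing the Hamiltonian vector fields from the constraint variations (with the $e$-dependence of $\tau$ handled via Lemma \ref{lem:expression_deltatau_constrained} and the invertibility supplied by Lemma \ref{lem:[tau,e]inImW} and Corollary \ref{cor:components_of_W-1[tau,e]}), recycling the non-degenerate brackets, obtaining $\{L_c,R_\tau\}$ and $\{P_\xi,R_\tau\}$ from Lemma \ref{lem:computations_for_LR_and_PR}, collecting the leftovers of $\{L_c,H_\lambda\}$ and $\{P_\xi,H_\lambda\}$ into $R$-terms via Lemma \ref{lem:relationSandT}, and reducing the $\{H_\lambda,H_\lambda\}$, $\{R_\tau,H_\lambda\}$, $\{R_\tau,R_\tau\}$ block to the explicit component analysis of Proposition \ref{prop:components_of_tau}, exactly as the paper does (including the observation that $H_\lambda$ hides an $R_{\tau'}$ piece). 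The only slip is terminological --- the quantity $W_{N-3}^{-1}([\tau,e])$ is the $e$-component of the vector field of $R_\tau$, not its $\omega$-component, and it is the structural constraint \eqref{e:structural_constraint_mod} (not the image condition) that fixes $\mathbb{R}_\omega$ and $\mathbb{H}_\omega$ uniquely --- but this does not affect the correctness of the plan.
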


\begin{remark}
In Theorem \ref{thm:Brackets_constraints} we use the symbol $\approx$ to denote the fact that the result can be obtained only working on shell, i.e. imposing the constraints. Here we want to stress that the brackets are not proportional to the constraints, while in the other cases (the ones with the $=$ sign) we get an exact result. Equivalently we could have written e.g. $\{L_c, L_c\}  \approx 0$.
\end{remark}

\begin{proof}
We first compute the variation of the constraints in order to find their Hamiltonian vector fields. Using the results of \cite{CCS2020} for $L_c$ and $P_{\xi}$, we have:
\begin{align*}
 \delta L_c= \int_{\Sigma}  -\frac{1}{N-2} c [\delta \omega, e^{N-2}]  + \frac{1}{N-2} c d_{\omega}\delta (e^{N-2}) = \int_{\Sigma} [c, e]  e^{N-3} \delta \omega  +  d_{\omega} c  e^{N-3}\delta e;
\end{align*}
\begin{align*}
 \delta P_{\xi} &= \int_{\Sigma}  \iota_{\xi} (e^{N-3} \delta e) F_{\omega}  -\frac{1}{N-2}\iota_{\xi} (e^{N-2}) d_{\omega}\delta \omega + \iota_{\xi} \delta \omega e^{N-3} d_{\omega} e  \\
 & \qquad-\frac{1}{N-2} \iota_{\xi} (\omega-\omega_0) [\delta \omega, e^{N-2}]
  + \frac{1}{N-2} \iota_{\xi} (\omega-\omega_0) d_{\omega}\delta (e^{N-2}) \\
 & = \int_{\Sigma}  - e^{N-3} \delta e (\mathcal{L}_{\xi}^{\omega_0} (\omega-\omega_0) + \iota_ {\xi}F_{\omega_0})  -  (\mathcal{L}_{\xi}^{\omega_0} e ) e^{N-3} \delta \omega ;
\end{align*}
\begin{align*}
    \delta R_{\tau} = &\int_{\Sigma} \delta_e \tau d_{\omega}e - \tau     [\delta \omega,e] + \tau d_{\omega} \delta e \\
    = & \int_{\Sigma} \delta e g(\tau,\omega ,e) + [\tau , e]  \delta \omega  + d_{\omega} \tau  \delta e
\end{align*}
where $g(\tau,\omega ,e)$ is a formal expression that encodes the dependence of $\delta \tau$ on $\delta e$ i.e. such that 
\begin{align*}
    \delta e g(\tau,\omega ,e) = p'_{\widetilde{\rho}} \widetilde{\rho}^{-1}\left(\frac{\delta \widetilde{\rho}}{\delta e } (\tau) \delta e\right)d_{\omega}e + 
    p'_{W} W_1^{-1}(\tau \delta e)d_{\omega}e - p'_{X} \widetilde{\rho}^{-1}\left(\frac{\delta \widetilde{\rho}}{\delta e } (\tau) \delta e\right)d_{\omega}e
\end{align*}
as shown in Lemma \ref{lem:expression_deltatau_constrained} where $p'_{X}$ is the projection to the intersection of the complement of the kernel of $\widetilde{\rho}$ and $W_1^{\partial, (N-3,N-1)}$. Using this last computation we can compute the variation of the Hamiltonian constraint $H_{\lambda}$:

\begin{align*}
    \delta H_{\lambda} &= \int_{\Sigma} \lambda e_n  e^{N-4}\delta e  F_{\omega}+\frac{1}{(N-2)!}\Lambda  \lambda e_n e^{N-2} \delta e -\frac{1}{(N-3)}\lambda e_n e^{N-3}  d_{\omega} \delta \omega \\
     & \quad - \lambda p_{\mathcal{S}} (e_n e^{N-4}\delta \omega) d_{\omega}e - (N-4)\lambda p_{\mathcal{S}} (e_n e^{N-5}\delta e (\omega-\omega_0)) d_{\omega}e \\
    & \quad -  \delta_e \tau' d_{\omega}e + \tau' [\delta \omega, e] -\tau' d_{\omega}\delta e\\ 
&= \int_{\Sigma} \lambda e_n  e^{N-4}\delta e  F_{\omega} +\frac{1}{(N-2)!}\Lambda  \lambda e_n e^{N-2} \delta e+  \frac{1}{(N-3)} d_{\omega}(\lambda e_n) e^{N-3}  \delta \omega \\
    & \quad + \lambda e_n e^{N-4}d_{\omega} e  \delta \omega - \lambda  e_n e^{N-4} \delta \omega p_{\mathcal{T}}(d_{\omega}e)\\
    & \quad- (N-4)\lambda e_n e^{N-5}\delta e (\omega-\omega_0) p_{\mathcal{T}}(d_{\omega}e)  -  \delta e g(\tau',\omega ,e) + \tau' [\delta \omega, e] -\tau' d_{\omega}\delta e\\
&= \int_{\Sigma} \lambda e_n  e^{N-4}\delta e  F_{\omega} +\frac{1}{(N-2)!}\Lambda  \lambda e_n e^{N-2} \delta e+  \frac{1}{(N-3)} d_{\omega}(\lambda e_n) e^{N-3}  \delta \omega\\
    & \quad + \lambda \sigma  e^{N-3} \delta \omega- (N-4)\lambda e_n e^{N-5}\delta e (\omega-\omega_0) p_{\mathcal{T}}(d_{\omega}e) \\
    & \quad-  \delta e g(\tau',\omega ,e) + \tau' [\delta \omega, e] -\tau' d_{\omega}\delta e
\end{align*}
where $\tau'=  p_{\mathcal{S}}(\lambda e_n e^{N-4} (\omega-\omega_0))$ and we used \eqref{e:structural_constraint_mod}. From the expressions of the variation of the constraints we can deduce their Hamiltonian vector fields. Let $X$ be a generic constraint, then we denote with $\mathbb{X}$ the corresponding Hamiltonian vector field $\iota_{\mathbb{X}}\varpi^{\partial}_{PC}= \delta X$ and with $\mathbb{X}_e$ $\mathbb{X}_{\omega}$ its components, i.e.
\begin{align*}
    \mathbb{X} = \mathbb{X}_e \frac{\delta}{\delta e} + \mathbb{X}_{\omega} \frac{\delta}{\delta \omega}.
\end{align*}
Hence we have
\begin{align*}
&\begin{aligned}
    &\mathbb{L}_e = [c,e] &  \qquad \qquad \qquad\qquad
    & \mathbb{L}_\omega = d_{\omega} c \\ 
    &\mathbb{P}_e = - \mathcal{L}_{\xi}^{\omega_0} e & 
    & \mathbb{P}_\omega = - \mathcal{L}_{\xi}^{\omega_0} (\omega-\omega_0) - \iota_ {\xi}F_{\omega_0}\\
    & e^{N-3} \mathbb{R}_e= [\tau, e] & 
    & e^{N-3} \mathbb{R}_\omega = g(\tau,\omega ,e) + d_{\omega} \tau 
\end{aligned}\\    
    & e^{N-3} \mathbb{H}_e = \frac{1}{(N-3)} e^{N-3} d_{\omega}(\lambda e_n) + \lambda e^{N-3} \sigma - [\tau', e]\\
    & e^{N-3} \mathbb{H}_\omega  =  \lambda e_n e^{N-4} F_{\omega}+\frac{1}{(N-2)!}\Lambda  \lambda e_n e^{N-2}- (N-4)\lambda e_n e^{N-5}(\omega-\omega_0)p_{\mathcal{T}}(d_{\omega}e)\\
    & \qquad\qquad - g(\tau',\omega ,e) - d_{\omega} \tau'.
\end{align*}
 The components $\mathbb{R}_\omega$ and $\mathbb{H}_\omega$ are uniquely determined requiring the structural constraint \eqref{e:structural_constraint_mod}. The components $\mathbb{R}_e$ and $\mathbb{H}_e$ are recovered by inversion of $W_{N-3}^{\partial, (1,1)}$ (which is possible thanks to Lemma \ref{lem:[tau,e]inImW}). Following these we compute the Poisson brackets between the constraints and analyse their structure. The brackets between $L_c$ and $P_{\xi}$ are the same as in the non-degenerate case presented in \cite{CCS2020}:
\begin{align*}
    \{L_c, L_c\}  = - \frac{1}{2} L_{[c,c]}; \quad
    \{L_c, P_{\xi}\} =  L_{\mathcal{L}_{\xi}^{\omega_0}c}; \quad
    \{P_{\xi}, P_{\xi}\}  = \frac{1}{2}P_{[\xi, \xi]} - \frac{1}{2}L_{\iota_{\xi}\iota_{\xi}F_{\omega_0}}.
\end{align*}
Let us now compute the brackets between $L_c$, $P_{\xi}$ and $R_{\tau}$. In both computations we use the results of Lemmas \ref{lem:expression_deltatau_constrained} and \ref{lem:computations_for_LR_and_PR} and the properties of $\tau$.
\begin{align*}
    \{L_c, R_{\tau}\}= \int_{\Sigma} & [c,e] g(\tau,\omega ,e) + [c,e]d_{\omega} \tau + d_{\omega} c [\tau, e] \\
        = \int_{\Sigma} & [c,e] g(\tau,\omega ,e)- [c, \tau] d_{\omega} e \\
        = \int_{\Sigma} & p'_{\mathcal{S}}[c, \tau] d_{\omega} e- [c, \tau] d_{\omega} e = \int_{\Sigma} - p_{\mathcal{S}}[c, \tau] d_{\omega} e = -R_{p_{\mathcal{S}}[c, \tau]};
 \end{align*}       
\begin{align*}
    \{P_{\xi},R_{\tau}\}= \int_{\Sigma} & - [\tau, e]\mathcal{L}_{\xi}^{\omega_0} (\omega-\omega_0) - [\tau, e]\iota_ {\xi}F_{\omega_0} - \mathcal{L}_{\xi}^{\omega_0} e g(\tau,\omega ,e) - \mathcal{L}_{\xi}^{\omega_0} e d_{\omega} \tau \\
        = \int_{\Sigma} & - \mathcal{L}_{\xi}^{\omega_0} e g(\tau,\omega ,e) + \mathcal{L}_{\xi}^{\omega_0}\tau d_{\omega} e\\
        = \int_{\Sigma} & -p'_{\mathcal{S}}\mathcal{L}_{\xi}^{\omega_0}\tau d_{\omega} e+ \mathcal{L}_{\xi}^{\omega_0}\tau d_{\omega} e = \int_{\Sigma}  p_{\mathcal{S}}\mathcal{L}_{\xi}^{\omega_0}\tau d_{\omega} e = R_{p_{\mathcal{S}}\mathcal{L}_{\xi}^{\omega_0}\tau}.
\end{align*}
We now compute the brackets between $L_c$, $P_{\xi}$ and $H_{\lambda}$.
\begin{align*}
    \{L_c,  H_{\lambda}\}  = & \int_{\Sigma}  [c,e] e^{N-4} \lambda e_n F_{\omega} +\frac{1}{(N-2)!}[c,e]\Lambda  \lambda e_n e^{N-2} -[c,e]g(\tau',\omega ,e) \\ 
    & \quad -[c,e]d_{\omega} \tau'
     - (N-4) [c,e] \lambda e_n e^{N-5}(\omega-\omega_0)p_{\mathcal{T}}(d_{\omega}e) \\ 
    & \quad + \frac{1}{(N-3)} e^{N-3} d_{\omega} c  d_{\omega}(\lambda e_n) +  e^{N-3} d_{\omega} c\lambda \sigma - d_{\omega} c [\tau', e]\\ 
        =  & \int_{\Sigma}  - \frac{1}{(N-3)} [c, \lambda e_n ] e^{N-3} F_{\omega}-\frac{1}{(N-1)!}\Lambda[c, \lambda e_n ] e^{N-1} \\ 
        & \quad+ p_{\mathcal{S}}([c, \tau'] - \lambda e_n e^{N-4}d_{\omega} c - [c,e^{N-4}] \lambda e_n (\omega-\omega_0))d_{\omega}e\\
        = & \int_{\Sigma}  - \frac{1}{(N-3)}\left([c, \lambda e_n ]^{(a)}e_a e^{N-3} F_{\omega} -[c, \lambda e_n ]^{(n)}e_n e^{N-3} F_{\omega}\right)\\
        & \quad -\frac{1}{(N-1)!}\Lambda[c, \lambda e_n ]^{(n)}e_n e^{N-1} -  p_{\mathcal{S}}(\lambda e_n e^{N-4} d_{\omega_0} c ) d_{\omega} e\\
        & \quad
         +p_{\mathcal{S}}([c, \lambda e_n ]^{(a)}e_a e^{N-4}(\omega- \omega_0) +[c, \lambda e_n ]^{(n)}e_n e^{N-4}(\omega- \omega_0) ) d_{\omega} e  \\
         = & - P_{[c, \lambda e_n ]^{(a)}} + L_{[c, \lambda e_n ]^{(a)}(\omega - \omega_0)_a} - H_{[c, \lambda e_n ]^{(n)}}\\
         & + R_{p_{\mathcal{S}}([c, \lambda e_n ]^{(a)}e_a e^{N-4}(\omega- \omega_0))} - R_{p_{\mathcal{S}}(\lambda e_n e^{N-4}d_{\omega_0} c)};
\end{align*}
\begin{align*}
    \{P_{\xi},H_{\lambda}\}  = & \int_{\Sigma} - \mathcal{L}_{\xi}^{\omega_0} e \lambda e_n e^{N-4}F_{\omega} -\frac{1}{(N-2)!}\Lambda \mathcal{L}_{\xi}^{\omega_0} e \lambda e_n e^{N-2}+ \mathcal{L}_{\xi}^{\omega_0} e g(\tau',\omega ,e)\\
        & \quad  + \mathcal{L}_{\xi}^{\omega_0} e d_{\omega} \tau'+ (N-4) \mathcal{L}_{\xi}^{\omega_0} e \lambda e_n e^{N-5}(\omega-\omega_0)p_{\mathcal{T}}(d_{\omega}e)\\
    & \quad- \left( \mathcal{L}_{\xi}^{\omega_0} (\omega-\omega_0)+ \iota_ {\xi}F_{\omega_0}\right)\left(\frac{e^{N-3} d_{\omega}(\lambda e_n)}{N-3} + \lambda e^{N-3} \sigma - [\tau', e] \right) \\
    = & \int_{\Sigma} \frac{1}{(N-3)}  \mathcal{L}_{\xi}^{\omega_0} (\lambda e_n) e^{N-3} F_{\omega} +\frac{1}{(N-1)!}\Lambda  e^{N-1} \mathcal{L}_{\xi}^{\omega_0}(\lambda e_n) \\
    & \quad + p_{\mathcal{S}}\left( -\mathcal{L}_{\xi}^{\omega_0} \tau '+ \lambda e_n e^{N-4}\left( \mathcal{L}_{\xi}^{\omega_0} (\omega-\omega_0)+ \iota_ {\xi}F_{\omega_0}\right)\right)d_{\omega} e \\
    & \quad + p_{\mathcal{S}}(\mathcal{L}_{\xi}^{\omega_0}( e^{N-4}) \lambda e_n (\omega-\omega_0))d_{\omega}e\\
     =& \int_{\Sigma}  \frac{1}{(N-3)} \left(\mathcal{L}_{\xi}^{\omega_0} (\lambda e_n)^{(a)}e_a  e^{N-3} F_{\omega}+\mathcal{L}_{\xi}^{\omega_0} (\lambda e_n)^{(n)}e_n e^{N-3} F_{\omega} \right) \\
     & \quad +\frac{1}{(N-1)!}\Lambda  e^{N-1} \mathcal{L}_{\xi}^{\omega_0}(\lambda e_n)^{(n)}e_n + p_{\mathcal{S}} (\lambda e_n e^{N-4}\iota_ {\xi}F_{\omega_0})d_{\omega} e\\
    & - p_{\mathcal{S}}\left( \mathcal{L}_{\xi}^{\omega_0}(\lambda e_n)^{(n)} e_n e^{N-4}(\omega- \omega_0)+ \mathcal{L}_{\xi}^{\omega_0}(\lambda e_n)^{(a)} e_a e^{N-4}(\omega- \omega_0)  \right)d_{\omega} e \\
     =  & P_{ \mathcal{L}_{\xi}^{\omega_0} (\lambda e_n)^{(a)}} +H_{ \mathcal{L}_{\xi}^{\omega_0} (\lambda e_n)^{(n)}}-L_{ \mathcal{L}_{\xi}^{\omega_0} (\lambda e_n)^{(a)} (\omega - \omega_0)_a}\\
     & -R_{p_{\mathcal{S}}(\mathcal{L}_{\xi}^{\omega_0}(\lambda e_n)^{(a)} e_a e^{N-4}(\omega- \omega_0))} + R_{p_{\mathcal{S}}(\lambda e_n e^{N-4}\iota_ {\xi}F_{\omega_0})}.
\end{align*}
We now compute the remaining brackets $\{R_{\tau},R_{\tau}\}$, $\{R_{\tau},H_{\lambda}\}$ and  $\{H_{\lambda},H_{\lambda}\}$.
Since $H_{\lambda}$ contains terms proportional to $R_{\tau}$ (for $\tau= p_{\mathcal{S}}( \lambda e_n e^{N-4} (\omega- \omega_0)$) we first compute the brackets between two $R_{\tau}$ and then the others:
\begin{align*}
    \{R_{\tau},R_{\tau}\}= \int_{\Sigma} & W_{N-3}^{-1}([\tau, e])g(\tau,\omega ,e) +  W_{N-3}^{-1}([\tau, e]) d_{\omega} \tau . 
\end{align*}
The first term is proportional to $d_{\omega}e$ by construction, so it will be 0 on shell. Let us concentrate on the second term. We want to prove, using normal geodesic coordinates, that it is not proportional to any of the constraints and not 0. Let us fix a point $p \in \Sigma$ and consider an open neighbourhood $U$ of it. From Proposition \ref{prop:components_of_tau} we deduce that the unique components at the point $p$ with respect to the standard basis that compose $\tau$ are $X_{\mu_2}^{\mu_1}, Y_{\mu}$ for $\mu, \mu_1, \mu_2=1 \dots N-2$ subject to 
\begin{align*}
    \sum_{\mu=1}^{N-2} Y_{\mu} =0 \text{ and } 
    X_{\mu_1}^{\mu_2} =- X_{\mu_2}^{\mu_1}.
\end{align*}
The first equation holds also on the whole neighborhood while the second set holds only on the point $p$. From Corollary \ref{cor:components_of_W-1[tau,e]} we know that the non zero components in  $W_{N-3}^{-1} ([\tau, e])$ are
\begin{align*}
        [W_{N-3}^{-1}([\tau,e])]_{\mu_1}^{\mu_2} & \propto X_{\mu_1}^{\mu_2} \\
        [W_{N-3}^{-1}([\tau,e])]_{\mu}^{\mu} & \propto Y_{\mu} \\
    \end{align*}
    such that $\sum_{\mu=1}^{N-2} [W_{N-3}^{-1}([\tau,e])]_{\mu}^{\mu}=0 $ and $[W_{N-3}^{-1}([\tau,e])]_{\mu_1}^{\mu_2}=-[W_{N-3}^{-1}([\tau,e])]_{\mu_2}^{\mu_1}$.
    
Furthermore, from Proposition \ref{prop:components_of_tau} we also know that the non zero components of $\tau$ are 
$Y_{\mu}$ and $ X_{\mu_1}^{\mu_2}$
such that
\begin{align*}
    \sum_{\mu=1}^{N-2} Y_{\mu} =0 \text{ and } 
    X_{\mu_1}^{\mu_2} = f(\widetilde{g}^{\partial}, X_{\mu_2}^{\mu_1}, Y_{\mu})
\end{align*}
for $\mu_1 < \mu_2$  and some linear function $f$.  Remembering that $W_{N-3}^{-1}([\tau, e]) d_{\omega} \tau$ should  be a volume form, we deduce that, on shell,
\begin{align*}
    W_{N-3}^{-1}([\tau, e]) d_{\omega} \tau & = \left(
    \sum_{\mu=1}^{N-2} Y_{\mu} \partial_{N-1} Y_{\mu} +
    \sum_{\mu_1, \mu_2=1}^{N-2} X_{\mu_1}^{\mu_2}\partial_{N-1} X_{\mu_2}^{\mu_1}
    \right)\mathbf{V} \\
    & = \left(
    \sum_{\mu=1}^{N-2} Y_{\mu} \partial_{N-1} Y_{\mu} +
    \sum_{\mu_1 < \mu_2 \mu_1,\mu_2=1}^{N-2} X_{\mu_1}^{\mu_2}\partial_{N-1} f(\widetilde{g}^{\partial}, X_{\mu_1}^{\mu_2}, Y_{\mu})
    \right)\mathbf{V}\\
    &= \colon F_{\tau \tau}
\end{align*}
where $\mathbf{V}= e_{1} \dots e_{N-1} e_n dx^{1} \dots dx^{N-1} $.
This quantity is for generic  $\tau$ different from zero, on shell. Hence 
\begin{align*}
    \{ R_{\tau}, R_{\tau} \} \approx F_{\tau \tau} \not\approx 0.
\end{align*}

With this result we can more easily compute the last two brackets:
\begin{align*}
    \{H_{\lambda},  H_{\lambda}\}= \int_{\Sigma} & \left( \frac{1}{(N-3)} d_{\omega}(\lambda e_n)+\lambda  \sigma- W_{N-3}^{-1}([\tau', e])\right)\lambda e_n e^{N-4} F_{\omega}\\
    + & \left( \frac{1}{(N-3)} d_{\omega}(\lambda e_n)+\lambda  \sigma- W_{N-3}^{-1}([\tau', e])\right)\frac{1}{(N-2)!}\Lambda  \lambda e_n e^{N-2} \\
    - & \left( \frac{1}{(N-3)} d_{\omega}(\lambda e_n)+\lambda  \sigma- W_{N-3}^{-1}([\tau', e])\right)g(\tau',\omega ,e) \\
    - & \left( \frac{1}{(N-3)} d_{\omega}(\lambda e_n)+\lambda  \sigma- W_{N-3}^{-1}([\tau', e])\right)d_{\omega} \tau'\\
    -&(N-4)\frac{1}{(N-3)} d_{\omega}(\lambda e_n)\lambda e_n e^{N-5}(\omega-\omega_0)p_{\mathcal{T}}(d_{\omega}e)\\
    -&(N-4) \left(\lambda  \sigma  - W_{N-3}^{-1}([\tau', e])\right)\lambda e_n e^{N-5}(\omega-\omega_0)p_{\mathcal{T}}(d_{\omega}e).
\end{align*}
Since $\lambda$ and $e_n$ are odd quantities and $\tau'= \lambda p_{\mathcal{S}}(e_n e^{N-4}(\omega-\omega_0))$, the terms in the first two lines  and in the last two vanish. Furthermore the last terms of the third and fourth lines are the one composing the brackets $\{R_{\tau '},R_{\tau '}\}$. Expanding the first and the second term of the third line we get
\begin{align*}
     \widetilde{\rho}^{-1}([\tau', d_{\omega}(\lambda e_n)])d_{\omega}e + W_1^{-1}(\tau' d_{\omega}(\lambda e_n))d_{\omega}e +  \widetilde{\rho}^{-1}([\tau', \lambda  \sigma])d_{\omega}e + W_1^{-1}(\tau' \lambda  \sigma)d_{\omega}e.
\end{align*}
All these terms are zero since they encompass terms with either $\lambda\lambda=0$ or $ e_n e_n=0$. We can draw the same conclusion also for the following term:
\begin{align*}
    d_{\omega}(\lambda e_n)d_{\omega} \tau'= [ F_{\omega}, \lambda e_n] \tau'=0.
\end{align*}
The same holds also for the term $\lambda  \sigma d_{\omega} \tau'$ since both $\sigma$ and $\tau'$ contain $e_n$.\footnote{ Using the lemmas in Section \ref{sec:technical} it is possible to prove that all the non-zero components of $\sigma$ are in the direction of $e_n$.} Hence
\begin{align*}
    \{H_{\lambda},  H_{\lambda}\}=\{R_{\tau '},R_{\tau '}\} \approx F_{\tau' \tau'} \not \approx 0.
\end{align*}

The last bracket that we have to compute is $\{ R_{\tau}, H_{\lambda} \}$. From the expression of the Hamiltonian vector fields we get
\begin{align*}
    \{ R_{\tau}, H_{\lambda} \}= \int_{\Sigma} & \frac{1}{(N-3)} d_{\omega}(\lambda e_n) \left( g(\tau,\omega ,e)+d_{\omega} \tau \right) + \lambda \sigma g(\tau,\omega ,e)+ \lambda \sigma d_{\omega} \tau \\
    + &  W_{N-3}^{-1}([\tau, e])\lambda e_n e^{N-4} F_{\omega}-  W_{N-3}^{-1}([\tau', e])\left( g(\tau,\omega ,e)+d_{\omega} \tau \right)\\
    + & \frac{1}{(N-2)!}\Lambda  \lambda e_n e[\tau, e] - W_{N-3}^{-1}([\tau, e])\left( g(\tau',\omega ,e)+d_{\omega} \tau ' \right) \\
    - &  (N-4)W_{N-3}^{-1}([\tau, e])\lambda e_n e^{N-5}(\omega-\omega_0)p_{\mathcal{T}}(d_{\omega}e).
\end{align*}
The last two terms of the second and third lines are the one composing the brackets $\{R_{\tau },R_{\tau '}\}$, and the first term of the third line vanishes because $e \tau =0$ and $[e,e]=0$.
We want to prove that $\{ R_{\tau}, H_{\lambda} \} \not\approx 0$. 
Using coordinate expansion one can prove that the second and the fifth term have the same expression and read:
\begin{align*}
    d_{\omega}(\lambda e_n)d_{\omega}\tau &+ W_1^{-1}([\tau, e])\lambda e_nF_{\omega}
    =  - [F_{\omega} , \lambda e_n] \tau + W_1^{-1}([\tau, e])\lambda e_nF_{\omega} \\
    &=2 \lambda \sum_{\mu=1}^{N-2} Y_{\mu} (F_{\omega})_{\mu N-1 }^{\mu N-1} + \lambda \sum_{\mu_1, \mu_2=1}^{N-2} X_{\mu_1}^{\mu_2}(F_{\omega})_{\mu_2 N-1 }^{\mu_1 N-1} = \colon G_{\lambda \tau}.
\end{align*}
These terms are not proportional to any of the constraints and not proportional to $\{R_{\tau },R_{\tau '}\}$. The term in the fourth line is proportional to $R_{\tau}$ so we can discard it.
Let us now consider the fourth term:  since $d_{\omega} \tau$ is in the image of $W_1$ we can invert it and get 
\begin{align*}
    \lambda \sigma d_{\omega} \tau &= \lambda e^{N-3} \sigma W^{-1}(d_{\omega} \tau)\\
    &= \lambda e_n e^{N-4}d_{\omega} e W^{-1}(d_{\omega} \tau) - \lambda e_n e^{N-4} p_{\mathcal{T}}(d_{\omega} e ) W_1^{-1}(d_{\omega} \tau).
\end{align*}
The second term is again proportional to $R_{\tau}$ so we can discard it as well.  Let us now consider the first term of this expression and $d_{\omega}(\lambda e_n)g(\tau,\omega ,e) + \lambda \sigma g(\tau,\omega ,e)$ --- the last two remaining terms. By expanding these terms using the definition of $f$, integrating by parts and using $\tau \wedge e_n = 0$ we get that these three terms add up    to zero.
 Collecting these results we get
\begin{align*}
    \{ R_{\tau}, H_{\lambda} \}\approx  \{R_{\tau },R_{\tau '}\} + G_{\lambda \tau} \approx F_{\tau \tau'} + G_{\lambda \tau} \not \approx 0 .
\end{align*}
\end{proof}

\begin{remark}
For $N=4$ some of the previous computation simplify. In particular it is possible to give a compact explicit expression for the function $F_{\tau \tau}$. This coincides with the corresponding one of the linearized theory $\widetilde{F}_{\tau \tau}$ expressed in \eqref{e:RR_lin_4}. As a consequence it is also possible to give an explicit expression for the other brackets not proportional to the constraints.
\end{remark}

{
\begin{remark} \label{rmk:zeromodes_notlin}
As we will see in Appendix \ref{sec:linearized_theory}, in the linearized case we can identify some first class zero modes inside the second class constraint (see Remark \ref{rmk:zeromodes_linearized}). In the non-linearized case such identification is more complicated but such modes should anyway be present. This will be object of future studies.
\end{remark}
}

\begin{corollary}\label{cor:constraints-first-second_class}
    The constraints $L_c$, $P_{\xi}$, $H_{\lambda}$ and $R_{\tau}$ do not form a first class system. In particular $R_{\tau}$ is a second class constraint while the others are first class (as defined in Remark \ref{rem:first-second_class_constraints}).
\end{corollary}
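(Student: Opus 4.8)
The plan is to deduce the corollary directly from the bracket table of Theorem \ref{thm:Brackets_constraints}, using the characterization of first- and second-class constraints recalled in Remark \ref{rem:first-second_class_constraints}: a constraint is \emph{first class} when its Poisson bracket with every constraint of the system is again proportional to the constraints (hence $\approx 0$ on the constraint surface), and \emph{second class} otherwise. I will also use the fact, recorded in Proposition \ref{prop:numberofsecondclassconstraints} and Remark \ref{rem:linearcomb_constraints_fsc}, that this classification is unaffected by replacing any constraint by a linear combination of it and the others, so that I may argue modulo such recombinations.

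First I would read off the table. Every entry in the first two rows, namely $\{L_c,L_c\}$, $\{L_c,P_\xi\}$, $\{P_\xi,P_\xi\}$, $\{L_c,R_\tau\}$, $\{P_\xi,R_\tau\}$, $\{L_c,H_\lambda\}$ and $\{P_\xi,H_\lambda\}$, carries an honest equality sign and is a sum of the constraints $L$, $P$, $H$, $R$ with field-dependent coefficients. Hence the brackets of $L_c$ and of $P_\xi$ with the whole system close on the constraints, and $L_c$, $P_\xi$ are first class. The only brackets that are \emph{not} proportional to the constraints are the three remaining ones, $\{H_\lambda,H_\lambda\}\approx F_{\tau'\tau'}$, $\{R_\tau,H_\lambda\}\approx F_{\tau\tau'}+G_{\lambda\tau}$ and $\{R_\tau,R_\tau\}\approx F_{\tau\tau}$, each of which involves the non-constraint functions $F$ and $G$ introduced in the proof of the theorem.

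The key step is to show that this residual non-closure is carried entirely by $R_\tau$. Here I would use that $H_\lambda$ was defined with an added term proportional to $R_{\tau'}$ with $\tau'=p_{\mathcal{S}}(\lambda e_n e^{N-4}(\omega-\omega_0))$; indeed the computation in the proof of Theorem \ref{thm:Brackets_constraints} yields $\{H_\lambda,H_\lambda\}=\{R_{\tau'},R_{\tau'}\}$, so the obstruction in the self-bracket of $H_\lambda$ is literally the self-bracket of its $R$-part, and similarly $\{R_\tau,H_\lambda\}\approx\{R_\tau,R_{\tau'}\}+G_{\lambda\tau}$. Subtracting the $R_{\tau'}$ term---an admissible recombination---produces a genuine Hamiltonian constraint whose brackets with $L_c$, $P_\xi$ and itself close on shell, leaving all the surviving $F$-type terms confined to the $R$-sector. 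I would therefore conclude that $H_\lambda$ is first class, while $R_\tau$ is second class, the decisive point being $\{R_\tau,R_\tau\}\approx F_{\tau\tau}\not\approx 0$, which (reading the odd multipliers as placeholders for the antisymmetric pairing $\{R_\alpha,R_\beta\}$) already exhibits a member of the system whose bracket with another $R$ fails to vanish on shell.

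The main obstacle is exactly this last attribution: one must be certain that the non-vanishing pairing genuinely lives in the $R$-sector and cannot be rotated away into $L$, $P$ or a pure $H$. This rests on the explicit form of $F_{\tau\tau}$ obtained in the proof of the theorem through normal geodesic coordinates together with Proposition \ref{prop:components_of_tau}, which shows $F_{\tau\tau}\not\approx 0$ for generic $\tau\in\mathcal{S}$. I would also flag, following Remark \ref{rmk:zeromodes_notlin}, that $R_\tau$ need not be \emph{purely} second class---some first-class directions may sit inside $\mathcal{S}$---so the claim is only that the family $R_\tau$ fails to form a first-class system, not that every $\tau$ produces a second-class constraint. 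Once $F_{\tau\tau}\not\approx 0$ is granted, the corollary follows at once.
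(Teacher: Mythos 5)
Your reading of the bracket table is correct as far as $L_c$ and $P_\xi$ are concerned, and the observation that $\{R_\tau,R_\tau\}\approx F_{\tau\tau}\not\approx 0$ rules out a first-class system is also fine. The gap is in your treatment of $H_\lambda$. Subtracting the $R_{\tau'}$ part is not an argument: writing $H^0_\lambda = H_\lambda - R_{\tau'}$, the cross bracket becomes (up to signs) $\{R_\tau,H^0_\lambda\}\approx \{R_\tau,H_\lambda\}-\{R_\tau,R_{\tau'}\}\approx G_{\lambda\tau}$, and $G_{\lambda\tau}$ is precisely the term that Theorem \ref{thm:Brackets_constraints} singles out as \emph{not} proportional to any constraint and not of $\{R,R\}$ type (it contains no derivatives, only curvature components paired with the components of $\tau$). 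So your recombined Hamiltonian constraint still fails to commute on shell with $R_\tau$, and the assertion that ``all the surviving $F$-type terms are confined to the $R$-sector'' overlooks exactly this obstruction: $G_{\lambda\tau}$ cannot be removed by subtracting any $R$-term, hence first-classness of $H_\lambda$ does not follow from what you wrote.

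This is why the paper does not argue by naive recombination but instead verifies the hypothesis of Proposition \ref{prop:numberofsecondclassconstraints}: with $D$, $B$, $C$ the matrices of $\{R,R\}$, $\{R,H\}$, $\{H,H\}$, one must prove $B^TD^{-1}B=-C$. The admissible redefinition that actually renders $H$ first class is $\widetilde{H}=H+B^TD^{-1}R$ (Remark \ref{rem:linearcomb_constraints_fsc}), which involves the \emph{inverse} of $D$ --- a nonlocal operator containing the inverse of the lightlike derivative --- rather than the local subtraction of $R_{\tau'}$. In the paper's proof the troublesome $G_{\lambda\tau}$ block of $B$ is neutralized at exactly this stage: since those entries contain no derivative while every entry of $D^{-1}$ does contain an inverse derivative, and $\lambda$ is odd, their contribution to $B^TD^{-1}B$ vanishes by Lemma \ref{lem:vanishofBD-1BT}; the remaining derivative block satisfies $B'=D\,p_{\mathcal{S}}(e_ne^{N-4}(\omega-\omega_0))$, whence $B'^TD^{-1}B'=-p_{\mathcal{S}}(e_ne^{N-4}(\omega-\omega_0))^TD\,p_{\mathcal{S}}(e_ne^{N-4}(\omega-\omega_0))=-C$ by antisymmetry of $D$. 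Without this mechanism your argument establishes only that $L_c$ and $P_\xi$ are first class and that second-class constraints are present; the attribution of first class to $H_\lambda$ (and hence the precise splitting claimed in the corollary) remains unproved.
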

\begin{proof}
 Throughout the proof we use the notation and terminology established in Appendix \ref{sec:first-second_class_def}.
Since the bracket between $R_{\tau}$ and itself is not zero on shell the system contains constraints that are second class. We want now to establish which constraints are of second class and which are of first class. The constraints $L_c$ and $P_{\xi}$ commute ---on shell--- with themselves and all the other constraints, hence they are of first class. Let us now consider $R_{\tau}$ and $H_{\lambda}$. We want to prove that $R_{\tau}$ is of second class while, using a linear transformation of the constraints $H_{\lambda}$ is of first class. Using the result of Proposition \ref{prop:numberofsecondclassconstraints}, if we call $D$ the matrix representing the bracket $\{R_{\tau },R_{\tau }\}$, $B$ the one representing the bracket  $\{ R_{\tau}, H_{\lambda} \}$, and $C$ the one  representing the bracket $ \{H_{\lambda},  H_{\lambda}\}$, we have to prove that {$B^T D^{-1} B=-C$}. 

From the proof of Theorem  \ref{thm:Brackets_constraints} we can deduce the expressions of the matrices $B$, $D$ and $C$. All the  components of such matrices contain a derivative in the \textit{lightlike} direction, apart from the terms coming from $G_{\lambda \tau}$ in $B$. Hence all components of $D^{-1}$ will contain the inverse of such derivative. Since $\lambda$ is an odd quantity, all the terms contained in {$B^T D^{-1} B$} without a derivative vanish because of Lemma \ref{lem:vanishofBD-1BT}. Hence the only surviving elements in {$B^T D^{-1} B$} come from the multiplication of the elements containing a derivative in $B$. We denote such terms by $B'$. It is then a straightforward computation to check that the coefficients of such combination are actually equal to those of $C$. Indeed, since these matrices have the same functional form ($F_{\tau \tau}$), we can express the matrices $B'$ and $C$ respectively as
{$B'=   D p_{\mathcal{S}}(e_n e^{N-4} (\omega-\omega_0))$} and $C=p_{\mathcal{S}}(e_n e^{N-4} (\omega-\omega_0))^T D p_{\mathcal{S}}(e_n e^{N-4} (\omega-\omega_0))$. Hence we have
{
\begin{align*}
    B'^T D^{-1} B' & = p_{\mathcal{S}}(e_n e^{N-4} (\omega-\omega_0))^T D^T D^{-1} D p_{\mathcal{S}}(e_n e^{N-4} (\omega-\omega_0)) \\
    & = -p_{\mathcal{S}}(e_n e^{N-4} (\omega-\omega_0))^T D p_{\mathcal{S}}(e_n e^{N-4} (\omega-\omega_0))=-C.
\end{align*}}
\end{proof}

We can now count the degrees of freedom of the reduced phase space. From the definition given in Section \ref{sec:first-second_class_def} we can deduce that the correct number of physical degrees of freedom is given by \cite[(1.60)]{HT}: let $r$ be the number of degrees of freedom  of the reduced phase space, $p$ the number of degrees of freedom of the geometric phase space, $f$ the number of first class constraints and $s$ the number of second class constraints, then
\begin{align*}
    r = p - 2f - s.
\end{align*}
In our case these quantities have the following values:
the geometric phase space has $2 N (N-1)$ degrees of freedom. From Corollary \ref{cor:constraints-first-second_class} we have that there are  $\frac{N(N-1)}{2}+ N= \frac{N(N+1)}{2}$ first class constraints and $\frac{N(N-3)}{2}$ second class constraints (see Proposition \ref{prop:components_of_tau} for the number of degrees of freedom of $\tau$). We can deduce that the correct number of local degrees of freedom is given by
  
\begin{align*}
    2 N (N-1) - N(N+1) - \frac{N(N-3)}{2} = \frac{N(N-3)}{2}.
\end{align*}
 In the case $N=4$ this computation produces two local degrees of freedom. This result agrees with the previous works in the literature (e.g. \cite{AlexandrovSpeziale15}).

\appendix
\section{First and Second class constraints}\label{sec:first-second_class_def}
An important distinction between the constraints of a system is the one provided by the difference between first and second class constraints. In this section we review the definition and prove a result to easily distinguish the two classes. 

Roughly speaking, a constraint is of second class  if its Poisson brackets with other constraints do not vanish  on the constrained surfaces. However, this definition is not precise since it is always possible to take linear combinations of the constraints without modifying the reduced phase space of the theory.  Furthermore first and second class constraints correspond to different physical interpretations: the first ones are in one to one correspondence with the generators of gauge transformations of the theory, while the second ones are just identities through which we can express some canonical variables in terms of the other. Hence, to correctly encompass these differences, we need a more sophisticated definition. Starting from the results presented in \cite[Chapter 1]{HT} we can give the following definition:
\begin{definition}
    Let $\mathcal{F}$ be a symplectic manifold and let $\phi_i \in C^{\infty}(\mathcal{F})$ be a set of smooth functions on it. Denote with $C_{ij}=\{\phi_i, \phi_j\}$ the matrix of the Poisson brackets of the functions. Then the number of second class functions of the set is the rank\footnote{We assume the rank to be constant on the zero locus.} of the matrix $C_{ij}$ on the zero locus of the  functions. In particular if  $C_{ij} \approx 0 $ then we say that all the functions are first class.
\end{definition}
This definition clearly coincides with the standard one in case all the constraints are first class, i.e. all the  constraints commute with every other one. However, it allows us to treat the general case, since it is invariant under rearranging the constraints by linear combinations.
We now state a result that will be helpful in assessing the number of second class constraints in a system.
\begin{proposition}\label{prop:numberofsecondclassconstraints}
    Let $\mathcal{F}$ be a symplectic manifold and let $\psi_i,\phi_j \in C^{\infty}(\mathcal{F})$, $i=1\dots n$, $j=1\dots m$. Denote with $C_{jj'}, B_{ij}, D_{i i'}$ respectively the matrices representing the Poisson brackets  $\{\phi_j,\phi_{j'}\}$, $\{\psi_i,\phi_j\}$ and $\{\psi_i,\psi_{i'}\}$, with $i,i'=1\dots n$, $j,j'=1\dots m$. Then, if $D$ is invertible and {$C= -B^T D^{-1}B$}, the number of second class constraints is $n$, i.e. the rank of the matrix $D$.   
\end{proposition}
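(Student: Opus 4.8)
The plan is to recast the statement as a single rank computation for the full antisymmetric matrix of Poisson brackets of the combined family of functions, and then to evaluate that rank by a Schur-complement argument. Ordering the functions as $(\psi_1,\dots,\psi_n,\phi_1,\dots,\phi_m)$, I would assemble the $(n+m)\times(n+m)$ matrix
\begin{align*}
    M = \begin{pmatrix} D & B \\ -B^T & C \end{pmatrix},
\end{align*}
whose blocks are exactly $D_{ii'}=\{\psi_i,\psi_{i'}\}$, $B_{ij}=\{\psi_i,\phi_j\}$, $C_{jj'}=\{\phi_j,\phi_{j'}\}$, the lower-left block being $-B^T$ by antisymmetry of the bracket, $\{\phi_j,\psi_i\}=-\{\psi_i,\phi_j\}$. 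By the definition of second class constraints recalled above, the number we must produce is $\operatorname{rank} M$ on the zero locus, so the whole proposition amounts to showing $\operatorname{rank} M = n$.

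First I would use that $D$ is invertible on the zero locus to write the block factorization
\begin{align*}
    M = \begin{pmatrix} I & 0 \\ -B^T D^{-1} & I \end{pmatrix}\begin{pmatrix} D & 0 \\ 0 & C + B^T D^{-1} B \end{pmatrix}\begin{pmatrix} I & D^{-1} B \\ 0 & I \end{pmatrix}.
\end{align*}
The two outer factors are unipotent triangular matrices, hence invertible, so multiplying by them does not alter the rank. This yields
\begin{align*}
    \operatorname{rank} M = \operatorname{rank} D + \operatorname{rank}(C + B^T D^{-1} B),
\end{align*}
where the second summand is the Schur complement of $D$ in $M$.

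Then I would invoke the standing hypothesis $C = -B^T D^{-1} B$, under which the Schur complement $C + B^T D^{-1} B$ vanishes identically on the zero locus. Since $D$ is an invertible $n\times n$ matrix we have $\operatorname{rank} D = n$, and therefore $\operatorname{rank} M = n + 0 = n$, which is the assertion.

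This argument involves no genuine difficulty; the points demanding care are purely organizational. One must track the sign in the off-diagonal blocks so that the Schur complement appears as $C + B^T D^{-1} B$ rather than $C - B^T D^{-1} B$, since it is precisely this combination that the hypothesis is tailored to annihilate. One should also keep in mind that invertibility of $D$, the identity $C = -B^T D^{-1} B$, and the constancy of the rank are all statements on the zero locus of the constraints, so that the pointwise linear-algebra computation delivers the required count there.
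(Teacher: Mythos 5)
Your proof is correct and takes essentially the same route as the paper's: both use the invertibility of $D$ to block-diagonalize the full bracket matrix by unipotent factors (a Schur-complement reduction), after which the hypothesis $C=-B^TD^{-1}B$ kills the complementary block, leaving rank $n$. The only cosmetic differences are the ordering of the two families and that the paper packages the reduction as a single congruence $Q^TPQ$ (implicitly using antisymmetry of $D$), whereas your two-sided factorization $M = L\,\mathrm{diag}(D,\,C+B^TD^{-1}B)\,U$ computes the rank verbatim for any invertible $D$.
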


\begin{remark}\label{rem:first-second_class_constraints}
In this case, we will say that the $\phi$'s are the first class constraints and the $\psi$'s the second class constraints of the system.
\end{remark}

\begin{proof}
The matrix representing the Poisson brackets has the form
\begin{align*}  
    P= \left( 
    \begin{array}{cc}
        C & -B^T \\
        B & D
    \end{array}
    \right)
\end{align*}
where the blocks are as in the statement. We want to prove that this matrix is congruent to one of rank $n$ i.e. that there exists an invertible matrix $Q$ such that $Q^{T}PQ$ has rank $n$. Since $D$ is invertible, we can build $Q$ as follows:
\begin{align*}
    Q= \left( 
    \begin{array}{cc}
        1 &  0 \\
        -D^{-1} B & 1
    \end{array}
    \right).
\end{align*}
An easy computation shows that
\begin{align*} 
    Q^T P Q= \left( 
    \begin{array}{cc}
        C+ B^T D^{-1} B & 0 \\
        0 & D
    \end{array}
    \right).
\end{align*}
Hence, using the second hypothesis {$C=- B^T D^{-1}B$} we get the claim.
\end{proof}

\begin{remark}\label{rem:linearcomb_constraints_fsc}
This result shows explicitly that a naive definition of first class constraint as the one commuting with everything else is not sufficient to correctly consider more involved cases where the constraints do not commute (under the Poisson brackets) on the nose, but there are linear combinations of them that do. In this specific setting, from the proof of the Proposition, we gather that we can consider the set of functions
\begin{align*}
    \widetilde{\phi}_j = \phi_j + \sum_{i,i'} B_{ij}D^{i i'} \psi_{i'}; \qquad \widetilde{\psi}_i = \psi_i
\end{align*}
and conclude that the functions $\widetilde{\phi}_j$ are first class (in the classical sense) and $\widetilde{\psi}_i$ are second class.
\end{remark}

\section{Linearized theory} \label{sec:linearized_theory}

In this section we analyse the boundary structure of the linearized theory. In particular we first introduce it on the bulk and then construct the boundary theory respectively in the non-degenerate and degenerate case.  We present the results only in the case $N=4$.  We denote with a tilde the linearized quantities to distinguish them from the general ones {and use the same notation introduced in Section \ref{sec:technical}. The unique difference is that we will denote by $W_{e_0}^{\bullet}$ the maps $e_0 \wedge \cdot$ to highlight the difference with the normal case.} The results of this appendix overlap with \cite{T2019b}.

\subsection{Linearized field equations and boundary structure}
Consider the action \eqref{e:PCaction} of the Palatini-Cartan theory with the following choices of coframe and connection:
\begin{align*}
 e&=e_0+b,\\
\omega&=\omega_0+a   
\end{align*}
with $(e_0,\omega_0)$ a fixed solution of Euler--Lagrange equations \eqref{e:tf} and \eqref{e:ee}  of the standard Palatini--Cartan theory. We retain only the quadratic terms in $a,b$; thus:
\begin{equation*}
S_{LPC}=\int_{M}{\left(\frac{1}{2}b b  F_{\omega_0}+e_0 b d_{\omega_0}a+\frac{1}{4}e_0 e_0 [a,a]{+ \frac{1}{4} \Lambda e_0e_0 bb}\right)}.
\end{equation*}
This produces the following Euler--Lagrange equations:
\begin{subequations}
\label{e:linearel}
\begin{align}
e_0(d_{\omega_0}b+[a,e_0])&=0 \label{eq:linearel1}\\
b\Fo+e_0 d_{\omega_0}a{+ \frac{1}{2} \Lambda e_0e_0 b}&=0\label{eq:linearel2}.
\end{align}
\end{subequations}
The first equation, as in the non-linearized case is equivalent to
$d_{\omega_0}b+[a,e_0]=0$.

With the same procedure derived from \cite{KT1979} used for the general theory, we can construct the \emph{geometric phase space} also for the linearized theory. The steps are exactly the same, while in this last case the kernel is parametrized by vector fields  {$X= v \frac{\delta}{\delta \omega}$} with $v$ satisfying 
\begin{equation}\label{e:Xomega_lin}
e_0 v=0
\end{equation} 
instead of \eqref{e:Xomega}. Consequently the \emph{geometric} space of boundary fields of the linearized theory,  $\widetilde{\mathcal{F}}_{LPC}$
is then parametrized by the field $b$ and by the equivalence classes of $a$ under the relation $a \sim a + v$ with $v$ satisfying \eqref{e:Xomega_lin}. The symplectic form on  $\widetilde{\mathcal{F}}_{LPC}$ is given by
\begin{align}\label{e:classical-boundary-symplform_lin}
\widetilde{\varpi}_{LPC} = \int_{\Sigma} e_0 \delta b \delta [a].
\end{align}

The following proposition provides a shortcut (possible only in the linearized case) to the choice of a representative of the equivalence class:
\begin{proposition}\label{prop:defTheta}
There exists a symplectomorphism $\widetilde{\mathcal{F}}_{LPC} \to T^*\Omega^{1,1}_{\partial}$ equipped with the canonical symplectic form.
\end{proposition}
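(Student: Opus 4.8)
The plan is to construct an explicit linear symplectomorphism $\Theta$. Since the linearized fields form vector spaces and $\widetilde{\varpi}_{LPC}$ is a constant (bilinear) two-form, it suffices to exhibit a linear isomorphism $\Theta\colon\widetilde{\mathcal{F}}_{LPC}\to T^*\Omega^{1,1}_\partial$ that pulls the canonical form back to $\widetilde{\varpi}_{LPC}$. The base of the cotangent bundle will be the $b$-space $\Omega^{1,1}_\partial$, and the whole point is to realize the equivalence class $[a]$ as a covector on it.

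First I would identify the cotangent fibre. For $N=4$ the wedge--integration pairing
\begin{align*}
    \langle\,\cdot\,,\,\cdot\,\rangle\colon \Omega^{1,1}_\partial\times\Omega^{2,3}_\partial\to\mathbb{R},\qquad
    \langle b,P\rangle=\int_\Sigma b\wedge P,
\end{align*}
is well defined, since the integrand lies in $\Omega^{3,4}_\partial=\Omega^{N-1,N}_\partial$, which is canonically a density. This pairing is perfect because the wedge pairings of complementary degrees $\Omega^{i}\times\Omega^{N-1-i}\to\Omega^{N-1}$ on $\Sigma$ and $\bigwedge^{j}\mathcal{V}\times\bigwedge^{N-j}\mathcal{V}\to\bigwedge^{N}\mathcal{V}$ are both perfect fibrewise, and the dimension count $\dim\Omega^{1,1}_\partial=\dim\Omega^{2,3}_\partial=12$ leaves no room for degeneracy. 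This yields the identification $\Omega^{2,3}_\partial\cong(\Omega^{1,1}_\partial)^*$, so that $T^*\Omega^{1,1}_\partial\cong\Omega^{1,1}_\partial\times\Omega^{2,3}_\partial$ carries the canonical two-form $\varpi_{\mathrm{can}}=\int_\Sigma\delta b\,\delta P$.

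Second I would realize the momentum. The map $W_{e_0}^{(1,2)}\colon\Omega^{1,2}_\partial\to\Omega^{2,3}_\partial$, $a\mapsto e_0\wedge a$, has kernel exactly the space of $v$ satisfying \eqref{e:Xomega_lin}, and is surjective by the surjectivity of $W_{N-3}^{\partial,(1,2)}$ established in Lemma~\ref{lem:We_boundary}. By the first isomorphism theorem it descends to a linear isomorphism $\overline{W}_{e_0}\colon \Omega^{1,2}_\partial/\!\sim\;\xrightarrow{\sim}\Omega^{2,3}_\partial$, $[a]\mapsto e_0\wedge a$. I then set
\begin{align*}
    \Theta\colon \widetilde{\mathcal{F}}_{LPC}\to T^*\Omega^{1,1}_\partial,\qquad (b,[a])\mapsto (b,\,e_0\wedge a),
\end{align*}
which is well defined (independent of the representative, precisely because $\overline{W}_{e_0}$ is) and bijective (the identity on $b$, an isomorphism on the fibres). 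Finally I would verify $\Theta^*\varpi_{\mathrm{can}}=\widetilde{\varpi}_{LPC}$: since $e_0$ is fixed, $\delta(e_0\wedge a)=e_0\wedge\delta a$, and because $e_0$ has even total degree ($1+1$) it commutes with $\delta b$, so
\begin{align*}
    \Theta^*\varpi_{\mathrm{can}}=\int_\Sigma \delta b\,\delta(e_0\wedge a)=\int_\Sigma \delta b\,e_0\,\delta a=\int_\Sigma e_0\,\delta b\,\delta a=\widetilde{\varpi}_{LPC},
\end{align*}
which is the claim.

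The main point, and the only place something could go wrong, is Step two: one needs that $e_0\wedge a$ sweeps out the \emph{entire} dual $(\Omega^{1,1}_\partial)^*$ and does so faithfully. Surjectivity of $W_{e_0}^{(1,2)}$ guarantees that every covector is attained, while passing to the quotient removes exactly the redundancy $\Ker W_{e_0}^{(1,2)}$, so that $\Theta$ is a genuine bijection rather than a mere embedding. It is worth emphasizing that this clean linear coincidence is special to the linearized theory: in the full theory the coefficient $e^{N-3}$ appearing in $\varpi^{\partial}_{PC}$ varies with the base field $e$, so $e^{N-3}\wedge\omega$ is not a fixed linear bundle map and no such global cotangent trivialization is available.
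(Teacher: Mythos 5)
Your proposal is correct and takes essentially the same approach as the paper's proof: both rely on the surjectivity of $W_{e_0}^{(1,2)}$ (Lemma \ref{lem:We_boundary}) together with the fact that its kernel is exactly the equivalence relation \eqref{e:Xomega_lin} defining $[a]$, so that $[a]\mapsto \Theta=e_0\wedge a$ is a bijection carrying $\widetilde{\varpi}_{LPC}$ to the canonical form $\int_\Sigma \delta b\,\delta\Theta$. Your write-up merely makes explicit two points the paper leaves implicit, namely the identification of the cotangent fibre with $\Omega^{2,3}_{\partial}$ via the perfect wedge--integration pairing and the pullback computation of the symplectic form.
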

\begin{proof}
Let $b$ and $\Theta$ be the fields respectively in the base and fiber of $\mathcal{F}^\partial_{LPC}=T^*\Omega^{1,1}_{\partial}$. The symplectic form of this space is 
\begin{equation*}
\varpi^\partial_{LPC}=\int_{\Sigma}{\delta b \delta \Theta}.
\end{equation*}
From Lemma \ref{lem:We_boundary} we know that the map $W_{e_0}^{\partial,(1,2)}$ is surjective but not injective.
Hence $\Theta$ can be written as 
\begin{equation} \label{e:deftheta}
\Theta = e_0 a 
\end{equation}
for some $a \in \Omega^{1,2}_{\partial}$. Because of the definition of $[a]$, it is then clear that there is a bijection between $[a]$ and $\Theta$. This bijection is also a symplectomorphism since it sends the symplectic form \eqref{e:classical-boundary-symplform_lin} to the corresponding one of $\mathcal{F}^\partial_{LPC}$. 
\end{proof}

\begin{remark}
This symplectomorphism exists also in the non-linearized case, but in both the degenerate and non-degenerate case it is not possible to write the action in a simple way with the new variables. Hence this is an important feature of the linearized case.
\end{remark}

\subsection{Non-degenerate boundary metric}
In this section we will implement the results of the non-degenerate  theory to the linearized case. Therefore we will consider a background boundary coframe giving rise to a non-degenerate boundary metric $g^\partial_0$. Moreover, we will compute the algebra of constraints, concluding that the reduced phase space of the linearized theory is coisotropic.

In this setting, the constraints of the theory are given by 
\begin{align*}
  e_0(\Do b+[a,e_0])=0 \quad\text{and}\quad b\Fo+e_0\Do a{+ \frac{1}{2} \Lambda e_0e_0 b}=0.  
\end{align*}
 Hence, using the identification \eqref{e:deftheta} we can write the constraints of the boundary linearized theory as in the following definition. Let now $c \in\Omega^{0,2}_\partial[1]$,  and $\mu \in \Omega^{0,1}_\partial[1]$ be (odd) Lagrange multipliers where the notation $[1]$ denotes that the fields are shifted by 1 and are treated as odd variables.

The functionals defining the constraints of the non-degenerate linearized Palatini-Cartan theory are
\begin{subequations}\label{e:constraints_lin_nd}
\begin{align}
\widetilde{L_c}&=\int_{\Sigma}{c e_0 \Do b+\Theta[c,e_0]},\label{eq:constraintsnondegenerate1}\\
\widetilde{J_{\mu}}&=\int_{\Sigma}{\mu\left(b\Fo+\Do\Theta{+ \frac{1}{2} \Lambda e_0e_0 b}\right)} \label{eq:constraintsnondegenerate2}
\end{align}
\end{subequations}
and the symplectic form reads
\begin{equation} \label{e:linearized_symplform}
\widetilde \varpi^\partial=\int_{\Sigma}{\delta b \delta \Theta}.
\end{equation}

The constraints \eqref{e:constraints_lin_nd}, together with the identification $\Theta= e_0 a$, are not sufficient to guarantee that  $\Do b+[a,e_0]=0$. In order to get this implication, we can exploit the freedom of the choice of $a$, given by the kernel of the map $W_{e_0}$. Indeed, as it was shown in \cite{CCS2020} for the non linearized case (a brief recap can be found in Section \ref{sec:recap_non-deg_case}), for every $\Theta$ and $b$ it is possible to find an $a$ such that 
\begin{align}\label{e:structural_constraint_lin_nd}
    e_n \Do b + e_n [a, e_0] \in \Ima W_{e_0} \quad \text{and} \quad \Theta = e_0 a
\end{align}
for a vector $e_n$ completing the set $e_0$ to a basis of $\mathcal{V}_{\Sigma}$. Then the choice \eqref{e:structural_constraint_lin_nd} together with the constraints \eqref{e:constraints_lin_nd} is equivalent to the restriction of the Euler-Lagrange equations on the bulk to the boundary.

We now compute the structure of the Poisson brackets of the constraints. We first need a technical lemma about the Hamiltonian vector fields of the constraints.
\begin{lemma}\label{lem:hamvecfields_lin_nd} 
The components of the Hamiltonian vector fields associated to the constraints of the non-degenerate linearized Palatini-Cartan theory are
\begin{subequations}
\begin{align}\label{eq:hamfJ}
    \mathbb L_b & = [c,e_0]  &
	\mathbb L_\Theta& = e_0\Do c  \\
    \mathbb J_b & = \Do\mu &
	\mathbb J_\Theta & = \mu \Fo {+ \frac{1}{2} \mu\Lambda e_0e_0 }
\end{align}
\end{subequations}
where the components of a generic vector field are defined as $\mathbb{F}=\mathbb{F}_b \frac{\delta}{\delta b} +\mathbb{F}_\Theta\frac{\delta}{\delta \Theta}$.
\end{lemma}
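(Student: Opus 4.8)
The plan is to compute the field-space variations $\delta\widetilde{L_c}$ and $\delta\widetilde{J_\mu}$ directly and then read off the Hamiltonian vector fields from the defining relation $\iota_{\mathbb{X}}\widetilde\varpi^\partial=\delta X$. Since the symplectic form \eqref{e:linearized_symplform} is the canonical one $\int_\Sigma \delta b\,\delta\Theta$, contracting a generic vector field $\mathbb{X}=\mathbb{X}_b\frac{\delta}{\delta b}+\mathbb{X}_\Theta\frac{\delta}{\delta\Theta}$ produces, up to signs fixed by the total-degree convention, an expression of the form $\int_\Sigma\big(\mathbb{X}_b\,\delta\Theta-\delta b\,\mathbb{X}_\Theta\big)$. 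Hence it suffices to bring each variation into the normal form $\int_\Sigma(\,\cdot\,)\delta b+(\,\cdot\,)\delta\Theta$ and match coefficients: the coefficient of $\delta\Theta$ yields $\mathbb{X}_b$, and (up to sign) the coefficient of $\delta b$ yields $\mathbb{X}_\Theta$.

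First I would treat $\widetilde{L_c}$. Recalling that $b$ and $\Theta$ are the dynamical fields while $e_0$, $\omega_0$ (hence $\Fo$) form the fixed background, only $b$ varies in the first term and only $\Theta$ in the second, giving $\delta\widetilde{L_c}=\int_\Sigma c\,e_0\,\Do\delta b+\delta\Theta\,[c,e_0]$. The one genuine manipulation is to integrate by parts the term $c\,e_0\,\Do\delta b$ so as to expose $\delta b$ undifferentiated. Using that $\Sigma$ is closed (so the total $\Do$-derivative integrates to zero) together with the linearized background torsion-free equation $\Do e_0=0$ (the background version of \eqref{e:tf}), one obtains $\int_\Sigma c\,e_0\,\Do\delta b=\int_\Sigma \Do(c\,e_0)\,\delta b=\int_\Sigma e_0\,\Do c\,\delta b$. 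Comparing coefficients then yields $\mathbb{L}_b=[c,e_0]$ and $\mathbb{L}_\Theta=e_0\Do c$.

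Next I would treat $\widetilde{J_\mu}$. Here, again using $\delta e_0=\delta\Fo=0$, one gets $\delta\widetilde{J_\mu}=\int_\Sigma\mu\big(\Fo\,\delta b+\Do\delta\Theta+\tfrac12\Lambda e_0 e_0\,\delta b\big)$. The only term needing attention is $\mu\,\Do\delta\Theta$, which I integrate by parts to $\Do\mu\,\delta\Theta$ (the boundary term is absent on closed $\Sigma$). Matching the coefficient of $\delta\Theta$ gives $\mathbb{J}_b=\Do\mu$, and matching the coefficient of $\delta b$ gives $\mathbb{J}_\Theta=\mu\Fo+\tfrac12\mu\Lambda e_0 e_0$, as claimed.

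The computation involves no conceptual difficulty; the only thing to watch is the sign bookkeeping arising from the odd (ghost-number-shifted) Lagrange multipliers $c$ and $\mu$ and from the field-space one-forms $\delta b,\delta\Theta$, all governed by the total-degree convention fixed in Section \ref{sec:technical}. Concretely, one must track the Koszul signs when commuting $e_0$ past $\Do c$ and when reordering $\delta\Theta\,[c,e_0]$ into canonical form, and check that these are consistent with the sign appearing in $\iota_{\mathbb{X}}\widetilde\varpi^\partial$. This is the main (and essentially the only) obstacle, and it is entirely routine.
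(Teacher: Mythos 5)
Your proposal is correct and follows essentially the same route as the paper's proof: compute $\delta\widetilde{L_c}$ and $\delta\widetilde{J_\mu}$ directly, integrate by parts on the closed $\Sigma$ (using the background equation $\Do e_0=0$ for the $\widetilde{L_c}$ term), and read off the components from $\iota_{\mathbb F}\widetilde\varpi^{\partial}=\delta\widetilde F$. The only difference is that you spell out the coefficient-matching and sign bookkeeping that the paper leaves implicit, which is a harmless elaboration.
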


\begin{proof}
The Hamiltonian vector field $\mathbb{F}$ of a function $F$ satisfies
\begin{equation*}
\iota_{\mathbb F}\widetilde \varpi^{\partial}-\delta\widetilde F=0.
\end{equation*}
The result thus follow easily from the variation of the constraints:
\begin{align*}
    \delta \widetilde{L_c} &=\int_{\Sigma}{(\Do c e_0 \delta b+[c,e_0]\delta\Theta)};\\
    \delta \widetilde{J_{\mu}} &=\int_{\Sigma}{\mu\left(\Fo\delta b -\Do\delta\Theta{+ \frac{1}{2} \Lambda e_0e_0 \delta b}\right)}=\int_{\Sigma}{\left(\mu \Fo \delta b +\Do\mu \delta\Theta{+ \frac{1}{2}\mu \Lambda e_0e_0 \delta b}\right)}.
\end{align*}
\end{proof}

\begin{theorem}\label{thm:Poissonbrackets_lin_nd}
Let $g^\partial_0$ be non-degenerate. Then the Poisson algebra of constraints \eqref{e:constraints_lin_nd} is abelian and therefore the vanishing locus of such constraints defines a coisotropic submanifold. In particular
    \begin{equation}
    \{\widetilde{L_c},\widetilde{L_c}\}=0\quad 
    \{\widetilde{J_{\mu}},\widetilde{J_{\mu}}\}=0\quad
    \{\widetilde{L_c},\widetilde{J_{\mu}}\}=0.
    \end{equation}
\end{theorem}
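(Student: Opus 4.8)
The plan is to compute the three brackets directly from the Hamiltonian vector fields recorded in Lemma \ref{lem:hamvecfields_lin_nd}, using $\{F,G\}=\iota_{\mathbb{F}}\iota_{\mathbb{G}}\widetilde\varpi^\partial=\iota_{\mathbb{F}}\delta G$ for the symplectic form \eqref{e:linearized_symplform}. Since both $\widetilde{L_c}$ and $\widetilde{J_{\mu}}$ are \emph{linear} in the fields $(b,\Theta)$, their Hamiltonian vector fields depend only on the background $(e_0,\omega_0)$ and on the multipliers, so each bracket is a field-independent integral over $\Sigma$ of a top-form; I would then show that each such integral vanishes. Throughout I would use the three structural facts the background provides: the torsion-free condition $\Do e_0=0$ (from \eqref{e:tf}), the Bianchi identity $\Do\Fo=0$, and the background field equation $e_0\Fo=\frac{1}{3!}\Lambda e_0^3$ (from \eqref{e:ee} for $N=4$), together with Stokes' theorem on $\Sigma$ and the $\mathfrak{so}(N-1,1)$-invariance of the internal contraction by the volume form, i.e. $\int_\Sigma[c,\alpha]\beta\cdots+\int_\Sigma\alpha[c,\beta]\cdots+\cdots=0$.

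For $\{\widetilde{L_c},\widetilde{L_c}\}$ I would substitute $\mathbb L_b=[c,e_0]$, $\mathbb L_\Theta=e_0\Do c$ to obtain the integral of the two terms $c\,e_0[\Do c,e_0]$ and $e_0\Do c\,[c,e_0]$, using $\Do[c,e_0]=[\Do c,e_0]$ (valid because $\Do e_0=0$). Integrating the identity $\int_\Sigma\Do\bigl(c\,e_0[c,e_0]\bigr)=0$ by parts and again invoking $\Do e_0=0$ rewrites the first term as $\mp\int_\Sigma\Do c\,e_0[c,e_0]$, while a graded reordering of the even factors $e_0\Do c$ and $\Do c\,e_0$ identifies the second term with $\int_\Sigma\Do c\,e_0[c,e_0]$; the two then cancel. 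Hence $\{\widetilde{L_c},\widetilde{L_c}\}=0$; note that, unlike the full theory where this bracket equals $-\tfrac12 L_{[c,c]}$ and vanishes only on shell, here the linearity of the constraint removes the structure-constant term. The bracket $\{\widetilde{J_{\mu}},\widetilde{J_{\mu}}\}$ is handled in the same spirit: inserting $\mathbb J_b=\Do\mu$ and $\mathbb J_\Theta=\mu\Fo+\frac12\mu\Lambda e_0e_0$ produces terms of the schematic form $\mu\Do\mu\,\Fo$ and $\mu\Do\mu\,e_0e_0$. Writing $\mu\Do\mu=\frac12\Do(\mu\mu)$ (legitimate since $\mu$ is odd) and integrating by parts, the curvature term is killed by $\Do\Fo=0$ and the cosmological term by $\Do e_0=0$, so every summand is a total derivative and integrates to zero.

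The main obstacle is $\{\widetilde{L_c},\widetilde{J_{\mu}}\}$, which yields $\int_\Sigma[c,e_0]\bigl(\mu\Fo+\frac12\Lambda\mu e_0e_0\bigr)-e_0\Do c\,\Do\mu$. Here I would integrate the last term by parts and use $\Do^2=[\Fo,\cdot]$ together with $\Do e_0=0$ to turn $e_0\Do c\,\Do\mu$ into a term proportional to $e_0[\Fo,c]\mu$; the invariance identity then lets me trade $\mu[c,e_0]\Fo$ against $[c,\mu]e_0\Fo$ and $\mu e_0[c,\Fo]$, so that the explicit curvature contributions cancel and one is left with a term proportional to $[c,\mu]e_0\Fo$ and, from the cosmological piece (again reorganized by invariance), a term proportional to $\Lambda[c,\mu]e_0^3$. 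The crucial last step is to invoke the background equation $e_0\Fo=\frac{1}{3!}\Lambda e_0^3$, which makes these two remaining contributions coincide and cancel; this is precisely the role of the $\frac12\Lambda e_0e_0 b$ term added to $\widetilde{J_{\mu}}$.

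Once all three brackets are shown to vanish, the constraint algebra is abelian, and since identically vanishing Poisson brackets imply strong involution, the common vanishing locus is coisotropic, which proves the theorem. The delicate points throughout are the bookkeeping of Koszul signs coming from the total-degree (plus ghost) grading and the correct counting of multiplicities when applying the invariance identity to products containing repeated factors of $e_0$; these are routine but must be tracked carefully, and they are the only places where a sign slip could spoil the cancellations described above.
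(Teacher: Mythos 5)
Your treatment of $\{\widetilde{J_{\mu}},\widetilde{J_{\mu}}\}$ and $\{\widetilde{L_c},\widetilde{J_{\mu}}\}$ is essentially the paper's own: for the first, writing $2\,\Do\mu\,\mu=\Do(\mu\mu)$ and killing the two terms by $\Do\Fo=0$ and $\Do e_0=0$ is exactly the argument in the text (one quibble: this identity holds because $\mu$ has \emph{even} total degree $0+1+1=2$, which is also why $\mu\mu\neq0$; "because $\mu$ is odd" is the wrong justification); for the mixed bracket, your chain --- integration by parts with $\Do^2=[\Fo,\cdot\,]$, the invariance of the top internal degree, and the background equation relating $e_0\Fo$ to $\Lambda e_0^3$ to cancel the curvature remainder against the cosmological remainder --- is precisely the mechanism behind the paper's (very terse) justification, and your version is actually more explicit, provided the sign of the $\Lambda$-term in $\widetilde{J_{\mu}}$ is matched consistently with the sign convention in the background field equation.

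The genuine gap is in $\{\widetilde{L_c},\widetilde{L_c}\}$. Your two terms are $\int_{\Sigma}c\,e_0[\Do c,e_0]$ and $\int_{\Sigma}\Do c\,e_0[c,e_0]$, and you resolve your "$\mp$" so that Stokes makes them cancel. With the paper's total-degree convention $c$ is odd (degree $0+2+1=3$), so the Koszul sign is $\Do\bigl(c\,e_0[c,e_0]\bigr)=\Do c\,e_0[c,e_0]-c\,e_0[\Do c,e_0]$, and Stokes gives
\begin{equation*}
\int_{\Sigma}c\,e_0[\Do c,e_0]=+\int_{\Sigma}\Do c\,e_0[c,e_0],
\end{equation*}
so the two contributions \emph{add}, in agreement with the paper's explicit formula $\{\widetilde{L_c},\widetilde{L_c}\}=2\int_{\Sigma}[c,e_0]e_0\Do c$. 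A purely formal pairwise cancellation cannot be the mechanism: the identical manipulation in the full theory would then force $\{L_c,L_c\}=0$ identically, contradicting $\{L_c,L_c\}=-\tfrac12L_{[c,c]}$ of Theorem \ref{thm:first-class-constraints}. What is missing is exactly the invariance identity you list in your toolkit but never use here: the triviality of the $\mathfrak{so}(3,1)$-action on $\textstyle\bigwedge^4\mathcal{V}_\Sigma$ gives $[\Do c,c]\,e_0e_0=\Do c\,[c,e_0e_0]$, whence
\begin{equation*}
4\,[c,e_0]\,e_0\,\Do c=\Do\bigl([c,c]\,e_0e_0\bigr)
\end{equation*}
(using $\Do e_0=0$), and the bracket vanishes as the integral of a total derivative. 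With that step restored, your proof is complete and coincides with the paper's.
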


\begin{proof}
Using the definition of the Poisson bracket of a symplectic manifold
\begin{equation*}
\{\widetilde F,\widetilde G\} = \iota_{\mathbb F}\iota_{\mathbb G}\,\widetilde\varpi^{\partial}=\iota_{\mathbb F}\delta \widetilde G,
\end{equation*}
using the results of Lemma \ref{lem:hamvecfields_lin_nd}, we get the following expression for the Poisson brackets of the constraints:
\begin{equation*}
\{\widetilde{L_c},\widetilde{L_c}\}= 2 \int_{\Sigma}{[c,e_0] e_0 \Do c}=0
\end{equation*}
since it is the integral of a total derivative given that 
\begin{align*}
    \Do ([c,c] e_0 e_0)&=\Do [c,c] e_0 e_0 =2[\Do c,c] e_0 e_0\\
&=2\Do c [c,e_0 e_0]=4[c,e_0] e_0 \Do c;
\end{align*}
\begin{equation*}
\{\widetilde{J_{\mu}},\widetilde{J_{\mu}}\}=\int_{\Sigma}{2\Do \mu \mu  \Fo {+ \Do \mu\Lambda \mu e_0e_0 }}=0,
\end{equation*}
which is equivalent to a total derivative as before, indeed
\begin{align*}
    \Do (\mu \mu  \Fo )&=\Do (\mu \mu) \Fo +\mu \mu  \Do \Fo \\
&=\Do (\mu \mu) \Fo =2\Do \mu \mu  \Fo 
\end{align*}
{and $\Do e_0=0$};
\begin{equation*}
\{\widetilde{J_{\mu}},\widetilde{L_c}\}=\int_{\Sigma}{\left(\Do \mu  e_0 \Do c+\mu  \Fo  [c,e_0]{+ \frac{1}{2} \Lambda e_0e_0 [c, e_0]}\right)}=0,
\end{equation*}
since
\begin{align*}
    \Do (\mu  e_0 \Do c )&=
    \Do \mu  e_0 \Do c+\mu  e_0 [\Fo ,c]
\end{align*}
{ and $[c, e_0^3]=0$.}
\end{proof}

This proves that the reduced phase space of the non-degenerate linearized PC theory is coisotropic. This of course also follows from the linearization of the result of Reference \cite{CS2019} on the non-degenerate Palatini-Cartan theory. 

\subsection{Degenerate boundary metric}
Let now $g^\partial_0$ be degenerate. 
In this case some of the properties useful to characterize the boundary structure of the non-degenerate case are different. In particular from Lemma \ref{lem:varrho12_deg} the map $\varrho_0|_{\mathrm{Ker}W_{e_0}^{(1,2)}}$ is no longer injective and 
\begin{align*}
   \Ima \left(\varrho_0|_{\Ker {W_{e_0}^{(1,2)}}}\right) \neq \Ker{W_{e_0}^{(2,1)}} .
\end{align*}
This implies that it is not possible to find an $a$ that solves \eqref{e:structural_constraint_lin_nd} for all $\Theta$. Digging more in the results of Section \ref{sec:technical} we get that  $\dim(\Ker{\varrho|_{\Ker{W_{e_0}^{(1,2)}}}})=2$ (Lemma \ref{lem:varrho12_deg}),
and consequently $\dim(\Ima{\varrho|_{\Ker{W_{e_0}^{(1,2)}}}})=4$.
Moreover $\dim\Ker{W_{e_0}^{(2,1)}}=6$, hence $\dim\mathcal{T}=2$. We conclude that, if we want to be able to find $a$ such that the constraints \eqref{e:constraints_lin_nd} are equivalent to the restriction of the Euler-Lagrange equations on the bulk to the boundary we have to impose two extra conditions on $\Theta$ and to modify the structural constraint \eqref{e:structural_constraint_lin_nd} accordingly.

Therefore, using Lemma \ref{lem:relationSandT}, we can add to the set of constraints the additional one
\begin{equation}
\widetilde R_\tau = \int_{\Sigma}{\tau(d_{\omega_0}b+[a,e_0])},
\end{equation}
with $\tau \in \mathcal{S}_0[1]$. Because of the definition\footnote{Here $\mathcal{S}_0$ is defined as $\mathcal{S}$ in \eqref{e:defS} but with all the maps built out of $e_0$ instead of $e$.} of $\mathcal{S}_0$ and of Lemma \ref{lem:relationSandT} we automatically have $\widetilde R_\tau[b,a+v]=\widetilde R_\tau[b,a]$ for $v\in\Ker{W_{e_0}^{(1,2)}}$. Hence the constraint $\widetilde R_\tau$ can be written in terms of $\Theta$ and $b$. Since $\int_{\Sigma}\tau [a,e_0] =- \int_{\Sigma}[\tau ,e_0] a $ we can use Lemma \ref{lem:[tau,e]inImW} and write $\widetilde R_\tau$ as
\begin{align*}
    \widetilde R_\tau 
    = \int_{\Sigma} \tau d_{\omega_0} b - W_{e_0}^{-1}([\tau ,e_0])e_0 a
    = \int_{\Sigma} \tau d_{\omega_0} b - W_{e_0}^{-1}([\tau ,e_0])\Theta .
\end{align*}

On the other hand, the structural constraint \eqref{e:structural_constraint_lin_nd} is modified as follows:
\begin{align}\label{e:structural_constraint_lin_mod}
    e_n \Do e + e_n [a, e_0]- p_{\mathcal{T}}(e_n \Do e + e_n [a, e_0]) \in \Ima W_{e_0}.
\end{align}

Collecting these results we get that the functionals defining the constraints of the linearized Palatini-Cartan theory are
\begin{align*}
\widetilde{L_c} & =\int_{\Sigma}c e_0 \Do b+\Theta[c,e_0];\\
\widetilde{J_{\mu}} & =\int_{\Sigma}\mu\left(b\Fo+\Do\Theta{+ \frac{1}{2} \Lambda e_0e_0 b}\right);\\
\widetilde R_\tau & = \int_{\Sigma}\tau d_{\omega_0}b-W_{e_0}^{-1}([\tau ,e_0])\Theta 
\end{align*}
where $\tau \in \mathcal{S}_0[1]$. 

We can now compute the Poisson brackets of the constraints to assess their type. First we need to compute the Hamiltonian vector field of the additional constraint $\widetilde{R_{\tau}}$:
\begin{lemma}\label{lem:hamvectfields_lin_deg}
The components of the Hamiltonian vector field of  $\widetilde R_\tau$ are given by
\begin{align*}
     \mathbb R_b & = W_{e_0}^{-1}([\tau ,e_0]),  &
	\mathbb R_\Theta& = \Do \tau .
\end{align*}
\end{lemma}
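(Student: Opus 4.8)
The plan is to mirror the proof of Lemma~\ref{lem:hamvecfields_lin_nd}: I would use the defining relation $\iota_{\mathbb R}\widetilde\varpi^\partial=\delta\widetilde R_\tau$ and extract the two components $\mathbb R_b,\mathbb R_\Theta$ by comparing the coefficients of $\delta\Theta$ and $\delta b$ in the variation of the constraint, exactly as $\mathbb L_b,\mathbb L_\Theta$ were read off there. The feature that makes this computation short---and that distinguishes the linearized setting from the general one treated in Lemma~\ref{lem:expression_deltatau_constrained}---is that here $\tau$ ranges over $\mathcal S_0[1]$, a space defined entirely through the \emph{fixed} background coframe $e_0$. Consequently $\tau$ does not depend on the dynamical fields $(b,\Theta)$, so $\delta\tau=0$ and none of the constrained-variation terms of Lemma~\ref{lem:expression_deltatau_constrained} appear.

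Concretely, I would vary the two summands of $\widetilde R_\tau=\int_\Sigma\bigl(\tau\,\Do b-W_{e_0}^{-1}([\tau,e_0])\,\Theta\bigr)$ separately. Since both $\tau$ and $W_{e_0}^{-1}([\tau,e_0])$ are built only from $e_0$ and $\tau$, they are inert under $\delta$; hence the second summand contributes a term in $\delta\Theta$ with coefficient $W_{e_0}^{-1}([\tau,e_0])$, directly giving $\mathbb R_b=W_{e_0}^{-1}([\tau,e_0])$. For the first summand I would integrate by parts to move the covariant derivative off $\delta b$ and onto $\tau$: by the graded Leibniz rule for $\Do$ and Stokes' theorem on the closed boundary $\Sigma$, the total-derivative term drops and one is left with $\int_\Sigma\Do\tau\,\delta b$, whose $\delta b$-coefficient yields $\mathbb R_\Theta=\Do\tau$.

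The only point demanding care is the bookkeeping of the Koszul signs produced by the odd multiplier $\tau\in\mathcal S_0[1]$, by the graded Leibniz rule (recall that $\tau$ has external form degree $N-3$), and by commuting $\delta$ through $\Do$ and the wedge products; all of these are governed by the total-degree conventions of Section~\ref{sec:technical} and conspire to reproduce the signs stated. I do not expect any substantive obstacle beyond this routine sign-tracking, since the structural content---$\delta\tau=0$ together with a single integration by parts---is elementary.
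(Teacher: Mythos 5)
Your proposal is correct and coincides with the paper's own argument, which simply declares the lemma a ``trivial application'' of $\iota_{\mathbb F}\widetilde \varpi^{\partial}-\delta\widetilde F=0$; you supply exactly the details that make it trivial, in particular the key observation that $\tau\in\mathcal S_0[1]$ is built from the fixed background $e_0$, so $\delta\tau=0$ and the constrained-variation terms of Lemma~\ref{lem:expression_deltatau_constrained} are absent. The single integration by parts on the closed $\Sigma$ and the reading-off of the $\delta b$ and $\delta\Theta$ coefficients then match the pattern of Lemma~\ref{lem:hamvecfields_lin_nd}, as intended.
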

\begin{proof}
Trivial application of the equation $\iota_{\mathbb F}\widetilde \varpi^{\partial}-\delta\widetilde F=0.$
\end{proof}

Before proceeding to the main theorem, giving an explicit expression of the Poisson brackets of the constraints, we give some insight on Proposition \ref{prop:components_of_tau} and of Corollary \ref{cor:components_of_W-1[tau,e]} in the case $N=4$.

\begin{corollary}[of Proposition \ref{prop:components_of_tau}] \label{cor:components_of_tau_N=4}
    Let $p\in \Sigma$ and $U$ a neighbourhood of $p$, then in normal geodesic coordinates centered in $p$ and in the standard basis of $\mathcal{V}_{\Sigma}$,
the free components of an element $\tau \in \mathcal{S}_0$ are 2 and are characterized by the following equations:
\begin{align*}
    \tau_3^{abc}&=0 \quad\forall a,b,c\\
    \tau_\alpha^{123}&=0 \quad\alpha=1,2\\
    \tau_\alpha^{124}&=0 \quad\alpha=1,2\\
    \tau_2^{134} &=\frac{\tau_1^{234} {g_0}_{22}- \tau_1^{134}({g_0}_{12}+{g_0}_{21})}{{g_0}_{11}} \\
    \tau_1^{134} &=-\tau_2^{234}.
\end{align*}
Correspondingly we have
\begin{align*}
    W_{e_0}^{-1}([\tau ,e_0])_1^1 & = \tau_1^{134} & 
    W_{e_0}^{-1}([\tau ,e_0])_2^2 & = \tau_2^{234} \\
    W_{e_0}^{-1}([\tau ,e_0])_1^2 & = \tau_1^{234} & 
    W_{e_0}^{-1}([\tau ,e_0])_2^1 & = \tau_2^{134}.
\end{align*}
\end{corollary}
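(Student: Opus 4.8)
The plan is to read off the statement from Proposition~\ref{prop:components_of_tau} and Corollary~\ref{cor:components_of_W-1[tau,e]} specialized to $N=4$ with the coframe $e$ replaced by the background coframe $e_0$, supplementing these with the explicit linear relation that the general statements leave implicit in the function $f$. Recall that for $N=4$ the space $\mathcal{S}_0$ of \eqref{e:defS} sits inside $\Omega_\partial^{1,3}$, so an element $\tau$ has components $\tau_\mu^{abc}$ with a single antisymmetrized form index $\mu\in\{1,2,3\}$ and an antisymmetrized internal triple $abc$ drawn from $\{1,2,3,4\}$. I would fix normal geodesic coordinates centered at $p$ and the standard basis of $\mathcal{V}_\Sigma$, arranging the labels so that direction $3$ is the null tangent direction spanning the kernel of $g_0^\partial$ (so that ${g_0}_{3a}=0$ for every tangent $a$) and direction $4$ is the transversal direction $n$. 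With these conventions the abstract labels of Proposition~\ref{prop:components_of_tau} become, up to sign, $Y_1\sim\tau_1^{134}$, $Y_2\sim\tau_2^{234}$, $X_1^2\sim\tau_1^{234}$ and $X_2^1\sim\tau_2^{134}$, which is the dictionary through which the whole statement is to be translated.

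First I would extract the vanishing statements. Proposition~\ref{prop:components_of_tau} asserts that the only non-zero components of $\tau\in\mathcal{S}_0$ are the $Y_\mu$ and the $X_{\mu_1}^{\mu_2}$; through the dictionary above this is exactly $\tau_3^{abc}=0$ for all $a,b,c$, together with $\tau_\alpha^{123}=\tau_\alpha^{124}=0$ for $\alpha=1,2$, leaving only the four components $\tau_1^{134},\tau_1^{234},\tau_2^{134},\tau_2^{234}$ whose internal triple contains both the null direction $3$ and the transversal direction $4$. The trace relation $\sum_\mu Y_\mu=0$ then reads $\tau_1^{134}=-\tau_2^{234}$. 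The remaining relation is the off-diagonal equation $X_{\mu_1}^{\mu_2}=f(\widetilde g^\partial,X_{\mu_2}^{\mu_1},Y_\mu)$; to make $f$ explicit for $N=4$ I would substitute the four surviving components into the two defining conditions of $\mathcal{S}_0$, namely $\tau\wedge e_0=0$ and $[\tau,\widetilde e_0]=0$, and solve the resulting linear system. The wedge condition lands in the one-dimensional $\bigwedge^4\mathcal{V}_\Sigma$ and already forces most internal triples to drop out, while the bracket condition, evaluated with $\widetilde e_0$ (which differs from $e_0$ only by removing the null form-component), produces the weighting by the non-degenerate $1,2$ block ${g_0}_{\alpha\beta}$ and yields $\tau_2^{134}=\bigl({g_0}_{22}\tau_1^{234}-({g_0}_{12}+{g_0}_{21})\tau_1^{134}\bigr)/{g_0}_{11}$.

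Finally, the formulas for $W_{e_0}^{-1}([\tau,e_0])$ follow immediately from Corollary~\ref{cor:components_of_W-1[tau,e]}: its content is that the free components of $W_{e_0}^{-1}([\tau,e_0])$ are proportional to the $X_{\mu_1}^{\mu_2}$ and $Y_\mu$, which under the dictionary are precisely the four entries $\tau_1^{134},\tau_2^{234},\tau_1^{234},\tau_2^{134}$ listed in the statement. The main obstacle is the explicit determination of $f$: Proposition~\ref{prop:components_of_tau} only guarantees its existence and its base-point reduction to the antisymmetry $X_{\mu_1}^{\mu_2}=-X_{\mu_2}^{\mu_1}$, so the ${g_0}$-weighted form valid in a neighbourhood must be computed by hand, carefully tracking the antisymmetrization signs in $\tau\wedge e_0$ and $[\tau,\widetilde e_0]$ and using the degeneracy ${g_0}_{3a}=0$ to guarantee that no direction-$3$ metric components enter. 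Everything else is bookkeeping of indices in the standard basis.
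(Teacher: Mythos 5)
Your proposal is correct and is essentially the paper's own (implicit) argument: the paper gives no separate proof of Corollary \ref{cor:components_of_tau_N=4}, obtaining it exactly as you do, namely as the $N=4$ specialization of Proposition \ref{prop:components_of_tau} (vanishing components and trace relation) combined with the explicit inversion of the defining conditions of $\mathcal{S}_0$ using the non-degenerate block of the metric (this is the content of Lemma \ref{lem:tau_rho_neigh}), and with Corollary \ref{cor:components_of_W-1[tau,e]} for the $W_{e_0}^{-1}([\tau,e_0])$ entries. Two minor caveats that do not affect validity: (i) your side remark misattributes the vanishing statements to the wedge condition --- in the paper's proofs the vanishings $\tau_3^{abc}=\tau_\alpha^{123}=\tau_\alpha^{124}=0$ come from $\mathrm{Ker}\,\widetilde{\varrho}$ (Lemmas \ref{lem:deftau_rho} and \ref{lem:tau_rho_neigh}), whereas the wedge condition $\tau\wedge e_0=0$ (Lemma \ref{lem:deftau_W}) is what produces the trace relation $\tau_1^{134}=-\tau_2^{234}$; since you extract the vanishings from the Proposition anyway, your derivation still closes; (ii) Corollary \ref{cor:components_of_W-1[tau,e]} only asserts proportionality, so the unit coefficients in the displayed $W_{e_0}^{-1}([\tau,e_0])$ formulas do not follow ``immediately'' but require the component identification $[\tau,e_0]_{\mu 3}^{\nu_1\nu_2}=\tau_{\mu}^{\nu_1\nu_2 4}$ from the proof of Lemma \ref{lem:[tau,e]inImW} together with the wedge formulas for $W_{e_0}$, a short computation your plan would in any case produce.
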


\begin{theorem} \label{thm:Poissonbrackets_lin_d}
    Let $g_0^\partial$ be degenerate on $\Sigma$. Then the structure of the Poisson brackets of the constraints $\widetilde{L_c}$, $\widetilde{J_{\mu}}$ and $\widetilde{R_{\tau}}$ is given by the following expressions:
    \begin{align*}
        &\{\widetilde{L_c}, \widetilde{L_c}\}  = 0 
        & \{\widetilde{L_c}, \widetilde{J_{\mu}}\}  =  0 \\
       	& \{\widetilde{J_{\mu}}, \widetilde{J_{\mu}}\}  =  0 
       	& \{\widetilde{J_{\mu}},\widetilde{R_{\tau}}\}  = \widetilde{F}_{\mu\tau} \\
       	& \{\widetilde{L_c}, \widetilde{R_{\tau}}\} =  0  
 		& \{\widetilde{R_{\tau}},\widetilde{R_{\tau}}\}  = \widetilde{F}_{\tau\tau}
    \end{align*}
    where $\widetilde{F}_{\mu\tau}$ and $\widetilde{F}_{\tau\tau}$ are functions of the background fields $e_0$, $\omega_0$ and of $\mu$ and $\tau$. These functions vanish if $\tau$ is covariantly constant.
\end{theorem}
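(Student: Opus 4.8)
The plan is to read off the Hamiltonian vector fields of the three constraints from Lemmas \ref{lem:hamvecfields_lin_nd} and \ref{lem:hamvectfields_lin_deg} and to evaluate each bracket through $\{\widetilde F,\widetilde G\}=\iota_{\mathbb F}\iota_{\mathbb G}\,\widetilde\varpi^{\partial}$ with the canonical form \eqref{e:linearized_symplform}. Three of the six brackets, namely $\{\widetilde{L_c},\widetilde{L_c}\}$, $\{\widetilde{L_c},\widetilde{J_{\mu}}\}$ and $\{\widetilde{J_{\mu}},\widetilde{J_{\mu}}\}$, involve only $\mathbb L$ and $\mathbb J$, which are identical to those of the non-degenerate theory; since the total-derivative arguments proving Theorem \ref{thm:Poissonbrackets_lin_nd} use only $\Do e_0=0$ and the Bianchi identity $\Do\Fo=0$ and never the non-degeneracy of $g_0^\partial$, these three brackets vanish verbatim. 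The decisive simplification of the linearized setting is that $\mathcal S_0$ is built from the fixed background coframe $e_0$, so $\tau$ does not depend on the dynamical fields and $\delta\tau=0$; consequently none of the $\frac{\delta\widetilde\rho}{\delta e}$-type contributions that complicate the general computation in Theorem \ref{thm:Brackets_constraints} are present here.

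For $\{\widetilde{L_c},\widetilde{R_\tau}\}$ I would substitute $\mathbb L_b=[c,e_0]$, $\mathbb L_\Theta=e_0\Do c$, $\mathbb R_b=W_{e_0}^{-1}([\tau,e_0])$ and $\mathbb R_\Theta=\Do\tau$, and use $e_0\wedge W_{e_0}^{-1}([\tau,e_0])=\pm[\tau,e_0]$ to turn the $\mathbb L_\Theta\mathbb R_b$ contribution into $\pm\Do c\,[\tau,e_0]$. Integrating by parts and using $\Do e_0=0$ gives $\pm c\,[\Do\tau,e_0]$, which by the $\eta$-adjointness of $\varrho=[\,\cdot\,,e_0]$ under $\int_{\Sigma}$ cancels the $[c,e_0]\Do\tau$ contribution. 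Hence $\{\widetilde{L_c},\widetilde{R_\tau}\}$ is a total covariant derivative and vanishes with no assumption on $\tau$.

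It remains to produce the two functions $\widetilde F_{\tau\tau}$ and $\widetilde F_{\mu\tau}$. For $\{\widetilde{R_\tau},\widetilde{R_\tau}\}$ both contributions are of the form $W_{e_0}^{-1}([\tau,e_0])\,\Do\tau$, so that
\begin{align*}
\widetilde F_{\tau\tau}=2\int_{\Sigma}W_{e_0}^{-1}([\tau,e_0])\,\Do\tau,
\end{align*}
which is manifestly a multiple of $\Do\tau$ and therefore vanishes for covariantly constant $\tau$. For $\{\widetilde{J_{\mu}},\widetilde{R_\tau}\}$ the same substitution yields
\begin{align*}
\widetilde F_{\mu\tau}=\int_{\Sigma}\Do\mu\,\Do\tau+\Big(\mu\Fo+\tfrac12\mu\Lambda e_0 e_0\Big)W_{e_0}^{-1}([\tau,e_0]).
\end{align*}
Here the cosmological piece $\tfrac12\Lambda\int_{\Sigma}\mu\,e_0 e_0\,W_{e_0}^{-1}([\tau,e_0])$ can be reduced, via $e_0\wedge W_{e_0}^{-1}([\tau,e_0])=\pm[\tau,e_0]$, to $\pm\tfrac12\Lambda\int_{\Sigma}\mu\,e_0\wedge[\tau,e_0]$, and this vanishes identically because $\tau\wedge e_0=0$ and $[e_0,e_0]=0$ force $e_0\wedge[\tau,e_0]=0$, exactly as in the proof of Theorem \ref{thm:Brackets_constraints}. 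The first term is again proportional to $\Do\tau$.

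The main obstacle is the surviving curvature term $\int_{\Sigma}\mu\,\Fo\,W_{e_0}^{-1}([\tau,e_0])$, which is the linearized avatar of the function $G_{\lambda\tau}$ of Theorem \ref{thm:Brackets_constraints} and which involves $\tau$ undifferentiated, so it is not visibly a multiple of $\Do\tau$. My plan is to expand it in the standard basis, using the explicit components of $W_{e_0}^{-1}([\tau,e_0])$ and of $\tau$ furnished by Corollary \ref{cor:components_of_tau_N=4}, and to recognise the resulting contraction of $\Fo$ with the components of $\tau$ as the relevant components of $[\Fo,\tau]=\Do^2\tau$. Since $\Do^2\tau=\Do(\Do\tau)$ vanishes whenever $\Do\tau=0$, this term also vanishes for covariantly constant $\tau$, completing the proof. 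I expect the bookkeeping that identifies this contraction with $\Do^2\tau$ --- rather than with some other, non-vanishing, curvature contraction --- to be the genuinely delicate step.
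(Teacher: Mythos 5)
Your proposal is correct and, for five of the six brackets, is essentially the paper's own argument: the same Hamiltonian vector fields, the same verbatim reduction of $\{\widetilde{L_c},\widetilde{L_c}\}$, $\{\widetilde{L_c},\widetilde{J_{\mu}}\}$, $\{\widetilde{J_{\mu}},\widetilde{J_{\mu}}\}$ to Theorem \ref{thm:Poissonbrackets_lin_nd}, the same total-derivative argument giving $\{\widetilde{L_c},\widetilde{R_{\tau}}\}=0$, and the same observation that $\{\widetilde{R_{\tau}},\widetilde{R_{\tau}}\}$ is manifestly proportional to $\Do\tau$. The one real divergence is in $\{\widetilde{J_{\mu}},\widetilde{R_{\tau}}\}$. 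The paper never treats the curvature term $\int_{\Sigma}\mu\Fo W_{e_0}^{-1}([\tau,e_0])$ in isolation: it first converts $\int_{\Sigma}\Do\mu\,\Do\tau$ by parts into $\int_{\Sigma}\tau[\Fo,\mu]$, splits $\mu=\iota_{\xi}e_0+\mu^4 e_n$, and expands the \emph{sum} of the two terms in coordinates; the tangential contributions either cancel pairwise between the two summands (the $\mu^3$-terms) or vanish by the background equations $e_0\Fo=0$, $\Do e_0=0$ (the $\mu^1,\mu^2$-terms), and what survives is the normal-component function $F(\mu^4 e_n)=\widetilde F_{\mu\tau}$, whose vanishing for $\Do\tau=0$ then follows because the surviving terms coincide with contributions of $\tau[\Fo,\mu]=\pm\mu[\Fo,\tau]=\pm\mu\,\Do^2\tau$.

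Your alternative --- identifying the curvature term by itself with components of $[\Fo,\tau]=\Do^2\tau$ --- does close, but only through exactly those two mechanisms, and you should make them explicit. The $\mu^1,\mu^2$-parts of the curvature term are \emph{not} components of $[\Fo,\tau]$ at all; they vanish only by the background field equations, which your sketch never invokes at this step. The $\mu^3$- and $\mu^4$-parts do match components of $\mu[\Fo,\tau]$, but with \emph{opposite} relative signs (this is visible in the paper's computation, where the $\mu^3$-terms cancel against $\tau[\Fo,\mu]$ while the $\mu^4$-terms double); the cleanest organizing tool is the pointwise adjointness $\tau[\Fo,\mu]=\pm\mu[\Fo,\tau]$, which holds because $[\Fo,\tau\wedge\mu]=0$ on internal top forms. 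So the identification is with a component-wise reweighted contraction of $\Do^2\tau$, not with $\mu[\Fo,\tau]$ on the nose --- your anticipated ``delicate bookkeeping'' is real, but it does succeed. Finally, one omission worth repairing: the theorem's point (and the corollary after it, classifying $\widetilde{R_{\tau}}$ as second class) rests on $\widetilde F_{\mu\tau}$ and $\widetilde F_{\tau\tau}$ being generically nonzero and not proportional to the constraints; independence of $b$ and $\Theta$ is manifest in your expressions, but generic nonvanishing requires the explicit coordinate evaluation (as in \eqref{e:RR_lin_4} and the formula for $F(\mu^4 e_n)$) that the paper carries out and your proposal leaves untouched.
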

\begin{proof}
The brackets between the constraints $\widetilde{L_c}$ and  $\widetilde{J_{\mu}}$ are the same as in the non-degenerate case and have already been computed in Theorem \ref{thm:Poissonbrackets_lin_nd}. Let us now compute $\{\widetilde{L_c}, \widetilde{R_{\tau}}\}$. Using the results of Lemmas \ref{lem:hamvecfields_lin_nd} and \ref{lem:hamvectfields_lin_deg} we get  
\begin{align*}
    \{\widetilde{L_c}, \widetilde{R_{\tau}}\} & = \int_{\Sigma} [c,e_0] \Do \tau + \Do c [\tau ,e_0] = \int_{\Sigma}\Do ([c,e_0] \tau ) = 0
\end{align*}
where we used that $\Do e_0 =0$ and that $\Sigma$ is closed. Consider now $\{\widetilde{J_{\mu}},\widetilde{R_{\tau}}\}$:
\begin{align*}
    \{\widetilde{J_{\mu}},\widetilde{R_{\tau}}\} & = \int_{\Sigma} \Do\mu \Do \tau + \mu \Fo W_{e_0}^{-1}([\tau ,e_0]){+ \frac{1}{2} \mu \Lambda e_0[\tau ,e_0]}.
\end{align*}
{ We first note that the last term vanishes. Then, 
since the remaining terms do not depend on $b$, the bracket} cannot be proportional to any of the constraints. We now want  to prove, using coordinates, that this bracket does not vanish. Integrating by parts the first term and discarding the total derivative ($\Sigma$ closed) we get 
\begin{align*}
    \int_{\Sigma} \Do\mu \Do \tau = \int_{\Sigma} - \mu [ \Fo , \tau ] = \int_{\Sigma}  \tau [ \Fo , \mu ]. 
\end{align*}
We now split the computation in two parallel ways, one for the components of $\mu$ proportional to the image of $e_0$ (on the boundary) and the other for the orthogonal part of $\mu$. We parametrize $\mu$ with a vector field $\xi \in \Gamma (T\Sigma)$, such that $\mu=\iota_{\xi}e_0 + \mu^4 e_n$. Let us denote by $F$ a function of the last component $\mu^4 e_n$. We have:
\begin{align*}
    \{\widetilde{J_{\mu}},\widetilde{R_{\tau}}\} & = \int_{\Sigma}  \tau [ \Fo , \iota_{\xi} e_0 ] + \iota_{\xi} e_0 \Fo W_{e_0}^{-1}([\tau ,e_0])+ F(\mu^4 e_n)\\
    & = \int_{\Sigma}  \tau[\Fo, \iota_{\xi} e_0]+ \iota_{\xi} \Fo [\tau ,e_0]+ F(\mu^4 e_n)\\
    & = \int_{\Sigma} \tau \iota_{\xi}[\Fo, e_0]+ F(\mu^4 e_n) = F(\mu^4 e_n)
\end{align*}
where we used that the couple $(e_0, \omega_0)$ solves the equations $e_0 \Fo=0$ and $\Do e_0=0$. Consequently the last term vanishes since $[\Fo,e_0]= \Do (\Do e_0)= \Do(0)=0 $.

 In order to compute $F(\mu^4 e_n)$ we will make use of Corollary \ref{cor:components_of_tau_N=4} and since there are no derivatives involved here, we can also simplify the result by taking $g^{\partial}_0$ diagonal and working in the point $p$ (basis point of the normal geodesic coordinates). Furthermore this approach is also suitable for proving the same result of the tangent part, but it is way more complicated, involving the computation of all the components of the quantities appearing in the bracket. Hence, in the standard basis we have
\begin{align*}
    \tau [ \Fo , \mu ] &  = \left(\tau_2^{134} [ \Fo , \mu ]_{13}^{2} + \tau_1^{234} [ \Fo , \mu ]_{23}^{1} - \tau_1^{134} [ \Fo , \mu ]_{23}^{2} - \tau_2^{234} [ \Fo , \mu ]_{13}^{1}\right) \mathbf{V}\\
    &  = \left(\tau_1^{234} ([ \Fo , \mu ]_{23}^{1}+ [ \Fo , \mu ]_{13}^{2}  )- \tau_1^{134} ([ \Fo , \mu ]_{23}^{2} - [ \Fo , \mu ]_{13}^{1})\right) \mathbf{V},
\end{align*}
\begin{align*}
    \mu \Fo W_{e_0}^{-1}([\tau ,e_0]) & = \left(-(\mu \Fo)_{23}^{234} \tau_{1}^{134}-(\mu \Fo)_{13}^{134} \tau_{2}^{234}\right) \mathbf{V} \\
    & \quad + \left((\mu \Fo)_{13}^{234} \tau_{2}^{134}+ (\mu \Fo)_{23}^{134} \tau_{1}^{234}\right) \mathbf{V} \\
    & = \left(\tau_{1}^{134}( (\mu \Fo)_{23}^{234} -(\mu \Fo)_{13}^{134})\right) \mathbf{V} \\
    & \quad- \left(\tau_{1}^{234}((\mu \Fo)_{23}^{134} +(\mu \Fo)_{13}^{234})\right) \mathbf{V} 
\end{align*}
where $\mathbf{V}= {e_0}_1 {e_0}_2 {e_0}_3 e_n dx^1 dx^2 dx^3$. Using the two identities
\begin{align*}
    [ \Fo , \mu ]_{ab}^{i}= \sum_{k,j}\Fo_{ab}^{ij}\eta_{jk} \mu^k \; \text{and} \; (\mu \Fo)_{ab}^{ijk}= \mu^i \Fo_{ab}^{jk} + \mu^j \Fo_{ab}^{ki} + \mu^k \Fo_{ab}^{ij}
\end{align*}
we can further simplify $\tau [ \Fo , \mu ] + \mu \Fo W_{e_0}^{-1}([\tau ,e_0])$. A simple computation shows (as it should be because of the first part of this proof) that the terms containing $\mu^3$ are the same with opposite sign in the two summands, hence they cancel and the terms containing $\mu^1$ and $\mu^2$ vanish because they contain components of $\Fo$ that are zero due to the equations $e_0 \Fo=0$ and $\Do e_0=0$. Hence we are left with the terms containing $\mu^4 e_n$
and get 
\begin{align*}
   F(\mu^4 e_n) & = \left(\tau_1^{234} (\Fo_{23}^{13} \mu^4 + \Fo_{13}^{23} \mu^4) - \tau_1^{134}(\Fo_{23}^{23}\mu^4 -\Fo_{13}^{13}\mu^4)\right) \mathbf{V}\\
    & + \left(\tau_1^{234} (\mu^4 \Fo_{23}^{13}+ \mu^4 \Fo_{13}^{23}) - \tau_1^{134}(\mu^4 \Fo_{23}^{23}-\mu^4 \Fo_{13}^{13})\right) \mathbf{V}\\
  & = 2 \mu^4\left(-\tau_1^{234} (\Fo_{23}^{13}  + \Fo_{13}^{23} ) + \tau_1^{134}(\Fo_{23}^{23} -\Fo_{13}^{13})\right) \mathbf{V}.
\end{align*}
These expression does not vanish since none of the equations $e_0 \Fo=0$ and $\Do e_0=0$ contain these components of $\Fo$. We denote by $\widetilde{F}_{\mu\tau}= F(\mu^4 e_n)$ and note that if $\Do \tau=0$ (i.e. $\tau$ covariantly constant) it vanishes.
Let us now pass to the last bracket:
\begin{align*}
    \{\widetilde{R_{\tau}},\widetilde{R_{\tau}}\} & = \int_{\Sigma} W_{e_0}^{-1}([\tau ,e_0])\Do \tau.
\end{align*}
As in the previous case this bracket does not depend on $b$ and $\Theta$, hence it cannot be proportional to any constraint. Again, we want to prove that this expression does not vanish and in order to reach this goal we will work in coordinates using the results of Corollary \ref{cor:components_of_tau_N=4}. Note that since the expression of the bracket entails taking derivatives, we need to work in a neighbourhood and retain the complete expression for the relations between the components of $\tau$.
\begin{align}
    \{\widetilde{R_{\tau}},\widetilde{R_{\tau}}\} & =\int_{\Sigma} \left( -\tau_1^{134} \partial_3 \tau_2^{234}-\tau_2^{234} \partial_3 \tau_1^{134}+ \tau_2^{134} \partial_3 \tau_1^{234}+ \tau_1^{234} \partial_3 \tau_2^{134}\right) \mathbf{V} \nonumber\\
    & = \int_{\Sigma}\left( 2\tau_1^{134} \partial_3 \tau_1^{134}+ \tau_1^{234} \partial_3 \tau_1^{234}+ \tau_1^{234} \partial_3 \frac{\tau_1^{234} {g_0}_{22}-2 \tau_1^{134}{g_0}_{12}}{{g_0}_{11}}\right) \mathbf{V}. \label{e:RR_lin_4}
\end{align}
For a generic (not covariantly constant) $\tau$ and  a generic background metric $g_0$ this expression does not vanish. We denote this last quantity with $\widetilde{F}_{\tau\tau}$.
\end{proof}

\begin{corollary}
    If $\tau$ is not covariantly constant, the constraints $\widetilde{L_c}$, $\widetilde{J_{\mu}}$ and $\widetilde{R_{\tau}}$ do not form a first class system. In particular $\widetilde{R_{\tau}}$ is a second class constraint while the others are first class.
\end{corollary}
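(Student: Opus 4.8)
The plan is to reproduce, in the linearized setting, the mechanism of Corollary~\ref{cor:constraints-first-second_class}, invoking the abstract counting result of Proposition~\ref{prop:numberofsecondclassconstraints} together with the definition of first/second class constraints recalled in Appendix~\ref{sec:first-second_class_def}. First I would dispose of $\widetilde{L_c}$: by Theorem~\ref{thm:Poissonbrackets_lin_d} it satisfies $\{\widetilde{L_c},\widetilde{L_c}\}=\{\widetilde{L_c},\widetilde{J_{\mu}}\}=\{\widetilde{L_c},\widetilde{R_{\tau}}\}=0$, so it commutes (on the nose) with every constraint and is therefore first class; it can be set aside, and the whole analysis reduces to the $\widetilde{J_{\mu}}$--$\widetilde{R_{\tau}}$ subsystem.

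For that subsystem I would set, in the notation of Proposition~\ref{prop:numberofsecondclassconstraints}, $\psi=\widetilde{R_{\tau}}$ and $\phi=\widetilde{J_{\mu}}$, and read off the three blocks from Theorem~\ref{thm:Poissonbrackets_lin_d}: $D$ represents $\{\widetilde{R_{\tau}},\widetilde{R_{\tau}}\}=\widetilde{F}_{\tau\tau}$, $B$ represents $\{\widetilde{R_{\tau}},\widetilde{J_{\mu}}\}=\pm\widetilde{F}_{\mu\tau}$, and $C$ represents $\{\widetilde{J_{\mu}},\widetilde{J_{\mu}}\}=0$. The invertibility of $D$ is guaranteed for $\tau$ not covariantly constant and generic background: the explicit expression~\eqref{e:RR_lin_4} defines a non-degenerate bilinear form on the two-dimensional space $\mathcal{S}_0$ (recall $\dim\mathcal{S}_0=2$ for $N=4$ by Proposition~\ref{prop:components_of_tau}). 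The decisive structural observation is the one already used in the full theory: $\widetilde{F}_{\tau\tau}$, hence every entry of $D$, carries a derivative $\partial_3$ in the light-like direction, so $D^{-1}$ is a purely non-local \emph{inverse-derivative} operator, whereas $\widetilde{F}_{\mu\tau}$, by the on-shell computation in the proof of Theorem~\ref{thm:Poissonbrackets_lin_d}, is \emph{algebraic} (it depends on $\mu$ only through the transversal component $\mu^4 e_n$ and contains no derivative).

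With this in hand the only nontrivial point is the hypothesis $C=-B^{T}D^{-1}B$ of Proposition~\ref{prop:numberofsecondclassconstraints}. Since $C=0$ here, I must show $B^{T}D^{-1}B=0$. This is where the linearized case is actually \emph{simpler} than Corollary~\ref{cor:constraints-first-second_class}: because $\widetilde{J_{\mu}}$ was not modified by a term proportional to $\widetilde{R_{\tau}}$, the matrix $B$ has no derivative part at all (it is the analogue of the $G_{\lambda\tau}$ piece only), so no surviving $B'$ appears. Each factor of $B$ contributes a transversal factor $\mu^4 e_n$ and carries no derivative, while $D^{-1}$ supplies the inverse derivative; the resulting terms in $B^{T}D^{-1}B$ are exactly the derivative-free terms which vanish because the Lagrange multipliers are odd, by the same mechanism as Lemma~\ref{lem:vanishofBD-1BT} used in the proof of Corollary~\ref{cor:constraints-first-second_class}. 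Hence $B^{T}D^{-1}B=0=-C$.

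Finally, Proposition~\ref{prop:numberofsecondclassconstraints} yields that the number of second class constraints equals the rank of $D$, namely $\dim\mathcal{S}_0=2$, all accounted for by $\widetilde{R_{\tau}}$, while $\widetilde{L_c}$ and (after the linear redefinition of Remark~\ref{rem:linearcomb_constraints_fsc}) $\widetilde{J_{\mu}}$ are first class; this is precisely the assertion of the corollary. I expect the main obstacle to be the rigorous handling of the non-local operator $D^{-1}$ and the justification that the algebraic-times-inverse-derivative combination in $B^{T}D^{-1}B$ vanishes identically; the odd (ghost) nature of $\mu$ and $\tau$, i.e.\ $\mu^4\mu^4=0$ and $e_n e_n=0$, is what makes this collapse, exactly as in the non-linear case.
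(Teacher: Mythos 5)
Your proposal takes essentially the same route as the paper: the paper's proof consists precisely of invoking Proposition \ref{prop:numberofsecondclassconstraints} and then Lemma \ref{lem:vanishofBD-1BT}, which is exactly what you do after identifying the blocks $C=0$, $B=\pm\widetilde{F}_{\mu\tau}$ (algebraic, entering only through the odd transversal component $\mu^4$) and $D=\widetilde{F}_{\tau\tau}$ (invertible for $\tau$ not covariantly constant). Your extra observations --- setting $\widetilde{L_c}$ aside first, and noting that since $\widetilde{J_{\mu}}$ was never corrected by a multiple of $\widetilde{R_{\tau}}$ the matrix $B$ has no derivative part and hence no surviving $B'$ term, so that $B^{T}D^{-1}B=0=-C$ follows from the odd-multiplier mechanism --- are correct elaborations of the steps the paper dismisses as ``trivial''.
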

\begin{proof}
We prove this result using Proposition \ref{prop:numberofsecondclassconstraints} and subsequently by trivially applying Lemma \ref{lem:vanishofBD-1BT}.
\end{proof}

{
\begin{remark}\label{rmk:zeromodes_linearized}
In case $\tau$ is covariantly constant, i.e., $d_{\omega_0} \tau=0$, also the constraint $\widetilde{R_{\tau}}$ is first class. Indeed, we get
\begin{align*}
    \{\widetilde{R_{\tau}},\widetilde{R_{\tau}}\} & = \int_{\Sigma} W_{e_0}^{-1}([\tau ,e_0])\Do \tau=0.
\end{align*}
These are usually called first-class zero mode \cite{AlexandrovSpeziale15}. The interpretation of these zero modes in terms of symmetries is still unknown and will be the object of future studies.
\end{remark}

These results can be compared with the ones in \cite{ECT1987} where the same problem is analysed in the Einstein--Hilbert formalism. In this article the authors found $2N$ first class constraints (shared with the time/space-like case) and additional $N(N-3)/2$ second class constraints. Despite the different number of first-class constraints, the results here outlined coincide exactly with those that we have found. Indeed we found exactly the same amount of additional second-class constraints, while the difference in the number of first-class constraint is due to the different formalism adopted. Indeed, in the space- or time-like case in the Einstein--Hilbert formalism one has $2N$ first-class constraints \cite{CS2016a}, while in the Palatini--Cartan formalism one has $N(N+1)/2$ first-class constraints \cite{CCS2020}. This difference is due to the different number of degrees of freedom of the space of boundary fields. Note however that the numbers of physical degrees of freedom are the same. The actual comparison of the results of the present article and those in \cite{ECT1987} can be made more precise through a procedure similar to that outlined in \cite[Theorem 4.25]{CS2019} and will be the object of future work.

}

\section{Long Proofs} \label{sec:appendix_long_proofs}

All the proofs of Lemmas \ref{lem:We_bulk} and \ref{lem:We_boundary} can be found in \cite[Appendix A]{CCS2020}. We recall here verbatim the partial proof of the point \ref{lem:kernel12-21} of Lemma \ref{lem:We_boundary} in order to have a useful reference for the new proofs.
\begin{proof}[Proof of Lemma
{\hyperref[lem:kernel12-21]{\ref*{lem:We_boundary}.(\ref*{lem:kernel12-21})}} ] (\cite{CCS2020})
Consider $W_{N-3}^{ \partial, (1,2)}\colon \Omega_{N, \partial}^{1,2}  \longrightarrow \Omega_{N, \partial}^{N-2,N-1}$.

The dimensions of domain and codomain of this map are respectively 
$\dim \Omega_{N, \partial}^{1,2} = (N-1)\frac{N(N-1)}{2}$ and $\dim \Omega_{N, \partial}^{N-2,N-1} = (N-1)N$. The kernel of $W_{N-3}^{ \partial, (1,2)}$ is defined by the following set of equations:
\begin{align*}
X_{\mu_1}^{ab} e_a e_b \wedge e_{\mu_2} \wedge \dots \wedge e_{\mu_{N-2}} dx^{\mu_1}dx^{\mu_2}\dots dx^{\mu_{N-2}}=0
\end{align*}
where we used $e_a$ as a basis for $\mathcal{V}_{\Sigma}$. Let now $k=N$ be the transversal direction and let $k'\in\{1, \dots N-1\}$. Since $\{dx^{\mu_1}dx^{\mu_2}\dots dx^{\mu_{N-2}}\}$ is a basis for $\Omega_{\partial}^{N-2}$ we obtain $N-1$ equations of the form
\begin{align*}
\sum_{\sigma} X_{\mu_{\sigma(1)}}^{ab} e_a e_b \wedge e_{\mu_{\sigma(2)}} \wedge \dots \wedge e_{\mu_{\sigma(N-2)}} =0  
\end{align*}
where $\sigma$ runs on all permutations of $N-2$ elements and $1\leq \mu_i \leq N-1$, $\mu_i \neq k'$ for all $1\leq i \leq N-2$. Recall now that $e_a e_b \wedge e_{\mu_{\sigma(2)}} \wedge \dots \wedge e_{\mu_{\sigma(N-2)}}$ is a basis of $\wedge^{N-1}\mathcal{V}_{\Sigma}$. Hence we obtain the following equations:
\begin{align*}
 X_{i}^{N k'} =0 \qquad & 1 \leq i \leq N-1 \; i \neq k', \\
 \sum_{i\neq N, i\neq k'} X_{i}^{iN} =0, \qquad & 
  \sum_{i\neq N, i\neq k'} X_{i}^{ik'} =0 .
\end{align*}
Letting now $k'$ vary in $\{1, \dots , N-1\}$ we obtain the following equations:
\begin{subequations}\label{e:conditions_kernel_v}
\begin{align}
 X_{i}^{N j} =0 \qquad & 1 \leq i,j \leq N-1 \; i \neq j ,\\
 \sum_{i\neq N, i\neq j} X_{i}^{iN} =0 \qquad & 1 \leq j \leq N-1, \\
  \sum_{i\neq N, i\neq j} X_{i}^{ij} =0 \qquad & 1 \leq j \leq N-1 .
\end{align}
\end{subequations}
It is easy to check that these equations are independent.
The total number of equations defining the kernel is then $(N-1)+(N-1)(N-2)+(N-1) = (N-1)N$ which coincides with number of degrees of freedom of the codomain. Hence  $W_{N-3}^{ \partial, (1,2)}$ is surjective but not injective. In particular $\dim \text{Ker} W_{N-3}^{\partial, (1,2)}= (N-1)\frac{N(N-1)}{2}- N(N-1)= \frac{N(N-1)}{2}(N-3)$.
\end{proof}

\begin{proof}[Proof of Lemma \ref{lem:varrho12_deg}]
Consider $\varrho |_{\text{Ker} W_{N-3}^{\partial, (1,2)}}: \text{Ker} W_{N-3}^{\partial, (1,2)} \rightarrow \Omega_{N, \partial}^{2,1}$. From the proof of Lemma  \hyperref[lem:kernel12-21]{\ref*{lem:We_boundary}.(\ref*{lem:kernel12-21})} we know that $\dim \text{Ker} W_{N-3}^{\partial, (2,1)}= \frac{N(N-1)}{2}(N-3)$. An element $v \in  \text{Ker} W_{N-3}^{\partial, (1,2)}$ must satisfy equations 
the following equations:
\begin{subequations}\label{e:eq_for_comp_of_v}
\begin{align} 
v_{i}^{N j} =0 \qquad & 1 \leq i,j \leq N-1 \; i \neq j, \\
\sum_{i\neq N, i\neq j} v_{i}^{iN} =0 \qquad & 1 \leq j \leq N-1, \\
 \sum_{i\neq N, i\neq j} v_{i}^{ij} =0 \qquad & 1 \leq j \leq N-1. \label{e:constraint_kerW12_n3}
\end{align}
\end{subequations}

The kernel of $\varrho$ is defined by the following set of equations\footnote{ \label{foot:brackets} Here we use that in every point we can find a basis in $\mathcal{V}_{\Sigma}$ such that $e_\mu^i= \delta_\mu^i$: $[v,e]_{\mu_1\mu_2}^a= v_{\mu_1}^{ab}\eta_{bc}e_{\mu_2}^c- v_{\mu_2}^{ab}\eta_{bc}e_{\mu_1}^c= v_{\mu_1}^{ab}e_{b}^d\eta_{dc}e_{\mu_2}^c- v_{\mu_2}^{ab}e_{b}^d\eta_{dc}e_{\mu_1}^c$}:
\begin{align*}
[v,e]_{\mu_1\mu_2}^a= v_{\mu_1}^{ab} g^{\partial}_{b \mu_2} - v_{\mu_2}^{ab}g^{\partial}_{b\mu_1}=0.
\end{align*}
Using now normal geodesic coordinates, we can diagonalise $g^\partial$ with eigenvalues on the diagonal $\alpha_{\mu} \in \{1,-1,0\}$:
\begin{align} \label{e:rho_normalgeodesiccoordinates}
[v,e]_{\mu_1\mu_2}^a= v_{\mu_1}^{a\mu_2} \alpha_{ \mu_2} - v_{\mu_2}^{a\mu_1} \alpha_{\mu_1}=0.
\end{align}
Let now $\alpha_{ \mu}=0$ for $  \mu = N-1$ and $\alpha_{ \mu}= \pm 1 $ for $ 1 \leq \mu \leq N-2$.
We adopt now the following convention on indices for $m,p\in \mathbb{N}$: $ 1 \leq i_m \leq N-2$, $i_m \neq i_p $ iff $m\neq p$.
Using $v \in  \text{Ker} W_{N-3}^{\partial, (2,1)}$, from \eqref{e:rho_normalgeodesiccoordinates} we get
\begin{subequations}
\begin{align}
    [v,e]_{i_1 i_2}^{i_3} & = v_{i_1}^{ i_3 i_2} - v_{i_2}^{ i_3 i_1} & [v,e]_{i_1 i_2}^{i_1} & = v_{i_1}^{i_1 i_2} \label{e:kernel_rho_n1} \\
    [v,e]_{i_1 i_2}^{N-1} & = v_{i_1}^{N-1 i_2} - v_{i_2}^{N-1 i_1} & [v,e]_{i_1 i_2}^{N} & = 0 \\
    [v,e]_{i_1 N-1}^{i_2} & = - v_{N-1}^{i_2 i_1} & [v,e]_{i N-1}^{N-1} & = - v_{N-1}^{N-1 i}  \label{e:kernel_rho_n5} \\
    [v,e]_{i N-1}^{i} & =0 & [v,e]_{i N-1}^{N} & = 0.
\end{align}
\end{subequations}
By imposing that every component vanishes, we get the corresponding equations for the kernel.
It is easy to check that these equations are independent but the second of \eqref{e:kernel_rho_n1} and the first of \eqref{e:kernel_rho_n5} which are connected by \eqref{e:constraint_kerW12_n3}.
The total number of equations defining the kernel is then 
\begin{align*}
    \frac{(N-2)(N-3)(N-4)}{2} +(N-3)(N-4)+  (N-2)(N-3).
\end{align*}
Since  $\frac{N(N-1)}{2}(N-3)$ is the number of degrees of freedom of the domain, the kernel has dimension \begin{align*}
    \dim (\text{Ker}\varrho |_{\mathrm{Ker} W_{N-3}^{\partial, (1,2)}})= \frac{N(N-3)}{2}.
\end{align*}
\end{proof}

\begin{proof}[Proof of Proposition \ref{prop:components_of_tau}]
We split the proof into simpler lemmas. First we compute the dimension of $\mathcal{S}$ and the equations defining it at the point $p$ in Lemmas \ref{lem:deftau_W} and \ref{lem:deftau_rho}. 

Collecting these results we get that $\mathcal{S}$ is defined by the following equations:
    \begin{align*}
        \sum_{\mu_1 \dots \mu_{N-3}=1}^{N-2} X_{\mu_1 \dots \mu_{N-3}}^{N N-1 \mu_1 \dots \mu_{N-3}}=0 & \\
        X_{N-1 i_{1}\dots i_{n-4}}^{\mu_1 \dots \mu_{N-1}}=0 & 
            \quad 1 \leq \mu_a \leq N, \; 1 \leq i_a \leq N-2 \\
        X_{ i_1 \dots i_{N-3}}^{N i_1 \dots i_{N-2}}=0 &
            \quad  1 \leq i_a \leq N-2 \\
        X_{ i_1 \dots i_{N-3}}^{N-1 i_1 \dots i_{N-2} }=0 &
            \quad  1 \leq i_a \leq N-2 \\
        X_{i_1\dots i_{N-4} i_{N-3}}^{N N-1 i_1\dots i_{N-4}  i_{N-2}}+X_{i_1\dots i_{N-4} i_{N-2}}^{N N-1 i_1\dots i_{N-4}  i_{N-3}}=0 &    \quad 1 \leq i_a \leq N-2.
    \end{align*}
Counting them and subtracting the result to the total dimension of $\Omega^{N-3,N-1}_{\partial}$ we get the claimed result.
Then in Lemma \ref{lem:tau_rho_neigh}  we express the equations defining the kernel of $\widetilde{\varrho}^{(N-3, N-1)}$ in the geodesic neighbourhood $U$ in terms of the components of $\tau \in \mathcal{S}$ and those of the modified metric $\widetilde{g}^{\partial} \coloneqq \eta (\widetilde{e},\widetilde{e}) $ and find the corresponding free components. Note also that the equations in Lemma \ref{lem:deftau_W} hold in a neighbourhood since we are not using normal geodesic coordinates in the proof.
\end{proof}

\begin{lemma}  \label{lem:deftau_W}
The space $\mathrm{Ker}W_{1}^{\partial (N-3,N-1)} \subset \Omega^{N-3,N-1}_{\partial}$ in the standard basis is defined by the following $N-1$ equations
\begin{align*}
    \sum_{\substack{\mu_1 \dots \mu_{N-3}=1\\ \mu_i \neq k, \mu_i \neq \mu_j}}^{N-1} X_{\mu_1 \dots \mu_{N-3}}^{N k \mu_1 \dots \mu_{N-3}} =0 \qquad 1 \leq k \leq N-1.
\end{align*}

\end{lemma}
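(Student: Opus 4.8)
The plan is to compute the map $W_1^{\partial,(N-3,N-1)}$ explicitly in the standard basis and read off the linear equations cutting out its kernel. First I would record the two elementary facts that make the computation tractable: in the standard basis one has $e_\mu^a=\delta_\mu^a$ (as used in footnote~\ref{foot:brackets}), so that $e=\sum_{\nu=1}^{N-1}dx^\nu\otimes e_\nu$; and the codomain $\Omega_\partial^{N-2,N}$ has dimension $\binom{N-1}{N-2}\cdot 1=N-1$, with a basis indexed by the omitted form index $k\in\{1,\dots,N-1\}$, namely the elements $dx^1\cdots\widehat{dx^k}\cdots dx^{N-1}\otimes(e_1\wedge\cdots\wedge e_N)$. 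Thus $W_1^{\partial,(N-3,N-1)}(X)=0$ amounts to exactly $N-1$ scalar equations, one per value of $k$, which already explains the shape of the claim.

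Next I would expand $W_1^{\partial,(N-3,N-1)}(X)=X\wedge e$ in coordinates. Writing $X$ with totally antisymmetric components $X^{a_1\dots a_{N-1}}_{\mu_1\dots\mu_{N-3}}$ and using $e=\sum_\nu dx^\nu\otimes e_\nu$, each term carries the internal factor $e_{a_1}\wedge\cdots\wedge e_{a_{N-1}}\wedge e_\nu$, which is nonzero only when $\nu\notin\{a_1,\dots,a_{N-1}\}$; since $\bigwedge^N\mathcal{V}_\Sigma$ is one-dimensional this forces $\{a_1,\dots,a_{N-1}\}=\{1,\dots,N\}\setminus\{\nu\}$ and produces a factor $\pm\, e_1\wedge\cdots\wedge e_N$. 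Simultaneously the form factor $dx^{\mu_1}\cdots dx^{\mu_{N-3}}dx^\nu$ is nonzero only when $\{\mu_1,\dots,\mu_{N-3},\nu\}$ consists of $N-2$ distinct indices, i.e.\ omits a single index $k$. Collecting, for fixed $k$ the surviving contributions are parametrized by $\nu$ ranging over $\{1,\dots,N-1\}\setminus\{k\}$, with form indices $\{\mu_1,\dots,\mu_{N-3}\}=\{1,\dots,N-1\}\setminus\{k,\nu\}$ and internal indices $\{1,\dots,N\}\setminus\{\nu\}=\{N,k,\mu_1,\dots,\mu_{N-3}\}$; this matches exactly the component $X^{Nk\mu_1\dots\mu_{N-3}}_{\mu_1\dots\mu_{N-3}}$ appearing in the statement.

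The remaining and only delicate step is the sign bookkeeping, and this is where I expect the real work to be. For each $k$ I would set the coefficient of the basis element $dx^1\cdots\widehat{dx^k}\cdots dx^{N-1}\otimes(e_1\wedge\cdots\wedge e_N)$ to zero, computing the two permutation signs involved: the sign of reordering $dx^{\mu_1}\cdots dx^{\mu_{N-3}}dx^\nu$ into the increasing $(N-2)$-form, and the sign of rewriting $e_{a_1}\wedge\cdots\wedge e_{a_{N-1}}\wedge e_\nu$ as $e_1\wedge\cdots\wedge e_N$. The key point to verify is that, after normalizing every term to the canonical ordered component $X^{Nk\mu_1\dots\mu_{N-3}}_{\mu_1\dots\mu_{N-3}}$, these two signs combine to a value independent of $\nu$, so that the equation becomes the genuine (non-alternating) sum displayed rather than a signed one; the parity gained by moving $dx^\nu$ past the $\mu$'s is compensated by the parity of moving $e_\nu$ into its slot in $e_1\wedge\cdots\wedge e_N$. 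The case $N=4$ is a faithful pattern: for $k=3$ the coefficient is $X^{134}_1+X^{234}_2$, while the two terms in the displayed sum satisfy $X^{431}_1=-X^{134}_1$ and $X^{432}_2=-X^{234}_2$, so both carry the same overall sign and the equation $X^{431}_1+X^{432}_2=0$ is indeed equivalent to the vanishing of the coefficient. Finally I would note that, since the only input used is $e_\mu^a=\delta_\mu^a$, which holds throughout the coordinate chart and not merely at a single point, the resulting equations are valid on a whole neighbourhood, as required in the proof of Proposition~\ref{prop:components_of_tau}.
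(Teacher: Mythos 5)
Your proposal is correct and follows essentially the same route as the paper's proof: expand $X\wedge e=0$ in the standard basis, decompose over the $N-1$ basis elements of $\Omega^{N-2}_\partial$ indexed by the omitted coordinate index $k$, and observe that only the components $X^{Nk\mu_1\dots\mu_{N-3}}_{\mu_1\dots\mu_{N-3}}$ survive because the internal wedge must saturate $\bigwedge^N\mathcal{V}_\Sigma$. Your sign bookkeeping (the form-side and internal-side permutation parities cancel, so the sum is genuinely unsigned) is in fact more careful than the paper's, which passes over this point silently.
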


\begin{proof}
 Consider $W_{1}^{ \partial, (N-3,N-1)}: \Omega_{\partial}^{N-3,N-1}  \longrightarrow \Omega_{\partial}^{N-2,N}$:
the kernel of it is defined by the following set of equations:
\begin{align*}
X_{\mu_1 \dots \mu_{N-3}}^{a \dots a_{N-1}} e_a \wedge \dots \wedge e_{a_{N-1}} \wedge e_{\mu_{N-2}} dx^{\mu_1}dx^{\mu_2}\dots dx^{\mu_{N-2}}=0
\end{align*}
 where we used $e_a$ as a basis for $\mathcal{V}_{\Sigma}$. 
 Since $\{dx^{\mu_1}dx^{\mu_2}\dots dx^{\mu_{N-2}}\}$ is a basis for $\Omega_{\partial}^{N-2}$ we obtain $N-1$ equations of the form
 \begin{align*}
\sum_{\sigma} X_{\mu_{\sigma(1)} \dots \mu_{\sigma(N-3)}}^{a \dots a_{N-1}} e_a \wedge \dots \wedge e_{a_{N-1}} \wedge e_{\mu_{\sigma(N-2)}} =0
\end{align*}
where $\sigma$ runs on all permutations of $N-2$ elements and $1\leq \mu_i \leq N-1$ and denote by $k$ the missing index: $\mu_i \neq k$ for all $1\leq i \leq N-2$. 
Recall now that $e_a \wedge e_{\mu_{\sigma(2)}} \wedge \dots \wedge e_{\mu_{\sigma(N-2)}}$ is a basis of $\wedge^{N} \mathcal{V}_{\Sigma}$. Hence we obtain the following $N-1$ equations:
\begin{align*}
    \sum_{\substack{\mu_1 \dots \mu_{N-3}=1\\ \mu_i \neq k, \mu_i \neq \mu_j}}^{N-1} X_{\mu_1 \dots \mu_{N-3}}^{N k \mu_1 \dots \mu_{N-3}} =0 \qquad 1 \leq k \leq N-1.
\end{align*}
\end{proof}

\begin{lemma}\label{lem:deftau_rho}
    The space $\mathrm{Ker} \tilde{\rho} |_{\Omega^{N-3,N-1}_{\partial}} \subset \Omega^{N-3,N-1}_{\partial}$ is defined by the following four sets of equations in the standard basis:
    \begin{align*}
        X_{N-1  i_{1}\dots i_{n-4}}^{\mu_1 \dots \mu_{N-1}}=0 & 
            \quad 1 \leq \mu_a \leq N, \; 1 \leq i_a \leq N-2 \\
        X_{ i_1 \dots i_{N-3}}^{N i_1 \dots i_{N-2} }=0 &
            \quad  1 \leq i_a \leq N-2 \\
        X_{i_1 \dots i_{N-3}}^{ i_1 \dots i_{N-2} N-1}=0 &
            \quad 1 \leq i_a \leq N-2 \\
        X_{ i_1\dots i_{N-4} i_{N-3}}^{N N-1 i_1\dots i_{N-4} i_{N-2}} + & X_{ i_1\dots i_{N-4} i_{N-2}}^{N N-1 i_1\dots i_{N-4} i_{N-3}} =0  \quad  1 \leq i_a \leq N-2.
    \end{align*}
\end{lemma}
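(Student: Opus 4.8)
The plan is to mimic closely the coordinate computations already carried out for $\mathrm{Ker}\,\varrho^{(1,2)}$ in the proof of Lemma \ref{lem:varrho12_deg} and for $\mathrm{Ker}\,W_1^{\partial,(N-3,N-1)}$ in Lemma \ref{lem:deftau_W}, now applied to the twisted map $\widetilde{\varrho}^{(N-3,N-1)}\colon X\mapsto[X,\widetilde e]\in\Omega^{N-2,N-2}_\partial$. First I would fix the point $p$ and work in normal geodesic coordinates there in the standard basis $(e_1,\dots,e_{N-1},e_n)$ of $\mathcal V_\Sigma$, so that $e^a_\mu=\delta^a_\mu$ and $g^\partial$ is diagonalized with eigenvalues $\alpha_\mu\in\{1,-1,0\}$; I take the degenerate direction to be $\mu=N-1$, i.e.\ $\alpha_{N-1}=0$, and let $N$ be the transversal internal index carried by $e_n$. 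The decisive simplification is the explicit form of $\widetilde e$ at $p$: since $\widehat e=\beta\,\iota_X e$ with $X$ the degenerate vector field and $\iota_X\beta=1$, one checks that at the base point $\widetilde e_{N-1}=0$ while $\widetilde e_\mu=e_\mu$ for $\mu\neq N-1$. Hence the contraction entering the bracket reads $\eta_{bc}\,\widetilde e^{\,c}_\mu=\eta_{b\mu}$ for $\mu\le N-2$ and vanishes for $\mu=N-1$, so that $\widetilde e$ effectively deletes the degenerate form-direction. Choosing moreover $e_n$ to be $\eta$-orthogonal to $e_1,\dots,e_{N-2}$ reduces the off-diagonal pairings to the single one $\eta_{N-1,N}\neq 0$ (forced by non-degeneracy of $\eta$ together with $\alpha_{N-1}=0$).

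Next I would expand the kernel condition $[X,\widetilde e]_{\mu_1\cdots\mu_{N-2}}^{a_1\cdots a_{N-2}}=0$ by reading off coefficients with respect to the bases $\{dx^{\mu_1}\cdots dx^{\mu_{N-2}}\}$ of $\Omega^{N-2}_\partial$ and $\{e_{a_1}\wedge\cdots\wedge e_{a_{N-2}}\}$ of $\bigwedge^{N-2}\mathcal V_\Sigma$, exactly as in Lemmas \ref{lem:deftau_W} and \ref{lem:varrho12_deg}. I would then organize the resulting equations by two dichotomies: whether the $(N-2)$-tuple of form indices contains the degenerate direction $N-1$, and which of the internal directions $N-1,N$ occur among the $(N-2)$ internal indices. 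The antisymmetrization in the bracket distributes the single $\widetilde e$-leg over the present non-degenerate form directions while $X$ carries the remaining ones, and matching the resulting internal $(N-2)$-covectors against the basis yields precisely the four announced families: the vanishing of every component whose form part hits $N-1$ (first family); the two families obtained when the internal part saturates the directions $N$ or $N-1$ together with the good indices $i_1,\dots,i_{N-2}$ (second and third families); and the antisymmetrized relation arising from components whose internal part contains \emph{both} $N$ and $N-1$ (fourth family). I would verify both necessity and sufficiency, i.e.\ that every coefficient of $[X,\widetilde e]$ vanishes if and only if these relations hold, so that the four sets cut out $\mathrm{Ker}\,\widetilde{\varrho}^{(N-3,N-1)}$ exactly.

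The main obstacle will be the combinatorial bookkeeping: tracking the signs produced by antisymmetrization and identifying exactly which components $X_{\cdots}^{\cdots}$ enter each coefficient of $[X,\widetilde e]$, in particular distinguishing the cases where the internal multi-index contains the transversal index $N$, the degenerate index $N-1$, both, or neither. A secondary point requiring care is the pairing $\eta_{N-1,N}$: it does not obstruct the argument because it only recombines components already constrained by the families above, but it must be carried consistently through the computation. Once all component equations are collected and the duplicated ones (such as the symmetric pair producing the fourth family) are recognized, the statement follows; the resulting count of equations is then combined with that of Lemma \ref{lem:deftau_W} in the proof of Proposition \ref{prop:components_of_tau}.
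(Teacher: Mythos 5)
Your strategy coincides with the paper's: fix $p$, work in normal geodesic coordinates and the standard basis, obtain the explicit form of $\widetilde e$ at $p$, expand $[X,\widetilde e]=0$ component by component against the bases of $\Omega^{N-2}_\partial$ and $\bigwedge^{N-2}\mathcal V_\Sigma$, and sort the equations by whether the form indices contain the degenerate direction $N-1$ and by which of the internal indices $N$, $N-1$ occur; the four families then emerge exactly as you describe. The gap is in the step you present as immediate: the claim that $\widetilde e_\mu=e_\mu$ for $\mu\neq N-1$ at $p$ does \emph{not} follow from the two conditions you invoke. From $\iota_X g^\partial=0$ one gets $X=\partial_{N-1}$, and $\iota_X\beta=1$ only fixes $\beta_{N-1}=1$, leaving $\beta_1,\dots,\beta_{N-2}$ undetermined; hence all you may conclude at this stage is $\widetilde e_{N-1}=0$ and $\widetilde e_\mu=e_\mu-\beta_\mu e_{N-1}$ for $\mu\le N-2$. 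What kills the $\beta_\mu$'s is the normalization built into the definition of $\widetilde e$, namely the requirement $\iota_{Y_0^1}\cdots\iota_{Y_0^{N-2}}(\widetilde e\wedge e^{N-4}\wedge v)=0$ for all $v$ with $e^{N-3}\wedge v=0$, which you never use. Establishing $\beta_1=\dots=\beta_{N-2}=0$ from it is a genuine computation (roughly half of the paper's proof): one expands $v\,e^{N-4}\widetilde e$ in components and exploits the explicit equations \eqref{e:eq_for_comp_of_v} characterizing $\mathrm{Ker}\,W_{N-3}^{\partial,(1,2)}$ from the proof of Lemma \ref{lem:We_boundary}.

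The omission is not harmless bookkeeping. If some $\beta_\mu\neq0$, then in the bracket one has $\eta_{ab}\widetilde e^{\,b}_\mu=\eta_{a\mu}-\beta_\mu\eta_{a\,N-1}$, and with your choice of basis $\eta_{a\,N-1}$ is nonzero precisely for $a=N$; consequently every component equation in which the summed internal index can take the value $N$ picks up extra $\beta$-dependent terms, and the second and fourth families would not come out in the stated form. Note, conversely, that once $\beta_\mu=0$ is proved, the pairing $\eta_{N-1,N}$ never enters the computation at all, since the contracted form index only runs over $1,\dots,N-2$; so your closing concern about carrying $\eta_{N-1,N}$ consistently through the computation is itself a symptom of working with the unnormalized $\widetilde e$. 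To repair the argument, insert the computation fixing $\beta_\mu=0$ before expanding the kernel equations; the remainder of your case analysis then goes through as planned.
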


\begin{proof}
    
	In normal geodesic coordinates the boundary metric $g^\partial$ at the base point is diagonal, and we can assume that its eigenvalues $\alpha_i$ are such that $\alpha_a= 1$ for $1\leq a \leq N-2$ and $\alpha_{N-1}=0$. Since $X$ is such that  $\iota_X g^\partial=0$ we get $X = \partial_{N-1}$. 
	Let now be $\beta= \sum_{i=1}^{N-1} \beta_i dx^i $ a generic one form. From the equation $\iota_X \beta=1 $ we get $\beta_{N-1}=1$.  Hence $\widehat{e}= \sum_{i=1}^{N-1} \beta_i dx^i e_{N-1}$ and $\widetilde{e}= \sum_{i=1}^{N-2}(e_{i}   + \beta_i e_{N-1})dx^{i}$. 
	We now impose the last condition to find an explicit expression for $\widetilde{e}$ in the standard basis.
	
	Using these coordinates, since $X = \partial_{N-1}$ we can take $Y^i_0$ as $Y^i_0=\partial_i$.
	Let now $v \in \Omega_{\partial}^{1,2}$ such that $e^{N-3}v=0$ i.e. its components must satisfy \eqref{e:eq_for_comp_of_v}. Using the same techniques as in the proof of Lemma \hyperref[lem:kernel12-21]{\ref*{lem:We_boundary}.(\ref*{lem:kernel12-21})}, we get 
	\begin{align*}
        v e^{N-4} \widetilde{e} & = X_{\mu_1}^{ab} e_a e_b \wedge e_{\mu_2} \wedge \dots \wedge e_{\mu_{N-3}} \widetilde{e}_{\mu_{N-2}}^c e_c dx^{\mu_1}dx^{\mu_2}\dots dx^{\mu_{N-2}}\\
        & = X_{\mu_1}^{ab} e_a e_b \wedge e_{\mu_2} \wedge \dots \wedge e_{\mu_{N-3}} e_{\mu_{N-2}} dx^{\mu_1}dx^{\mu_2}\dots dx^{\mu_{N-2}}\\
        & \quad - X_{\mu_1}^{ab} e_a e_b \wedge e_{\mu_2} \wedge \dots \wedge e_{\mu_{N-3}} \beta_{\mu_{N-2}}e_{N-1} dx^{\mu_1}dx^{\mu_2}\dots dx^{\mu_{N-2}}
    \end{align*}
	where in the second and third line $\mu_{N-2}$ cannot take the value $N-1$. Equating this quantity to zero, we get the following equations for the components: 
	\begin{align*}
        \sum_{\sigma} X_{\mu_{\sigma(1)}}^{ab} e_a e_b  e_{\mu_{\sigma(2)}}  \dots  e_{\mu_{\sigma(N-2)}}- X_{\mu_{\sigma(1)}}^{ab} e_a e_b  e_{\mu_{\sigma(2)}}  \dots  e_{\mu_{\sigma(N-3)}} e_{N-1} \beta_{\mu_{\sigma(N-2)}}=0
    \end{align*}
    where $\mu_{\sigma(N-2)} \neq N-1$. Now, letting
    $\{\mu_1 \dots \mu_{N-2}\}=\{1 \dots N-2\}$ we get
    \begin{align*}
         & \sum_{i\neq j,N-1,N} \left( X_{j}^{N-1 N} - X_{j}^{iN} \beta_i + X_{i}^{iN} \beta_j\right)=0 \qquad j = 1 \dots N-2\\
         & \sum_{i\neq N-1,N} X_{i}^{i N} =0 \\
         & \sum_{i\neq N-1,N} X_{i}^{i N-1}- \sum_{i,j\neq N-1,N} X_{i}^{i j}\beta_j =0.
    \end{align*}
    
    Using the properties \eqref{e:eq_for_comp_of_v}, we can deduce from the very first equation that $\beta_i=0$ for $i=1 \dots N-2$. Plugging this result into the others we do not get any further condition, as all the quantities vanish automatically. 
	We deduce that, with this choice of the coordinates $x_i$, $\widetilde{e}= \sum_{i=1}^{N-2}e_{i}dx^i$.

	Now, using the same procedure as in Lemma \ref{lem:varrho12_deg} we obtain the following equations defining the kernel of $\tilde{\rho}$:
    \begin{align*}
        [\tau, e]_{\mu_1 \dots \mu_{N-2}}^{\nu_1 \dots \nu_{N-2}}= \sum_{\sigma_{N-2}} \tau_{\mu_{\sigma(1)}\dots \mu_{\sigma(N-3)}}^{\nu_1 \dots \nu_{N-2}\mu_{\sigma(N-2)}} \alpha_{\mu_{\sigma(N-2)}} = 0 
    \end{align*}
    where  $ 1\leq \mu_a \leq N-1$, $ 1\leq \nu_a \leq N$, $\alpha_a=1$ for $1\leq a \leq N-2$, $\alpha_{N-1}=0$ and $ \sigma_{N-2}$ represents the permutation of $N-2$ elements. Using the properties of the $\alpha$s we get
    \begin{subequations}\label{e:varrho_Nk_kernel}
    \begin{align}
        \sum_{\sigma_{N-3}} \tau_{N-1 i_{\sigma(1)}\dots i_{\sigma(N-4)}}^{\nu_1 \dots \nu_{N-2}i_{\sigma(N-3)}} = 0 &
            \quad  1 \leq i_a \leq N-2 \\
        \sum_{\sigma_{N-2}} \tau_{i_{\sigma(1)}\dots i_{\sigma(N-3)}}^{\nu_1 \dots \nu_{N-2}i_{\sigma(N-2)}} = 0 &
            \quad 1 \leq i_a \leq N-2
    \end{align}
    \end{subequations}
 
    for $1 \leq \nu_a \leq N$. Let us consider the first set of equations. If $\{\nu_1, \dots ,\nu_{N-2} \} \supset \{i_1, \dots ,i_{N-3} \} $ no term survives and we do not get equations. Let now $n$ be an index in $\{i_1, \dots ,i_{N-3} \}$ but not in $\{\nu_1, \dots ,\nu_{N-2} \}$: then only one term survives and we have the following equations:
    \begin{align*}
        \tau_{N-1 i_{1}\dots i_{N-4}}^{\nu_1 \dots \nu_{N-2}n} = 0 
    \end{align*}
    where $1 \leq i_a, n \leq N-2$ and  $\{\nu_1, \dots ,\nu_{N-2} \} \supset \{i_1, \dots ,i_{N-4} \} $. The only other case that is left is when there are two indices $n_1, n_2$ in $\{i_1, \dots ,i_{N-3} \}$ but not in $\{\nu_1, \dots ,\nu_{N-2} \}$: here two terms of the sum are surviving and we get:
    \begin{align*}
        \tau_{N-1  i_{1}\dots i_{N-5}n_1}^{N N-1 i_{1}\dots i_{N-5} n_3 n_2} +
        \tau_{N-1  i_{1}\dots i_{N-5}n_2}^{N N-1 i_{1}\dots i_{N-5} n_3 n_1} = 0 
    \end{align*}
    where $1 \leq i_a, n \leq N-2$ and $n_3$ is the only index left different from all the others. Because of the arbitrariness of $n_1, n_2, n_3$, this set of equations will contain also the ones corresponding to permutations of them:  
    \begin{align*}
       & \tau_{N-1  i_{1}\dots i_{N-5}n_1}^{N N-1 i_{1}\dots i_{N-5} n_2 n_3} +
        \tau_{N-1  i_{1}\dots i_{N-5}n_3}^{N N-1  i_{1}\dots i_{N-5} n_2 n_1} = 0 \\
         & \tau_{N-1  i_{1}\dots i_{N-5}n_2}^{N N-1  i_{1}\dots i_{N-5} n_1 n_3} +
        \tau_{N-1  i_{1}\dots i_{N-5}n_3}^{N N-1  i_{1}\dots i_{N-5} n_1 n_2} = 0 .
    \end{align*}
    Composing these three equations we get that
    \begin{align*}
        \tau_{N-1 i_{1}\dots i_{N-5}n_1}^{N N-1  i_{1}\dots i_{N-5} n_3 n_2}=0.
    \end{align*}
    Together with the first case this proves the first set of equations in the statement.
    We proceed in the same way for the second set in \eqref{e:varrho_Nk_kernel}. If $\{\nu_1, \dots ,\nu_{N-2} \} \supset \{i_1, \dots ,i_{N-3} \} $ no term survives and we do not get equations. Let now $n$ be an index in $\{i_1, \dots ,i_{N-3} \}$ but not in $\{\nu_1, \dots ,\nu_{N-2} \}$. We get
    \begin{align*}
        X_{ i_1 \dots i_{N-3}}^{N i_1 \dots i_{N-3}  n}=0 &
            \quad  1 \leq i_a \leq N-2 \\
        X_{ i_1 \dots i_{N-3}}^{N-1 i_1 \dots i_{N-3}  n}=0 &
            \quad  1 \leq i_a \leq N-2    
    \end{align*}
    which are respectively the second and the third set of equations in the statement. When there are two indices $n_1, n_2$ in $\{i_1, \dots ,i_{N-3} \}$ but not in $\{\nu_1, \dots ,\nu_{N-2} \}$ we get the fourth set of equations:
    \begin{align*}
        X_{ i_1\dots i_{N-4} n_1}^{N N-1 i_1\dots i_{N-4} n_2} + & X_{ i_1\dots i_{N-4} n_2}^{N N-1 i_1\dots i_{N-4} n_1} =0  \quad    \quad  1 \leq i_a \leq N-2.
    \end{align*}

\end{proof}

\begin{lemma}\label{lem:tau_rho_neigh} 
    Let $p \in \Sigma$ and $U$ an open neighbourhood of $p$. Then, in the standard basis of $\mathcal{V}_{\Sigma}$, the equations defining the space $\mathrm{Ker} \tilde{\rho} |_{\Omega^{N-3,N-1}_{\partial}} \subset \Omega^{N-3,N-1}_{\partial}$ are 
    \begin{align*}
        X_{N-1 i_{1}\dots i_{n-4}}^{\mu_1 \dots \mu_{N-1}}=0 & 
            \quad 1 \leq \mu_a \leq N, \; 1 \leq i_a \leq N-2 \\
        X_{ i_1 \dots i_{N-3}}^{N i_1 \dots i_{N-2}}=0 &
            \quad  1 \leq i_a \leq N-2 \\
        X_{ i_1 \dots i_{N-3}}^{N-1 i_1 \dots i_{N-2} }=0 &
            \quad  1 \leq i_a \leq N-2 \\
        X_{i_1\dots i_{N-4} i_{N-3}}^{N N-1 i_1\dots i_{N-4}  i_{N-2}} = & f(\widetilde{g}^{\partial}, X_{i_1\dots i_{N-4} i_{N-2}}^{N i_1\dots i_{N-4} N-1 i_{N-3}}, X_{i_1\dots i_{N-3}}^{N N-1 i_1\dots i_{N-3}}) 
    \end{align*}
    for some function $f$.
\end{lemma}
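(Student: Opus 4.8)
The statement is the neighbourhood refinement of Lemma \ref{lem:deftau_rho}: the only difference is that we no longer sit exactly at the base point $p$, where normal coordinates make $g^{\partial}$ (and hence the nondegenerate block of $\widetilde{g}^{\partial}$) diagonal, but work on the whole chart $U$, where $\widetilde{g}^{\partial}_{ij}$ ($1\le i,j\le N-2$) is a general symmetric matrix. The plan is therefore to rerun the computation of the kernel of $\widetilde{\varrho}^{(N-3,N-1)}$ carried out in Lemma \ref{lem:deftau_rho}, but retaining the off-diagonal entries of $\widetilde{g}^{\partial}$ instead of setting them to $\delta_{ij}$. Concretely, I would expand the defining equation $[\tau,\widetilde{e}]=0$ in the standard basis exactly as in \eqref{e:varrho_Nk_kernel}, writing out the contraction $\tau^{\nu_1\dots\nu_{N-2}\,b}_{\mu_{\sigma(1)}\dots\mu_{\sigma(N-3)}}\,\eta_{bc}\,\widetilde{e}^{c}_{\mu_{\sigma(N-2)}}$ with the full coframe $\widetilde{e}$.

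First I would record the one structural fact that survives unchanged on all of $U$: by construction $\widetilde{e}=e-\widehat{e}$ with $\widehat{e}=\beta\,\iota_X e$, so $\widetilde{g}^{\partial}=\eta(\widetilde{e},\widetilde{e})$ is degenerate precisely along the null direction $X=\partial_{N-1}$ at every point of $U$; equivalently $\widetilde{e}_{N-1}=0$ and $\widetilde{g}^{\partial}_{N-1,\mu}=0$ throughout $U$. This null structure is exactly what produced the first three sets of equations in Lemma \ref{lem:deftau_rho}: the vanishing of every component carrying a lower index $N-1$, and the two conditions $X^{N i_1\dots i_{N-2}}_{i_1\dots i_{N-3}}=0$ and $X^{N-1 i_1\dots i_{N-2}}_{i_1\dots i_{N-3}}=0$. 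Since these follow only from the term with $\mu_{\sigma(N-2)}=N-1$ dropping out of the contraction (because $\widetilde{e}_{N-1}=0$) together with the ``one missing index'' count, and never from the diagonality of the transversal block, I expect them to reappear verbatim on $U$; the first three displayed equations are therefore unchanged.

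The remaining work, and the only place where the off-diagonal metric enters, is the ``two missing indices'' case $n_1,n_2\notin\{\nu_1,\dots,\nu_{N-2}\}$. At the base point the diagonal metric collapsed the contraction to the clean antisymmetry relation $X^{\dots i_{N-3}}_{\dots i_{N-2}}+X^{\dots i_{N-2}}_{\dots i_{N-3}}=0$; on $U$ the factor $\alpha_n\delta_{bn}$ is replaced by $\widetilde{g}^{\partial}_{bn}$, so the two surviving terms are contracted against the full nondegenerate block and couple the component $X^{N N-1\,i_1\dots i_{N-4} i_{N-2}}_{i_1\dots i_{N-4} i_{N-3}}$ to the others through $\widetilde{g}^{\partial}$. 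I would then solve the resulting finite linear system for the dependent component, expressing it as a linear function $f(\widetilde{g}^{\partial},\dots)$ of $\widetilde{g}^{\partial}$ and of the free components $X^{N i_1\dots i_{N-4} N-1 i_{N-3}}_{i_1\dots i_{N-4} i_{N-2}}$ and $X^{N N-1\,i_1\dots i_{N-3}}_{i_1\dots i_{N-3}}$, which yields the fourth displayed equation. As a consistency check I would specialise to $\widetilde{g}^{\partial}=\mathrm{Id}$ and verify that $f$ reduces to $-X^{\dots i_{N-2}}_{\dots i_{N-3}}$, recovering Lemma \ref{lem:deftau_rho}.

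The main obstacle I anticipate is purely combinatorial bookkeeping: tracking how the off-diagonal entries $\widetilde{g}^{\partial}_{ij}$ redistribute the last internal index $b$ of $\tau$ among several components at once, and then checking that the linear system obtained in the two-missing-index case is consistently solvable for exactly the dependent component, so that $f$ is well defined and single valued. One must also make sure that the off-diagonal terms do not silently perturb the first three equation sets; this requires confirming that in those cases a single term of the permutation sum survives already before diagonalising, so that the metric enters only as an overall nonzero factor and the vanishing conclusions are unaffected.
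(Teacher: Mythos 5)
Your overall skeleton (re-expand $[\tau,\widetilde e]=0$ in the standard basis, keep the full matrix $\widetilde g^{\partial}$, run the same case analysis as in Lemma \ref{lem:deftau_rho}, and solve a finite linear system in the two-missing-index case) coincides with the paper's proof, and your treatment of the fourth set of equations is essentially the paper's argument. The gap is in how you dispose of the first three sets. You claim they reappear verbatim because (i) $X=\partial_{N-1}$, $\widetilde e_{N-1}=0$ and $\widetilde g^{\partial}_{N-1,\mu}=0$ hold at every point of $U$, and (ii) those equations were never a consequence of the diagonality of the nondegenerate block, so that a single term of the permutation sum survives with the metric entering only as an overall nonzero factor. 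Both claims are false. First, the normal coordinates diagonalize $g^{\partial}$ only at the base point $p$: away from $p$ the kernel vector field $X$ of $g^{\partial}$ is in general not $\partial_{N-1}$, and what survives on all of $U$ is only $\iota_X\widetilde e=0$, not $\widetilde e(\partial_{N-1})=0$. Second, even at $p$ the collapse of the contraction $\sum_\mu \tau^{\dots\mu}\,\widetilde g^{\partial}_{\mu j}$ to a single term is precisely an effect of diagonality; once $\widetilde g^{\partial}$ has off-diagonal entries, every kernel equation in \eqref{e:tilderho_neigh} couples several components of $\tau$ at once, in the one-missing-index cases just as in the two-missing-index case. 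Your proposal is even internally inconsistent on this point: if single terms really survived, the fourth relation would remain the bare antisymmetry with no $\widetilde g^{\partial}$-dependence, contradicting the very statement you are proving and Corollary \ref{cor:components_of_tau_N=4}, where the off-diagonal entry ${g_0}_{12}$ enters explicitly.

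What is needed instead --- and what the paper does --- is a linear-algebra argument on the coupled system: for each of the relevant families of index choices one writes the overdetermined homogeneous linear system satisfied by the components of $\tau$, and observes that its coefficient matrix, built from entries of $\widetilde g^{\partial}$, contains a square submatrix whose determinant equals $\det\widetilde g^{\partial}$; since the diagonal entries stay nonzero on $U$ by continuity (shrinking $U$ if necessary) and $\widetilde g^{\partial}$ is nondegenerate by construction, this determinant is nonzero and the only solution is the zero one. Without this step, or an equivalent replacement, your derivation of the first three displayed equations does not go through: the ``check'' you defer at the end of your proposal would in fact fail.
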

\begin{proof}
    Using the standard basis of $\mathcal{V}_{\Sigma}$, we obtain the following equations for the kernel of $\tilde{\rho}$:
    \begin{align}\label{e:tilderho_neigh}
        [\tau, \tilde{e}]_{j_1 \dots j_{N-2}}^{i_1 \dots i_{N-2}}= \sum_{\sigma, \mu} \tau_{j_{\sigma(1)}\dots j_{\sigma(N-3)}}^{i_1 \dots i_{N-2}\mu} \widetilde{g}^{\partial}_{\mu j_{\sigma(N-2)}}=0
    \end{align}
    where $\sigma$ runs over the permutations of order $N-2$ and $\mu=1 \dots N-2$, $i_k \in \{1\dots N-1\}$, $j_k \in \{1\dots N\}$. Using normal geodesic coordinates, $\widetilde{g}^{\partial}$ is diagonal in the point $p$, with diagonal entries different from zero. Hence using continuity, in the whole neighbourhood $U$ (eventually shrinking it if necessary) the diagonal component will be non-zero. Furthermore, $\det\widetilde{g}^{\partial}\neq 0$, since  $\widetilde{g}^{\partial}$ is non-degenerate by construction. \\
    We first analyse the case when $N-1 \in \{i_1, \dots, i_{N-2}\}$ and prove the first set of equations in the statement. Expanding the equations \eqref{e:tilderho_neigh} in all possible choices of indexes, one finds a overdetermined system of equations, and expressing it in its matricial form, it is always possible to find a square submatrix whose determinant is equal to $\det\widetilde{g}^{\partial}\neq 0$. This implies that all the variables must be zero.\\
    Let now $N-1 \notin \{i_1, \dots, i_{N-2}\}$. If $N, N-1 \notin \{j_1, \dots, j_{N-2}\}$ no equations are generated. Let then $N \in \{j_1, \dots, j_{N-2}\}$ or $N-1 \in \{j_1, \dots, j_{N-2}\}$ but not $N, N-1 \in \{j_1, \dots, j_{N-2}\}$. We proceed as in the previous case and obtain a system of equations whose only solution is the zero one. Hence we deduce the second and the third set of equations in the statement. Let now $N, N-1 \in \{j_1, \dots, j_{N-2}\}$. Expanding equations \eqref{e:tilderho_neigh} we get
    \begin{align*}
        [\tau, \tilde{e}]^{N N-1 \mu_3 \dots \mu_{N-2}}_{\mu_1 \mu_2 \mu_3 \dots \mu_{N-2}}= \sum_{\sigma} & \tau_{\mu_{\sigma(1)}\dots \mu_{\sigma(N-3)}}^{N N-1 \mu_3 \dots \mu_{N-2}\mu_1} \widetilde{g}^{\partial}_{\mu_1 \mu_{\sigma(N-2)}} \\
        + & \tau_{\mu_{\sigma(1)}\dots \mu_{\sigma(N-3)}}^{N N-1 \mu_3 \dots \mu_{N-2}\mu_2} \widetilde{g}^{\partial}_{\mu_2 \mu_{\sigma(N-2)}}=0.
    \end{align*}
     Inverting some of the equation exploiting the properties of $\widetilde{g}^{\partial}$ we can express the components $\tau_{\mu_3\dots \mu_{(N-3)}\mu_2}^{N N-1 \mu_3 \dots \mu_{N-2}\mu_1}$ with $\mu_1 < \mu_2$ in function of the components of $\widetilde{g}^{\partial}$, 
    $\tau_{\mu_3\dots \mu_{(N-3)}\mu_1}^{N N-1 \mu_3 \dots \mu_{N-2}\mu_2}$ (with $\mu_1 < \mu_2$) and $\tau_{\mu_2\dots \mu_{(N-3)}}^{N N-1 \mu_2 \dots \mu_{N-3}}$.
    
\end{proof}

\begin{proof}[Proof of Lemma \ref{lem:relationSandT}]
    From the proof of Lemma \hyperref[lem:kernel12-21]{\ref*{lem:We_boundary}.(\ref*{lem:kernel12-21}}), the free components of an element in $\mathcal{T}$ are:
        \begin{align*}
            & X^{i_1}_{N-1 i_2} \quad  1 \leq i_1, i_2 \leq N-2, \;  i_1 \neq i_2\\
            & X^{i}_{i N-1} \quad   1 \leq i \leq N-2 
        \end{align*} 
    such that 
        \begin{align*}
             X^{i_1}_{j i_2} = - X^{i_2}_{j i_1}; \quad 
             \sum_{\mu=1}^{N-1} X^{\mu}_{\mu j} =0.
        \end{align*}
    From Proposition \ref{prop:components_of_tau} the free components of an element $\tau \in \mathcal{S}$ are $Y_{\mu}$ and $X_{\mu_1}^{\mu_2}$ satisfying
\begin{align*}
    \sum_{\mu=1}^{N-2} Y_{\mu} =0 \text{ and } 
    X_{\mu_1}^{\mu_2} =- X_{\mu_2}^{\mu_1}
\end{align*}
for $\mu_1, \mu_2= 1 \dots N-2$.
Let us now consider some particular choices of $\tau$. First we consider $\tau$ such that the only nonzero components are $\tau_{\mu_1}^{\mu_2}=-\tau_{\mu_2}^{\mu_1}$ for some particular $\mu_1$ and $\mu_2$. Then
\begin{align*}
    \int_{\Sigma} \tau \alpha & =  \int_{\Sigma}( X_{\mu_1}^{\mu_2} \alpha^{\mu_1}_{N-1 \mu_2}
    + X_{\mu_2}^{\mu_1} \alpha^{\mu_2}_{N-1 \mu_1})\mathbf{V} =  \int_{\Sigma} X_{\mu_1}^{\mu_2} (\alpha^{\mu_1}_{N-1 \mu_2} - \alpha^{\mu_2}_{N-1 \mu_1})\mathbf{V} =0.
\end{align*}
Hence we deduce that $\alpha^{\mu_1}_{N-1 \mu_2} - \alpha^{\mu_2}_{N-1 \mu_1}=0$. Furthermore the components of $p_{\mathcal{T}}(\alpha)$ satisfy $p_{\mathcal{T}}\alpha^{\mu_1}_{N-1 \mu_2} + p_{\mathcal{T}}\alpha^{\mu_2}_{N-1 \mu_1}=0$. Hence we conclude $p_{\mathcal{T}}\alpha^{\mu_1}_{N-1 \mu_2}=0$ for all $\mu_1$ and $\mu_2$. 
Now consider $\tau$ such that the only nonzero components are $Y_\mu$. 
Hence now 
\begin{align*}
    \int_{\Sigma} \tau \alpha & =  \int_{\Sigma} \sum_{\mu=1}^{N-2} Y_{\mu} \alpha^{\mu}_{N-1 \mu}\mathbf{V}
     =  \int_{\Sigma} \sum_{\mu=1}^{N-3} Y_{\mu} (\alpha^{\mu}_{N-1 \mu} - \alpha^{N-2}_{N-1 N-2})\mathbf{V}=0.
\end{align*}
By the arbitrariness of $\tau$ we deduce that  $\alpha^{\mu}_{N-1 \mu} - \alpha^{N-2}_{N-1 N-2}=0$ for each $\mu=1,\dots N-3$. Furthermore the components of $p_{\mathcal{T}}(\alpha)$ satisfy $\sum_{\mu=1}^{N-2} p_{\mathcal{T}}(\alpha)^{\mu}_{N-1 \mu}=0$. Hence we deduce that $ p_{\mathcal{T}}(\alpha)^{\mu}_{N-1 \mu}=0$ for all $\mu$. This proves the claim.
\end{proof}

\begin{proof}[Proof of Lemma \ref{lem:[tau,e]inImW}]
Let $\tau \in \mathcal{S}$. Then we want to prove that 
$[\tau, e] \in \Ima W_{N-3}^{\partial, (1,1)}$.
Using the results of Lemma \ref{lem:We_boundary}, we know that the free components of $\Ima W_{N-3}^{\partial, (1,1)}$ are 
\begin{align*}
    X_{\mu_1 \dots \mu_{N-2}}^{\mu_1 \dots \mu_{N-2}}, \; X_{\mu_1 \dots \mu_{N-3}\mu_{N-2}}^{\mu_1 \dots \mu_{N-3}\mu_{N-1}} \; \text{and} \; X_{\mu_1 \dots \mu_{N-3}N}^{\mu_1 \dots \mu_{N-3}\mu_{N-2}}
\end{align*}
such that $X_{\mu_1 \dots \mu_{N-3}N}^{\mu_1 \dots \mu_{N-3}\mu_{N-2}}=X_{\mu'_1 \dots \mu'_{N-3}N}^{\mu'_1 \dots \mu'_{N-3}\mu_{N-2}}$.

From Proposition \ref{prop:components_of_tau} we deduce that the free components of $\tau \in \mathcal{S}$ are 
\begin{align*}
    \tau^{NN-1 \mu_1 \dots \mu_{N-3}}_{\mu_1 \dots \mu_{N-3}} \; \text{and} \; \tau^{N N-1 \mu_1 \dots \mu_{N-4} \mu_{N-2}}_{\mu_1 \dots \mu_{N-4}\mu_{N-3}}
\end{align*}
such that
\begin{align*}
    \sum_{\mu_i=1}^{N-2}\tau^{NN-1 \mu_1 \dots \mu_{N-3}}_{\mu_1 \dots \mu_{N-3}}=0, \\
    \tau^{N N-1 \mu_1 \dots \mu_{N-4} \mu_{N-2}}_{\mu_1 \dots \mu_{N-4}\mu_{N-3}}+ \tau^{N N-1 \mu_1 \dots \mu_{N-4} \mu_{N-3}}_{\mu_1 \dots \mu_{N-4}\mu_{N-2}}=0.
\end{align*}
Recalling that $[\tau, \tilde{e}]=0$, we deduce that $[\tau, e]$ has components \footnote{We use here the same trick of footnote \ref{foot:brackets} but since $\tau$ can have components in the direction $N-1$ in the standard basis, the metric is the one of the bulk and not the one of the boundary, In particular, since we diagonalized the metric on the boundary we can choose coordinates on the bulk such that $g$ has the form
\begin{align*}
    g = \left( 
    \begin{array}{ccccc}
         \pm 1 & \hdots & 0 & 0 & 0 \\
         \vdots & \ddots & \vdots & \vdots & \vdots \\
         0 & \hdots & \pm 1 & 0 & 0 \\
         0 & \hdots & 0 & 0 & 1 \\
         0 & \hdots & 0 & 1 & 0 \\
    \end{array}
    \right).
\end{align*}}
\begin{align*}
    [\tau, e]_{\mu_1 \dots \mu_{N-3} N-1}^{\nu_1 \dots \nu_{N-2}}= \tau_{\mu_1 \dots \mu_{N-3}}^{\nu_1 \dots \nu_{N-2} N}.
\end{align*}
Plugging into this expression the free components of $\tau$ we get the free components of $[\tau, e]$:
\begin{align*}
    [\tau, e]^{N-1 \mu_1 \dots \mu_{N-3}}_{N-1 \mu_1 \dots \mu_{N-3}} \; \text{and} \; [\tau, e]^{N-1 \mu_1 \dots \mu_{N-4} \mu_{N-2}}_{N-1 \mu_1 \dots \mu_{N-4}\mu_{N-3}}
\end{align*}
such that
\begin{align*}
    \sum_{\mu_i=1}^{N-2}[\tau, e]^{N-1 \mu_1 \dots \mu_{N-3}}_{N-1 \mu_1 \dots \mu_{N-3}}=0, \\
    [\tau, e]^{N-1 \mu_1 \dots \mu_{N-4} \mu_{N-2}}_{N-1 \mu_1 \dots \mu_{N-4}\mu_{N-3}}+ 
    [\tau, e]^{N-1 \mu_1 \dots \mu_{N-4} \mu_{N-3}}_{N-1 \mu_1 \dots \mu_{N-4}\mu_{N-2}}=0.
\end{align*}
It is straightforward to  check that these components are in the image of $ W_{N-3}^{\partial, (1,1)}$.

\end{proof}

\begin{proof}[Proof of Corollary \ref{cor:components_of_W-1[tau,e]}]
Using the standard basis we have that 
\begin{align*}
    (X \wedge e^{N-3})_{i \mu_1 \dots \mu_{N-3}}^{j \mu_1 \dots \mu_{N-3}} &= X_{i}^{j} \text{ for } i \neq j \\
    (X \wedge e^{N-3})_{ \mu_1 \dots \mu_{N-2}}^{\mu_1 \dots \mu_{N-2}} &= \sum_{\mu} X_{\mu}^{\mu}  \text{ with } \mu \in \{\mu_1 \dots \mu_{N-2}\}.
\end{align*}
Comparing these expressions with the ones in the proof Lemma \ref{lem:[tau,e]inImW} we deduce that $[W_{N-3}^{-1}([\tau,e])]_{\mu_1}^{\mu_2}  \propto X_{\mu_1}^{\mu_2}$ and that
\begin{align*}
    \sum_{\mu=1, \mu \neq \nu}^{N-1} [W_{N-3}^{-1}([\tau,e])]_{\mu}^{\mu} = Y_{\nu}.
\end{align*}
Summing for $\nu=1 \dots N-1$ and remembering that $Y_{N-1}=0$ and that $\sum_\nu Y_\nu =0$ we deduce the claim.
\end{proof}

\begin{lemma}\label{lem:vanishofBD-1BT}
    Let $D$ be an invertible matrix such that the inverse does not contain derivatives and let $B$ some matrix proportional to an odd parameter $\lambda$ and not containing derivatives. Then $BD^{-1}B^T=0$.
\end{lemma}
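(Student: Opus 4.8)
The plan is to reduce the statement to a single mechanism: $\lambda$ is an odd (Grassmann) parameter, so $\lambda^{2}=0$, combined with the observation that the hypotheses on $B$ and $D^{-1}$ make the composition $BD^{-1}B^{T}$ an entirely \emph{local} (pointwise) operation. First I would record the structural content of the hypothesis that $B$ is proportional to $\lambda$: each entry can be written as $B_{ik}=\lambda\,\widehat{B}_{ik}$ with $\widehat{B}_{ik}$ a $\lambda$-independent quantity, and since $B$ carries no derivatives it acts by pointwise multiplication rather than as a differential or integral operator.

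Next I would use that $D^{-1}$ likewise contains no derivatives, so it too acts by pointwise multiplication. Consequently the matrix product $BD^{-1}B^{T}$ is again a multiplication operator, whose value at a point $u\in\Sigma$ is the ordinary product of the values of the factors at that same point:
\begin{align*}
(BD^{-1}B^{T})_{ij}(u)=\sum_{k,l}B_{ik}(u)\,(D^{-1})_{kl}(u)\,B_{jl}(u).
\end{align*}
The crucial point is that, because no derivative nor nonlocal inverse-derivative kernel intervenes, the two factors $B_{ik}(u)$ and $B_{jl}(u)$ are evaluated at the \emph{same} argument $u$, and each carries a factor $\lambda(u)$.

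Then I would collect the two copies of $\lambda$. Writing $B_{ik}(u)=\lambda(u)\widehat{B}_{ik}(u)$ and $B_{jl}(u)=\lambda(u)\widehat{B}_{jl}(u)$, each summand equals $\lambda(u)\,\widehat{B}_{ik}(u)(D^{-1})_{kl}(u)\,\lambda(u)\,\widehat{B}_{jl}(u)$; moving the intervening $\lambda$-free factors past $\lambda(u)$ — which is permissible up to a sign — produces a factor $\lambda(u)^{2}=0$. Hence every entry vanishes and $BD^{-1}B^{T}=0$.

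The main obstacle, and the reason the two hypotheses are phrased as they are, is precisely this locality: the argument succeeds only by forcing the two factors of $\lambda$ to share a single argument so that $\lambda^{2}=0$ may be invoked. Were $D^{-1}$ (or $B$) to contain a derivative or a nonlocal $\partial^{-1}$-type kernel, the two factors would generically be evaluated at distinct points $u\ne u'$, and $\lambda(u)\lambda(u')$ need not vanish; this is exactly what allows the derivative-carrying combination $B'^{T}D^{-1}B'$ to survive in the proof of Corollary \ref{cor:constraints-first-second_class}. I would therefore be careful to pin down the ``no derivatives'' condition at the one place where it is genuinely needed, namely the step that collapses the two $\lambda$-factors to a common point.
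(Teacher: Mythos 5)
Your proof is correct and takes essentially the same approach as the paper: since $B$ and $D^{-1}$ contain no derivatives, the product $BD^{-1}B^{T}$ is a pointwise expression quadratic in the odd parameter, so every entry carries a factor $\lambda(u)^{2}=0$. The paper's proof states exactly this mechanism, only more tersely; your elaboration of why locality forces the two $\lambda$-factors to share a common argument is the same idea made explicit.
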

\begin{proof}
The key point of the proof is that every term containing $\lambda^2$ vanishes since $\lambda$ is an odd quantity. Now, by hypothesis every term in $B D^{-1} B^T$ does not contain derivatives and since this expression is quadratic in $\lambda$ vanishes because of the previous consideration.
\end{proof}

\printbibliography
\end{document}